\newcommand{\mR}{\mathbb{R}}
\newcommand{\E}{\mathbb{E}}
\newcommand{\PP}{{P}}
\newcommand{\OP}{\mathcal{O}_{P}}
\newcommand{\oP}{o_{P}}
\newcommand{\dell}{\dot{\ell}}
\newcommand{\ddell}{\ddot{\ell}}
\newcommand{\hbeta}{\widehat{\beta}}
\newcommand{\hTheta}{\widehat{\Theta}}
\newcommand{\hSigma}{\widehat{\Sigma} }
\newcommand{\Thetabeta}{\Theta_{\beta^0}}
\newcommand{\hmu}{\widehat{\mu}}
\newcommand{\heta}{\widehat{\eta}}
\newtheorem{theorem}{Theorem}
\newtheorem{corollary}[theorem]{Corollary}
\newtheorem{lemma}[theorem]{Lemma}
\begin{document}

\title{\bf Statistical Inference for  Cox Proportional Hazards Models with a Diverging Number of Covariates}

\author{Lu Xia\textsuperscript{a}, Bin Nan\textsuperscript{b}\footnotemark[1], and Yi Li\textsuperscript{c}\footnotemark[1]\\
			\small 
			\textsuperscript{a}Department of Biostatistics, University of Washington, Seattle, WA, {xialu@uw.edu}  \\
			\small
			\textsuperscript{b}Department of Statistics, University of Californina, Irvine, CA, {nanb@uci.edu} \\
			\small
			\textsuperscript{c}Department of Biostatistics, University of Michigan, Ann Arbor, MI, {yili@umich.edu}
		}

\date{}

\maketitle

\footnotetext[1]{To whom correspondence should be addressed.}

\begin{abstract}
For statistical inference on regression models with a diverging number of covariates, the existing literature typically makes  sparsity assumptions on the inverse of the Fisher information matrix. Such  assumptions, however, are often violated under  Cox proportion hazards models, leading to biased estimates with under-coverage confidence intervals. We propose a modified debiased lasso approach, which solves a series of quadratic programming problems to approximate the inverse information matrix without posing sparse matrix assumptions. We establish asymptotic results for the estimated regression coefficients when the  dimension of covariates diverges with the sample size. As demonstrated by extensive simulations, our proposed method provides consistent estimates and confidence intervals with nominal coverage probabilities. The utility of the method is further demonstrated by assessing the effects of genetic markers on patients' overall survival with the Boston Lung Cancer Survival Cohort, a large-scale epidemiology study investigating mechanisms underlying the lung cancer.\\[0.1cm]
	
\noindent \textbf{Keywords:} Confidence interval, Cox proportional hazards model, Debiased lasso, Diverging dimension, Sparsity, Statistical inference.
\end{abstract}


\section{Introduction}

The Cox proportional hazards model \citep{cox1972regression}, a semiparametric model with an unspecified baseline hazard function,  has been widely used for the analysis of censored time-to-event data. With a fixed dimension of covariates,  \citet{cox1972regression} proposed the maximum partial likelihood estimation (MPLE) to infer the regression coefficients, and \citet{andersen1982cox} proved the asymptotic distributional results for  MPLE using the Martingale theory. 

Technological advances nowadays have made it possible to collect a large amount of information in biomedical studies. For example, the Boston Lung Cancer Survival Cohort (BLCSC), the motivating study for this work, has acquired abundant clinical, genetic, epigenetic and genomic data, which enable comprehensive investigations of molecular mechanisms underlying the lung cancer survival \citep{mckay2017large}. High-dimensionality of the collected covariates  has confronted the traditional parameter estimation and uncertainty quantification based on Cox models. In high-dimensional settings, where the number of covariates $p$ increases with the sample size $n$ or even greater than $n$, the conventional maximum partial likelihood estimation is usually ill-conditioned. Penalized estimators have emerged as a powerful tool for simultaneous variable selection and estimation \citep{tibshirani1997lasso,fan2002variable,gui2005penalized,antoniadis2010dantzig}.  Recently, \citet{huang2013oracle} and \citet{kong2014non} derived  the non-asymptotic oracle inequalities of the lasso estimator in the Cox model. However, none of these works dealt with statistical inference for
Cox models with high-dimensional covariates.

Existing literature on inference for high-dimensional models mainly concerns linear regression. \citet{zhang2014confidence}, \citet{van2014asymptotically} and \citet{javanmard2014confidence} developed inference procedures for linear models, based on debiasing the lasso estimator via low-dimensional projection or inverting the Karush--Kuhn--Tucker condition.  \citet{van2014asymptotically} extended the debiased lasso idea to generalized linear models, using the nodewise lasso regression. \citet{ning2017general} focused on hypothesis testing and devised decorrelated score, Wald and likelihood ratio tests for inference on a low-dimensional parameter in generalized linear models based on projection theory.  

There has been limited progress in inference for the Cox model with  high-dimensional covariates. \citet{fang2017testing} developed decorrelated tests for hypothesis testing of low-dimensional components under  high-dimensional Cox models, using ideas similar to \citet{ning2017general}. \citet{kong2018high} extended the debiased lasso approach in \citet{van2014asymptotically} to potentially misspecified Cox models, and used the nodewise lasso regression to estimate the inverse information matrix. \citet{yu2018confidence} proposed a debiased lasso approach, by estimating the inverse information matrix with a CLIME estimator adapted from \citep{cai2011constrained}. 
Most of these works restricted the number of non-zero elements of each row in the inverse information matrix to be small, i.e. $\ell_0$ sparsity. However, as found in \citet{xia2020revisit}, 
the sparse inverse information matrix assumption has no practical interpretation  beyond linear regression models, often fails to  hold in the Cox model, and these methods cannot perform satisfactorily in high-dimensional Cox model settings. For example,  as evidenced by our extensive simulations,  these methods cannot correct biases of lasso estimators or  construct confidence intervals with desired coverage probabilities, even when the number of regression coefficients is moderate relative to the sample size.

Our work is pertaining to  the  ``large $n$, diverging $p$" framework where $p < n$ and $p$ is allowed to increase with $n$ to infinity, which reflects the setting of the  motivating BLCSC {with $n=561$ and $p=231$}. Under this framework,
we draw inference based on  Cox models without imposing sparsity to the inverse information matrix.  
Specifically, we propose a debiased lasso approach via solving a series of quadratic programming problems to estimate the inverse information matrix. We use quadratic programming as a means of balancing the bias-variance trade-off and avoiding the unrealistic $\ell_0$ sparsity assumption for the large inverse information matrix in the Cox model. Our work adds to the literature in the following aspects.
First, unlike \citet{javanmard2014confidence}, our work entails careful treatment of the sum of non independently nor identically distributed terms in the empirical loss function, and we consider random designs instead of deterministic designs. Second, we find that the tuning parameter selection for the inverse information matrix estimation is crucial for bias correction. {For example, a related work  \citep{yu2018confidence}
proposed to select tuning parameters by minimizing {the cross-validated difference between the product of the information matrix with its estimated inverse and the identity matrix,} but was found to perform poorly. In contrast, we propose a cross-validation procedure to tune parameters by  hard thresholding  debiased estimates when
 solving the quadratic programming problems,  which yields} satisfactory numerical performance.

The article is organized as follows. Section \ref{sec:method} introduces the proposed debiased lasso approach, where the inverse information matrix is estimated via quadratic programming  with a novel cross-validation procedure for selecting the tuning parameter. 
Section \ref{sec:theory}  lays the theoretical foundation for reliable inference on linear combinations of the Cox regression parameters using debiased lasso estimators. We examine the finite sample performance of our proposed method with simulation studies in Section \ref{sec:simulation}, apply it to analyze the  BLCSC data in Section \ref{sec:app}, and conclude the paper with a few remarks in Section \ref{sec:conclusion}. We state several useful technical lemmas and provide proofs of the main theorems in the Appendix, and defer proofs of all the lemmas to the online supplementary materials.

\section{Method}
\label{sec:method}

\subsection{Background and set-up}

We introduce  notation that will be used throughout this article. For a vector $x = (x_1, \ldots, x_r)^T \in \mR^r$, $x^{\otimes 0} = 1$, $x^{\otimes 1} = x$ and $x^{\otimes 2} = x x^T$. The $\ell_q$-norm for  $x$ is $\| x \|_q = (\sum_{j=1}^r |x_j|^q)^{1/q}$, $q \ge 1$, and the $\ell_0$-norm  is $\| x \|_0 = \sum_{j=1}^r I( x_j \ne 0)$.
For a matrix $A = (a_{ij}) \in \mR^{m \times r}$, the induced matrix norm is defined as $\| A \|_{q_1,q_2} = \sup_{x \in \mR^r, x \ne 0} \| Ax \|_{q_2} / \| x \|_{q_1}$, $q_1, q_2 \ge 1$. In particular, $\| A \|_{1,1} = \max_{1 \le j \le r} \sum_{i=1}^m |a_{ij}|$, $\| A \|_{2,2} = \sigma_{\mathrm{max}}(A)$, the largest singular value of $A$, and $\| A \|_{\infty,\infty} = \max_{1 \le i \le m} \sum_{j=1}^r |a_{ij}|$. The element-wise max norm is denoted as $\| A \|_{\infty} = \max_{i, j} |a_{ij}|$. {For two positive sequences $\{d_n\}$ and $\{g_n\}$, we define $d_n \asymp g_n$ if there are two bounded positive constants $C$ and $C^{\prime}$ such that $C \le d_n / g_n \le C^{\prime}$.} 

A Cox model stipulates  that the  hazard function for the underlying failure time $T$, conditional on a $p$-dimensional vector of covariates $X = (X^{(1)}, \ldots, X^{(p)}) \in \mR^p$, is $h(t | X) = h_0(t) \exp\{ X^T \beta^0 \}$,
where $h_0(t)$ is an unknown baseline hazard function and $\beta^0 = (\beta^0_1, \ldots, \beta^0_p)^T \in \mR^p$ is an unknown vector of regression coefficients.  With $T$ subject to right censoring, the observed survival time is  $Y = \min(T, C)$, where the censoring time $C$ is assumed to be independent of $T$ given  $X$. Let $\delta = 1(T \le C)$ denote the event indicator. Based on $n$ independent and identically distributed observations $\{ Y_i, X_i, \delta_i\}_{i=1}^n$, the goal of the paper is to estimate and draw inference on the regression coefficients $\beta^0$, when $p<n$ but $p \rightarrow \infty$ as $n \rightarrow \infty$.

\subsection{Debiasing the lasso estimator}

{When $p$ is fixed, a natural approach for inferring $\beta^0$ is through  maximum
 partial likelihood estimation (MPLE), which maximizes the log partial likelihood function 
\begin{equation} \label{eq:negloglik}
\displaystyle \frac{1}{n} \sum_{i=1}^n \left[    X_i^T \beta - \log \left\{ \displaystyle \frac{1}{n} \sum_{j=1}^n 1(Y_j \ge Y_i) \exp(X_j^T \beta)  \right\}  \right] \delta_i.
\end{equation}
However, with a diverging $p$ of our interest, MPLE may suffer from numerical instability and yield unreliable inference; see Section \ref{sec:simulation}.
}

A more commonly used approach, when  $p$ diverges to $\infty$ as $n \rightarrow \infty$, is a lasso estimator,  defined to be the minimizer of the following penalized negative log partial likelihood:
\begin{equation*}
\hbeta = \mathrm{argmin}_{\beta \in \mathbb{R}^p} \left\{ \ell_n(\beta) + \lambda_n \| \beta \|_1 \right\},
\end{equation*}
where  $\ell_n(\beta)$ is the negative  log partial likelihood function, i.e.
 the negative of (\ref{eq:negloglik}), and $\lambda_n > 0$ is  a tuning parameter to be decided.
The first and second order derivatives of $\ell_n(\beta)$  with respect to $\beta$, that is, the score function and the information matrix,  are respectively  denoted by
\begin{equation*}
\dell_n(\beta)  =  - \displaystyle \frac{1}{n} \sum_{i=1}^n \left\{ X_i - \displaystyle \frac{\hmu_1(Y_i; \beta)}{\hmu_0(Y_i; \beta)}  \right\} \delta_i, \quad
\ddell_n(\beta) =  \displaystyle \frac{1}{n} \sum_{i=1}^n \left\{  \displaystyle \frac{\hmu_2(Y_i; \beta)}{\hmu_0(Y_i; \beta)} - \left[ \displaystyle \frac{\hmu_1(Y_i; \beta)}{\hmu_0(Y_i; \beta)} \right]^{\otimes 2} \right\} \delta_i ,
\end{equation*}
where $\hmu_r (t; \beta) = n^{-1} \sum_{j=1}^n 1(Y_j \ge t) X_j^{\otimes r} \exp\{X_j^T \beta\}, ~ r = 0, 1, 2$. We also define the weighted average covariate vector 
$
\heta_n(t; \beta) = \hmu_1(t; \beta) / \hmu_0(t; \beta) = \sum_{j=1}^n 1 (Y_j \ge t) \exp\{ X_j^T \beta \} X_j \big{/}  \sum_{j=1}^n 1 (Y_j \ge t) \exp\{ X_j^T \beta \}.
$

The lasso estimates tend to be more stable  because of  the penalization.
However, as the lasso estimator $\hbeta$ incurs biases  \citep{javanmard2014confidence}, we consider a debiased lasso approach to remove its bias and draw inference. Analogous to \citet{van2014asymptotically} for generalized linear models, we define a debiased lasso estimator for $\beta^0$ as
\begin{equation} \label{eq:dslasso}
\widehat{b} = (\widehat{b}_1, \ldots, \widehat{b}_p)^T = \hbeta - \widehat{\Theta} \dot{\ell}_n(\hbeta),
\end{equation}
with $- \widehat{\Theta} \dot{\ell}_n(\hbeta)$ serving as the bias correction term, where $\widehat{\Theta}$ is an estimate of the inverse information matrix.  
{A reliable estimator, $\hTheta$, is important to ensure the validity of the method. However,  existing methods, most of which  rely on $\ell_0$ sparsity assumptions on the true inverse information matrix and use  nodewise lasso or CLIME to  estimate a sparse $\hTheta$,  are  found to perform poorly for Cox models.
} 
{Not imposing any sparsity conditions on the inverse information matrix,} we propose to estimate each row of  $\hTheta$ by  solving 
the following quadratic programming problem for $m$ ($j = 1, \ldots, p$):
\begin{equation} \label{eq:qp}
\min \{  m^T \hSigma m: m \in \mR^p, \| \hSigma m - e_j \|_{\infty} \le \gamma_n   \},
\end{equation}
where $\gamma_n \ge 0$ is a tuning parameter, $e_j$ is the vector with one at the $j$th element and zero elsewhere, and the $p \times p$ matrix
\begin{equation} \label{eq:hatsigma}
\hSigma =  n^{-1} \sum^n_{i=1} \delta_i \{ X_i - \heta_n(Y_i; \hbeta) \}^{\otimes 2}.
\end{equation}
{In the end, we obtain $\widehat{\Theta}$ as  a $p\times p$ matrix consisting of all $p$ solutions to \eqref{eq:qp} as its corresponding row vectors.}
{ 
Of note, we use $\hSigma$ in \eqref{eq:qp} in lieu of  $\ddot{\ell}_n (\hbeta)$, which is for  theoretical convenience that becomes evident in Section \ref{sec:theory}.
}
In fact, under the assumptions in Section \ref{sec:theory}, we do have $\| \hSigma - \ddot{\ell}_n(\hbeta) \|_{\infty} = \oP(1)$ with a desirable convergence rate (see the proof of Theorem \ref{thm:main} in the Appendix), 
and the numerical difference in the resulting debiased lasso estimators is negligible.  

{Our approach extends  \citet{javanmard2014confidence}
in a linear regression setting to survival models.
However, 
 as $\heta_n(Y_i; \hbeta)$ involves  all subjects,  $\widehat{\Sigma}$ given in (\ref{eq:hatsigma}) is no longer a sum of independent and identically distributed terms,  posing additional theoretical difficulties. We have addressed these challenges in our proofs.
 
Computationally, our proposed (\ref{eq:qp}) can be  implemented fairly fast for moderate dimensions and parallelized for high dimensions by using the R function \texttt{solve.QP}. Our simulations demonstrate its computational efficiency.}


\subsection{Selection of the tuning parameter}

Selecting a proper tuning parameter $\gamma_n$ is  critical for bias correction in $\widehat{b}$, which can be illustrated by a  simulation study. We simulate $n=500$ independent subjects, each with $p=100$ independent covariates generated from $N(0,1)$. Only two coefficients in $\beta^0$ in the Cox model are non-zero, taking values of 1 and 0.3. The underlying survival time $Y$ follows an exponential distribution with a rate of $\exp{(X^T \beta^0)}$, and the censoring time is simulated from an exponential distribution with a rate of $0.2\exp{(X^T \beta^0)}$, resulting in a censoring rate of about 20\%. Figure \ref{fig:tuning} depicts how the estimation bias and the empirical coverage probability from the debiased lasso approach change as $\gamma_n$ ranges from 0 to 1,   revealing that $\gamma_n$ within the shaded range would yield desirable inference results.


\begin{figure}[ht]
	\centering
	\includegraphics[width=0.5\textwidth]{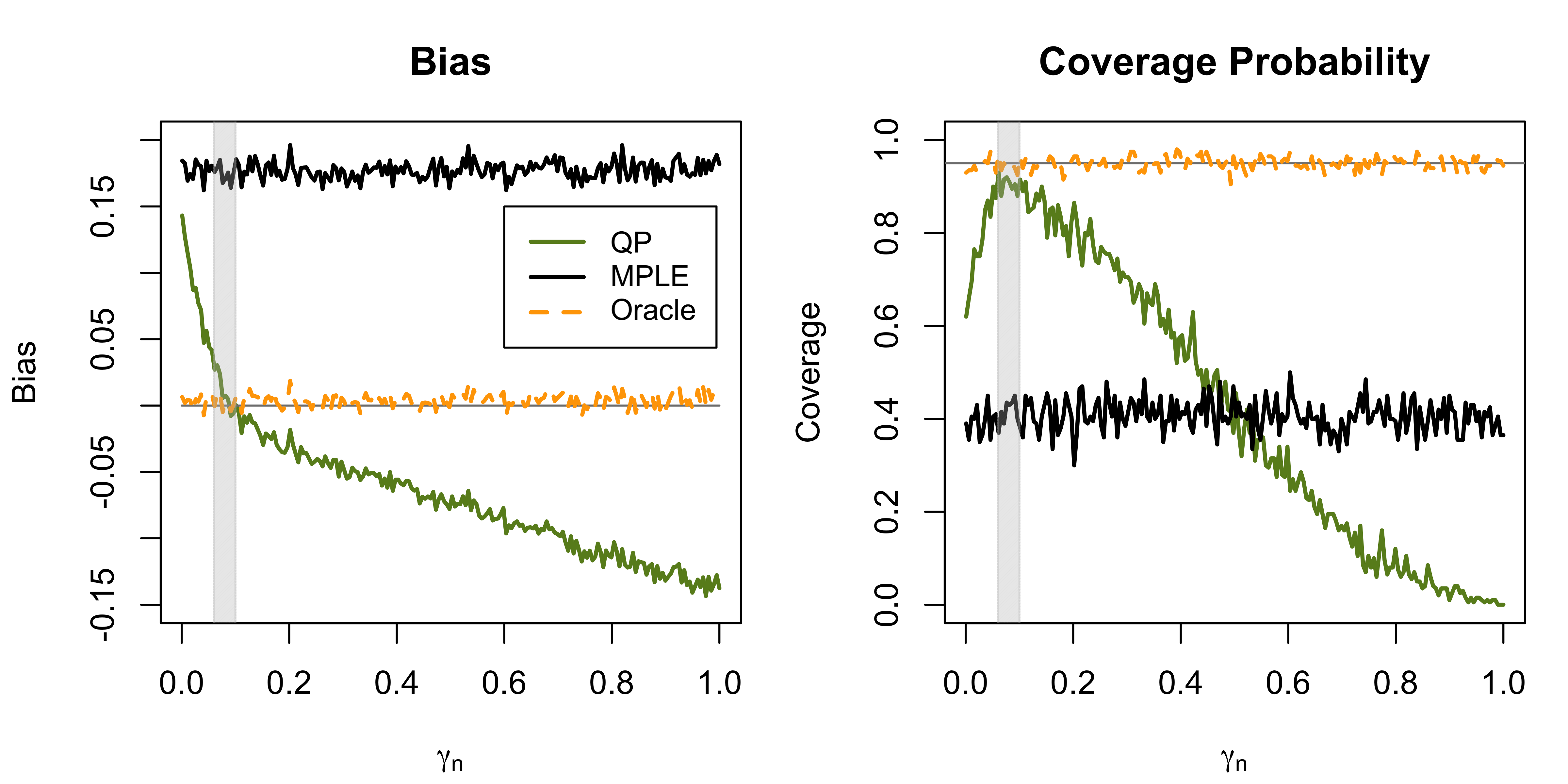}
	\caption{Estimation bias and 95\% confidence interval coverage probability for $\beta_1^0=1$ with the tuning parameter $\gamma_n \in [0,1]$ in a simulated example with $n=500$ observations and $p=100$ independent covariates. The methods in comparison include the proposed debiased lasso with quadratic programming (QP), the maximum partial likelihood estimation (MPLE) and the oracle estimator (Oracle) obtained from fitting the true model.}
	\label{fig:tuning}
\end{figure}

We have found that, when evaluating  cross-validation criteria for choosing $\gamma_n$,  directly plugging in  debiased estimates produces highly unstable values because of accumulative errors from inclusion of the estimates for a large number of noise covariates. 
Instead, we propose  a cross-validation procedure by hard-thresholding debiased estimates:  {splitting data randomly into $K$ folds  ($K=5$ or $10$), we  use the $k$th fold to obtain a debiased lasso estimate $\widehat{b}^{(k)}$,  hard-threshold it and plug in the thresholded values for computing cross-validation criteria.
Hard-thresholding is based on multiple testing with, for example, the Bonferroni correction.  That is, we take the  hard-thresholded values} to be  $\widehat{b}^{(k),HT}_j=\widehat{b}^{(k)}_j$ if $\sqrt{n}|\widehat{b}^{(k)}_j|/\hTheta_{jj} > z_{\alpha/(2p)}$, or $0$ otherwise, where $z_{\alpha/(2p)}$ is the upper $(\alpha/(2p))$th percentile of $N(0,1)$, as determined by the asymptotic result given in Theorem \ref{thm:main}. Then, letting  $\ell^{(k)}$ be the negative log partial likelihood  [defined as in \eqref{eq:negloglik} but applied to the $k$th testing set]  evaluated at $\widehat{b}^{(k),HT}$,  we choose $\gamma_n$ that gives the smallest cross-validated negative partial likelihood, {$\sum_{k=1}^K n^{(k)} \ell^{(k)}$, where $n^{(k)}$ is the number of observations in the $k$th testing set}. {Use of an alternative cross-validated partial likelihood  \citep{verweij1993cross}
gives similar results.}



\section{Theoretical results}
\label{sec:theory}

{We infer $c^T \beta^0$ for a loading vector $c \in \mR^p$ or $A \beta^0$ for a loading matrix $A \in \mR^{l \times p}$, by  studying the asymptotic properties for linear combinations of  $\widehat{b}$. }
Denote the expectation of $\hmu_r(t; \beta)$ as $\mu_r(t; \beta) = E [1(Y \ge t) X^{\otimes r} \exp\{ X^T \beta \}]$, and define population-level counterparts for $\heta_n(t; \beta)$ as
$
\eta_0(t; \beta) = \mu_1(t; \beta)/\mu_0(t; \beta),
$
 and for $\hSigma$ in \eqref{eq:hatsigma} as
$
\Sigma_{\beta^0} = E \left[ \{ X - \eta_0(Y; \beta^0) \}^{\otimes 2} \delta  \right].
$
Denote by $\Theta_{\beta^0} = \Sigma_{\beta^0}^{-1}$.  {We enumerate sufficient conditions needed for} establishing the theoretical properties of the debiased lasso estimator.
\begin{itemize} 
	\item[] \textit{Assumption 1}. Covariates are almost surely uniformly bounded, i.e. $\| X_i \|_{\infty} \le K$ for some constant $K<\infty$ for $i = 1,2, \ldots, n$.
	\item[] \textit{Assumption 2}. $| X_i^T \beta^0 | \le K_1$ uniformly for all $i = 1, \cdots, n$ with some constant $K_1 < \infty$ almost surely.
	\item[] \textit{Assumption 3}. The follow-up time stops at a finite time point $\tau > 0$, where the probability $\pi_0 = \PP (Y \ge \tau) > 0$.
	\item[] \textit{Assumption 4}. Let 
	\[
	\widetilde{\Sigma}_{\beta^0}(t) = \int_0^t  \left\{ \mu_2(u; \beta^0) - \frac{\mu_1(u; \beta^0) \mu_1^T(u; \beta^0)}{\mu_0(u; \beta^0)} \right\} dH_0(u).
	\]
	For any $t\in [0, \tau]$, we assume 
	\[
	\frac{c^T \Theta_{\beta^0} \widetilde{\Sigma}_{\beta^0}(t)  \Theta_{\beta^0} c}{c^T \Theta_{\beta^0} c}   \rightarrow v(t; c), ~ \mathrm{as} ~ n\rightarrow \infty
	\]
	for some fixed function $v(\cdot; c) > 0$.
	\item[] \textit{Assumption 5}. The matrix $\Sigma_{\beta^0}$ has bounded eigenvalues, i.e. there exist two constants $\zeta_\mathrm{min}$ and $\zeta_\mathrm{max}$ such that $0 < \zeta_\mathrm{min} \le \zeta_{\mathrm{min}}(\Sigma_{\beta^0}) \le \zeta_{\mathrm{max}}(\Sigma_{\beta^0}) \le \zeta_\mathrm{max} < \infty$, where $ \zeta_{\mathrm{min}}(\Sigma_{\beta^0})$ and $ \zeta_{\mathrm{max}}(\Sigma_{\beta^0}) $ represent the smallest and the largest eigenvalues of $\Sigma_{\beta^0}$.
	
\end{itemize}

It is common in the literature of high-dimensional inference to assume bounded covariates as in \textit{Assumption 1}. \citet{fang2017testing} and \citet{kong2018high} also posed \textit{Assumption 2} for the Cox model inference, i.e. uniform boundedness on the multiplicative hazard. Under \textit{Assumption 1}, \textit{Assumption 2} can be implied by bounded overall signal $\| \beta^0 \|_1$. \textit{Assumption 3} is usually used for survival models with censored data \citep{andersen1982cox}. \textit{Assumption 4}  ensures the convergence of a predictable variation process in the Martingale central limit theorem and thus the asymptotic normality of the de-biased lasso estimator. $\widetilde{\Sigma}_{\beta^0}(t)$ can be viewed as the information matrix up to time point $t$. It is easy to see that $\widetilde{\Sigma}_{\beta^0}(\tau)=\Sigma_{\beta^0}$ and $v(\tau; c)=1$. {This assumption states that the  limiting function $v(t; c)$ also depends on $c \in \mR^p$,  the  loading vector of interest, which is reasonable}. 
The bounded eigenvalue condition on $\Sigma_{\beta^0}$ in \textit{Assumption 5} is standard in inference for high-dimensional models. 



\begin{theorem} \label{thm:main}
	Assume that the two tuning parameters satisfy $\lambda_n \asymp \sqrt{\log(p)/n}$ and $\gamma_n \asymp \| \Thetabeta \|_{1,1} s_0 \lambda_n$. Furthermore, assume $\| \Theta_{\beta^0} \|_{1,1}^2 p s_0 \log(p) / \sqrt{n} \rightarrow 0$ as $n \rightarrow \infty$. Under  Assumptions 1--5, for any $c \in \mathbb{R}^p$ such that $\| c \|_2 = 1$ and $\| c \|_1 \le a_*$ with some absolute constant $a_* < \infty$, we have
	\[
	\sqrt{n} c^T (\widehat {b} - \beta^0) / (c^T \widehat{\Theta} c)^{1/2} \overset{\mathcal{D}}{\rightarrow} N(0,1).
	\]
\end{theorem}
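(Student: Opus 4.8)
The plan is to start from the defining identity $\widehat{b} - \beta^0 = (\hbeta - \beta^0) - \hTheta\,\dell_n(\hbeta)$ and Taylor-expand the score $\dell_n(\hbeta)$ about $\beta^0$. Writing $\dell_n(\hbeta) = \dell_n(\beta^0) + \ddell_n(\bar\beta)(\hbeta - \beta^0)$, with $\bar\beta$ on the segment between $\hbeta$ and $\beta^0$ (understood coordinatewise for the components of $\dell_n$, or via an integral remainder), gives
\[
\sqrt{n}\,c^T(\widehat{b} - \beta^0) = \underbrace{-\sqrt{n}\,c^T\hTheta\,\dell_n(\beta^0)}_{=:T_n} + \underbrace{\sqrt{n}\,c^T\bigl[I - \hTheta\,\ddell_n(\bar\beta)\bigr](\hbeta - \beta^0)}_{=:R_n}.
\]
I would then establish (i) $R_n = \oP(1)$; (ii) $T_n/(c^T\Thetabeta c)^{1/2}\overset{\mathcal{D}}{\rightarrow}N(0,1)$; and (iii) $c^T\hTheta c/(c^T\Thetabeta c)\overset{P}{\rightarrow}1$. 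Together these give the claim, since $c^T\Thetabeta c$ is bounded away from $0$ and $\infty$ by \emph{Assumption 5} and $\|c\|_2=1$.

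For the remainder I would split $I - \hTheta\,\ddell_n(\bar\beta) = (I - \hTheta\hSigma) + \hTheta(\hSigma - \ddell_n(\bar\beta))$ and bound $|R_n| \le \sqrt{n}\,\bigl(\|c^T(I-\hTheta\hSigma)\|_\infty + \|c^T\hTheta(\hSigma-\ddell_n(\bar\beta))\|_\infty\bigr)\,\|\hbeta-\beta^0\|_1$ by H\"older's inequality. The first piece is handled directly by the feasibility constraint of the quadratic program \eqref{eq:qp}, which forces $\|\hTheta\hSigma - I\|_\infty \le \gamma_n$ and hence $\|c^T(I-\hTheta\hSigma)\|_\infty \le \|c\|_1\gamma_n \le a_*\gamma_n$; combined with the lasso rate $\|\hbeta-\beta^0\|_1 = \OP(s_0\lambda_n)$ from \citet{huang2013oracle} and the choices $\gamma_n \asymp \|\Thetabeta\|_{1,1}s_0\lambda_n$, $\lambda_n \asymp \sqrt{\log p/n}$, this contributes a term of order $\|\Thetabeta\|_{1,1}s_0^2\log p/\sqrt{n}$, which vanishes under the stated rate condition. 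The second piece needs the auxiliary estimate $\|\hSigma - \ddell_n(\hbeta)\|_\infty = \oP(1)$ (with a rate, as asserted in the excerpt) together with control of the size of $\hTheta$ inherited from $\Thetabeta$; converting the $\ell_\infty$ control into the required bound on $\|c^T\hTheta(\cdot)\|_\infty$ is where the factor $p$ in the condition $\|\Thetabeta\|_{1,1}^2\,p\,s_0\log p/\sqrt{n}\rightarrow0$ is consumed.

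For the leading term $T_n$ I would exploit the martingale structure of the Cox score at the truth. In counting-process notation the compensator cancels because $\sum_i\{X_i - \heta_n(t;\beta^0)\}\,1(Y_i\ge t)\exp(X_i^T\beta^0)\equiv 0$, yielding $-\sqrt{n}\,\dell_n(\beta^0) = n^{-1/2}\sum_{i=1}^n\int_0^\tau\{X_i - \heta_n(t;\beta^0)\}\,dM_i(t)$, a sum of integrals of predictable integrands against the counting-process martingales $M_i$. Replacing the non-predictable $\hTheta$ by the deterministic $\Thetabeta$ (the error absorbed into the same analysis as in (iii), using $\|\sqrt{n}\,\dell_n(\beta^0)\|_\infty = \OP(\sqrt{\log p})$ and an entrywise consistency bound $\|\hTheta - \Thetabeta\|_\infty = \oP(1)$), the process $U_n(t) = n^{-1/2}\sum_i\int_0^t c^T\Thetabeta\{X_i - \heta_n(u;\beta^0)\}\,dM_i(u)$ is a martingale whose predictable variation, by the uniform convergence $\heta_n\to\eta_0$ and the definition of $\widetilde{\Sigma}_{\beta^0}$, tends to $c^T\Thetabeta\widetilde{\Sigma}_{\beta^0}(t)\Thetabeta c$; after normalizing by $(c^T\Thetabeta c)^{1/2}$ this limit is exactly $v(t;c)$ with $v(\tau;c)=1$, as supplied by \emph{Assumption 4}. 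The maximal jump of $U_n$ is uniformly $\OP(\sqrt{p/n})=\oP(1)$ under \emph{Assumptions 1--2}, so the Lindeberg condition holds and Rebolledo's martingale central limit theorem yields $T_n/(c^T\Thetabeta c)^{1/2}\rightarrow N(0,1)$. Finally, (iii) follows from $|c^T(\hTheta-\Thetabeta)c| \le a_*^2\|\hTheta-\Thetabeta\|_\infty$ and the same entrywise consistency of $\hTheta$.

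The main obstacle, as the authors themselves flag, is that $\hSigma$ and $\dell_n$ are not averages of i.i.d.\ summands---the weighted average $\heta_n(\cdot;\beta^0)$ couples all subjects through the risk sets---so every convergence statement (the rate for $\|\hSigma - \ddell_n(\hbeta)\|_\infty$, the entrywise consistency of $\hTheta$, and the limit of the predictable variation of $U_n$) must be proved via martingale and empirical-process uniform-convergence arguments rather than ordinary concentration for independent sums. A secondary difficulty, and the reason the rate condition carries a factor $p$, is that without any sparsity on $\Thetabeta$ the remainder $R_n$ cannot be bounded through a sparse row structure of $\hTheta$; the bias must instead be tamed purely through the quadratic-programming feasibility level $\gamma_n$ together with dimension-dependent norm conversions, which is precisely the trade-off the rate condition encodes.
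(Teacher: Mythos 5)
Your proposal is correct and follows essentially the same route as the paper: the same Taylor-expansion decomposition into a leading martingale term plus remainders controlled by the QP feasibility constraint $\|\hTheta\hSigma - I_p\|_{\infty}\le\gamma_n$, the lasso $\ell_1$ rate, and entrywise consistency of $\hTheta$, with the martingale CLT, the predictable-variation limit from \emph{Assumption 4}, and the $\OP(\sqrt{p/n})$ jump bound for Lindeberg all matching the paper's Lemmas. The only minor bookkeeping discrepancy is that in the paper the factor $p$ in the rate condition is consumed by the term $c^T(\hTheta-\Thetabeta)\dot{\ell}_n(\beta^0)$ via $\|\hTheta-\Thetabeta\|_{\infty,\infty}\le p\|\hTheta-\Thetabeta\|_{\infty}$, rather than by the $\hTheta(\hSigma-B_n)$ remainder as you suggest, but this does not affect the validity of your argument.
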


Theorem \ref{thm:main} provides the foundation for drawing inference on the regression coefficients. In the following,  Corollary \ref{coro:power}(i) discusses the type I error and the power of testing $H_{0}: c^T \beta^0 = a_0$ based on Theorem \ref{thm:main}, and Corollary \ref{coro:power}(ii) ensures that the corresponding confidence interval achieve nominal coverage probability asymptotically.

\begin{corollary} \label{coro:power}
	Suppose that the assumptions in Theorem \ref{thm:main} hold. 
	
    (i) {To test  a null  hypothesis  $H_{0}: c^T \beta^0 = a_0$ 
    versus an alternative hypothesis $H_{1}: c^T \beta^0 = a_1$, where $a_1 \ne a_0$, 
    with a} known $c \in \mR^p$ and constant $a_0 \in \mR$, let the test statistic $T = \sqrt{n} (c^T \widehat{b} - a_0) / (c^T \widehat{\Theta} c)^{1/2}$. We construct a test function 
\[
\phi(T) = \left\{  \begin{array}{ll}
1 & \quad \mathrm{if} \ |T| > z_{\alpha/2} \\
0 & \quad \mathrm{if} \ |T| \le z_{\alpha/2}
\end{array} ,  \right.
\]
where $z_{\alpha/2}$ is the upper $(\alpha/2)$th quantile of $N(0,1)$. Then, the type I error rate for the test $\phi(T)$ satisfies $\PP (\phi(T)=1 | H_0) \to \alpha$, and the power under the alternative $H_1$ satisfies $\PP (\phi(T) = 1 | H_1) \to 1$ as $n \to \infty$. 

(ii) The two-sided level $\alpha$ confidence interval for $c^T \beta^0$ can be  constructed as $CI(\alpha) = [ c^T \widehat{b} - z_{\alpha/2} (c^T \widehat{\Theta} c / n)^{1/2},~ c^T \widehat{b} + z_{\alpha/2} (c^T \widehat{\Theta} c / n)^{1/2}]$. Then {$\PP (c^T \beta^0 \in CI(\alpha) | H_0) \rightarrow 1 - \alpha$ } as $n \rightarrow \infty$. 
\end{corollary}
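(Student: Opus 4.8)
The plan is to derive all three conclusions directly from the asymptotic normality in Theorem \ref{thm:main}, viewing the test statistic and the confidence interval as simple functionals of the pivotal quantity $T_0 := \sqrt{n}\,c^T(\widehat{b} - \beta^0)/(c^T\widehat{\Theta} c)^{1/2}$, which satisfies $T_0 \overset{\mathcal{D}}{\rightarrow} N(0,1)$.

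For the Type I error in (i), note that under $H_0$ we have $a_0 = c^T\beta^0$, so that $T = T_0$ exactly. Since $\pm z_{\alpha/2}$ are continuity points of the standard normal distribution function, the Portmanteau theorem yields $\PP(|T| > z_{\alpha/2} \mid H_0) \rightarrow \PP(|Z| > z_{\alpha/2}) = \alpha$ for $Z \sim N(0,1)$, the last equality holding by the definition of $z_{\alpha/2}$ as the upper $(\alpha/2)$th quantile. Part (ii) is the complementary event: $c^T\beta^0 \in CI(\alpha)$ if and only if $|T_0| \le z_{\alpha/2}$, so the identical weak-convergence argument gives $\PP(c^T\beta^0 \in CI(\alpha)) \rightarrow \PP(|Z| \le z_{\alpha/2}) = 1-\alpha$.

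The power statement in (i) is the only part requiring an extra ingredient. Under $H_1$ I would decompose
\[
T = \underbrace{\frac{\sqrt{n}\,c^T(\widehat{b} - \beta^0)}{(c^T\widehat{\Theta} c)^{1/2}}}_{=\,T_0\,=\,\OP(1)} + \frac{\sqrt{n}\,(a_1-a_0)}{(c^T\widehat{\Theta} c)^{1/2}},
\]
where the first term is $\OP(1)$ by Theorem \ref{thm:main}. For the drift term I need $c^T\widehat{\Theta} c$ to be bounded away from $0$ and $\infty$ in probability. Since $\|c\|_2 = 1$, Assumption 5 gives $c^T\Theta_{\beta^0} c \in [\zeta_{\mathrm{max}}^{-1}, \zeta_{\mathrm{min}}^{-1}]$, and the consistency of $\widehat{\Theta}$ established within the proof of Theorem \ref{thm:main} transfers this boundedness to $c^T\widehat{\Theta} c$. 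Consequently the drift grows at rate $\sqrt{n}\,|a_1-a_0|$ and diverges in magnitude, forcing $|T| \rightarrow \infty$ in probability and hence $\PP(|T| > z_{\alpha/2} \mid H_1) \rightarrow 1$.

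The only nonroutine point is justifying that $c^T\widehat{\Theta} c$ remains bounded and bounded away from zero in the power argument; this is in fact implicit in Theorem \ref{thm:main}, since a nondegenerate $N(0,1)$ limit under that exact normalization already requires $c^T\widehat{\Theta} c$ to track the asymptotic variance $c^T\Theta_{\beta^0} c$, which is itself controlled by Assumption 5. Everything else reduces to weak convergence combined with the fact that $\pm z_{\alpha/2}$ are continuity points of the limit law.
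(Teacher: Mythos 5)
Your proposal is correct and is exactly the ``direct'' derivation the paper has in mind: the authors omit the proof, stating that Corollary \ref{coro:power} follows immediately from Theorem \ref{thm:main}, and your argument---identifying $T$ with the pivotal quantity under $H_0$, taking the complementary event for the confidence interval, and letting the $\sqrt{n}(a_1-a_0)$ drift diverge for the power---is that direct route. Your one nonroutine step, that $c^T\widehat{\Theta}c$ is bounded away from $0$ and $\infty$ in probability, is properly justified by Assumption 5 together with the consistency bound $|c^T(\widehat{\Theta}-\Theta_{\beta^0})c| \le \|c\|_1^2\|\widehat{\Theta}-\Theta_{\beta^0}\|_{\infty} = \oP(1)$ already established at the end of the paper's proof of Theorem \ref{thm:main}.
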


With Theorem \ref{thm:main} and the Cram\'er-Wold device, we can also conduct simultaneous inference on multiple linear combinations, i.e. $A \beta^0$ for some $l \times p$ matrix $A$, as summarized in the following Theorem \ref{thm:simul}, with \textit{Assumption 4} replaced by its multivariate version, \textit{Assumption 6}. Similarly, Corollary \ref{coro:thm2} provides the asymptotic results for hypothesis testing and confidence region in this setting.

\begin{itemize}
    \item[] \textit{Assumption 6.} Let $\widetilde{\Sigma}_{\beta^0}(t)$ be the same as in \textit{Assumption 4}. For a fixed combination matrix of interest $A \in \mR^{l \times p}$, it holds that
	\[
	\frac{\omega^T A \Theta_{\beta^0} \widetilde{\Sigma}_{\beta^0}(t)  \Theta_{\beta^0} A^T \omega}{\omega^T A \Theta_{\beta^0} A^T \omega}   \rightarrow v^{\prime}(t; A^T \omega), ~ \mathrm{as} ~ n\rightarrow \infty
	\]
	for any vector $\omega \in \mR^l$ and any $t\in [0, \tau]$, where $v^{\prime}(\cdot; A^T \omega) > 0$ is some fixed function depending on $A^T \omega$.
\end{itemize}

\begin{theorem} \label{thm:simul}
	Let $A$ be an $l \times p$ matrix of full row rank such that the number of rows $l$ is fixed, $\| A \|_{\infty, \infty} = \mathcal{O}(1)$ and $A \Theta_{\beta^0} A^T \rightarrow F$ for some fixed $l \times l$ matrix $F$. Assume that the two tuning parameters $\lambda_n \asymp \sqrt{\log(p)/n}$ and $\gamma_n \asymp \| \Thetabeta \|_{1,1} s_0 \lambda_n$, and that $\| \Theta_{\beta^0} \|_{1,1}^2 p s_0 \log(p) / \sqrt{n} \rightarrow 0$ as $n \rightarrow \infty$. Under Assumptions 1--3, 5 and 6, we have
	\[
	\sqrt{n} A  (\widehat{b} - \beta^0) \overset{\mathcal{D}}{\rightarrow} N (0, F).
	\]
\end{theorem}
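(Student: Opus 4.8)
The plan is to reduce the joint convergence to the scalar statement of Theorem \ref{thm:main} through the Cram\'er--Wold device. Fix an arbitrary $\omega \in \mR^l$; the case $\omega = 0$ is trivial, so take $\omega \ne 0$ with $\omega^T F \omega > 0$ (when $\omega^T F\omega = 0$ the claimed limit is a point mass at zero, handled separately by the variance bound below). Writing $c = A^T\omega / \| A^T\omega \|_2$, one has
\[
\omega^T \sqrt{n}\, A (\widehat{b} - \beta^0) = \sqrt{n}\, (A^T\omega)^T (\widehat{b} - \beta^0) = \| A^T\omega \|_2 \cdot \sqrt{n}\, c^T (\widehat{b} - \beta^0),
\]
so that everything is expressed through the unit vector $c$, to which I intend to apply Theorem \ref{thm:main}.

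Before invoking Theorem \ref{thm:main} I would verify its hypotheses for $c$. By construction $\|c\|_2 = 1$. For the $\ell_1$ requirement, I would bound the numerator by $\| A^T\omega \|_1 \le \| A \|_{\infty,\infty}\,\|\omega\|_1 = \mathcal{O}(1)$, using that $l$ and $\omega$ are fixed and $\| A \|_{\infty,\infty} = \mathcal{O}(1)$. For the denominator, \textit{Assumption 5} gives $\| A^T\omega\|_2^2 = \omega^T A A^T\omega \ge \omega^T A \Theta_{\beta^0} A^T \omega / \lambda_{\max}(\Theta_{\beta^0}) \ge \zeta_{\mathrm{min}}\, \omega^T A \Theta_{\beta^0} A^T\omega$, and since $\omega^T A\Theta_{\beta^0}A^T\omega \to \omega^T F\omega > 0$, the denominator $\| A^T\omega\|_2$ is bounded away from zero uniformly in $n$; hence $\|c\|_1 = \| A^T\omega\|_1 / \| A^T\omega\|_2 \le a_*$ for an absolute constant $a_*$. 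Finally, substituting $c = A^T\omega/\|A^T\omega\|_2$ into the ratio of \textit{Assumption 4} cancels the scalar $\|A^T\omega\|_2^2$ in numerator and denominator and leaves precisely the ratio of \textit{Assumption 6}; thus \textit{Assumption 6} supplies \textit{Assumption 4} for this particular $c$ with $v(t;c) = v'(t; A^T\omega)$, and Theorem \ref{thm:main} applies, yielding $\sqrt{n}\, c^T(\widehat{b}-\beta^0)/(c^T\widehat{\Theta}c)^{1/2} \overset{\mathcal{D}}{\rightarrow} N(0,1)$.

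It remains to recombine the normalizers. Using $\|A^T\omega\|_2^2\, c^T \widehat{\Theta} c = \omega^T A \widehat{\Theta} A^T\omega$, I would write
\[
\sqrt{n}\,(A^T\omega)^T(\widehat{b}-\beta^0) = (\omega^T A \widehat{\Theta} A^T\omega)^{1/2}\cdot \frac{\sqrt{n}\, c^T(\widehat{b}-\beta^0)}{(c^T\widehat{\Theta}c)^{1/2}},
\]
where the second factor tends to $N(0,1)$. By Slutsky's theorem it then suffices to show $\omega^T A \widehat{\Theta} A^T\omega \overset{P}{\rightarrow} \omega^T F\omega$. This follows from $A\Theta_{\beta^0}A^T\to F$ together with the elementwise bound $|\omega^T A(\widehat{\Theta}-\Theta_{\beta^0})A^T\omega| \le \|A^T\omega\|_1^2\,\|\widehat{\Theta}-\Theta_{\beta^0}\|_{\infty}$ and the max-norm consistency $\|\widehat{\Theta}-\Theta_{\beta^0}\|_{\infty} = \oP(1)$ established in the proof of Theorem \ref{thm:main}. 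Hence $\omega^T\sqrt{n}\,A(\widehat{b}-\beta^0)\overset{\mathcal{D}}{\rightarrow} N(0,\omega^T F\omega)$ for every $\omega$, and since $N(0,\omega^T F\omega)$ is the law of $\omega^T Z$ for $Z\sim N(0,F)$, the Cram\'er--Wold device gives $\sqrt{n}\,A(\widehat{b}-\beta^0)\overset{\mathcal{D}}{\rightarrow}N(0,F)$. I expect the main obstacle to be the uniform-in-$n$ control of $\|c\|_1$ --- equivalently, keeping $\|A^T\omega\|_2$ bounded away from zero --- since this is exactly where the extra hypotheses of Theorem \ref{thm:simul} (full row rank, $\|A\|_{\infty,\infty}=\mathcal{O}(1)$, and $A\Theta_{\beta^0}A^T\to F$) are indispensable; the remaining steps reduce cleanly to the scalar theorem.
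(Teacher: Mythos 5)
Your proof is correct and follows essentially the same route as the paper: a Cram\'er--Wold reduction to Theorem \ref{thm:main} with $c$ proportional to $A^T\omega$, the $\ell_1$ bound $\|A^T\omega\|_1 \le \|A\|_{\infty,\infty}\|\omega\|_1 = \mathcal{O}(1)$, and Assumption 6 supplying Assumption 4. The only difference is that you normalize $c$ to unit $\ell_2$ norm and therefore must bound $\|A^T\omega\|_2$ away from zero; the paper sidesteps this by noting that the normalization $\|c\|_2=1$ is not essential to Theorem \ref{thm:main} (only the $\ell_1$ bound is), though your extra Slutsky step identifying the limiting variance as $\omega^T F \omega$ is a detail the paper leaves implicit and is worth having spelled out.
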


\begin{corollary} \label{coro:thm2}
	Suppose the assumptions in Theorem \ref{thm:simul} hold. 
	
	(i) For the $l \times p$ matrix $A$ in Theorem \ref{thm:simul}, under the null hypothesis $H_0: A \beta^0 = a_0$ for some $a_0 \in \mR^l$,  the statistic  $T^{\prime} = n (A \widehat{b} - a^0)^T \widehat{F}^{-1} (A \widehat{b} - a^0)  \overset{\mathcal{D}}{\rightarrow} \chi^2_l$, where $\widehat{F} = A \hTheta A^T$. 
	
	(ii) For $\alpha \in (0,1)$, let the confidence region for $A \beta^0$ be $CR(\alpha) = \{ a \in \mR^l: n (A \widehat{b} - a)^T \widehat{F}^{-1} (A \widehat{b} - a) \le \chi^2_{l, \alpha} \}$, where $\chi^2_{l, \alpha}$ is the upper $\alpha$th percentile from $\chi^2_l$. Then 
	{$\PP (A\beta^0 \in CR(\alpha) | H_0) \rightarrow 1 - \alpha$ as $n \rightarrow \infty$.}
\end{corollary}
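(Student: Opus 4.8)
The plan is to derive the corollary directly from the multivariate central limit theorem in Theorem~\ref{thm:simul}, treating both parts as consequences of that weak limit together with consistency of the plug-in variance estimator $\widehat{F} = A\hTheta A^T$. First, under $H_0: A\beta^0 = a_0$ we have $A\widehat{b} - a_0 = A(\widehat{b} - \beta^0)$, so Theorem~\ref{thm:simul} gives $Z_n := \sqrt{n}(A\widehat{b} - a_0) \overset{\mathcal{D}}{\rightarrow} Z \sim N(0, F)$. The quadratic-form statistic is $T^{\prime} = Z_n^T \widehat{F}^{-1} Z_n$, so it remains to (a) show $\widehat{F} \overset{P}{\rightarrow} F$ with $F$ invertible, and (b) pass to the limit.

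For step (a), I would first record that $F$ is positive definite: since $\Theta_{\beta^0} = \Sigma_{\beta^0}^{-1}$ has smallest eigenvalue at least $\zeta_{\mathrm{max}}^{-1} > 0$ by \textit{Assumption 5}, and $A$ has full row rank with smallest singular value bounded away from zero, the matrix $A\Theta_{\beta^0}A^T$, and hence its limit $F$, has smallest eigenvalue bounded away from zero. Next I would establish the consistency $\widehat{F} - A\Theta_{\beta^0}A^T \overset{P}{\rightarrow} 0$ entrywise. Writing $a_j^T$ and $a_k^T$ for the $j$th and $k$th rows of $A$, each entry satisfies
\[
\bigl| a_j^T (\hTheta - \Theta_{\beta^0}) a_k \bigr| \le \| a_j \|_1 \, \| a_k \|_1 \, \| \hTheta - \Theta_{\beta^0} \|_{\infty} \le \| A \|_{\infty,\infty}^2 \, \| \hTheta - \Theta_{\beta^0} \|_{\infty}.
\]
Because $l$ is fixed and $\| A \|_{\infty,\infty} = \mathcal{O}(1)$, and because the element-wise consistency $\| \hTheta - \Theta_{\beta^0} \|_{\infty} = \oP(1)$ is already available from the analysis underlying Theorem~\ref{thm:main} (the same control of the quadratic-programming solution $\hTheta$ used there to render the scalar normalizer $c^T \hTheta c$ consistent), this bound yields $\widehat{F} \overset{P}{\rightarrow} F$. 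Combined with $A\Theta_{\beta^0}A^T \rightarrow F$ from the hypotheses of Theorem~\ref{thm:simul}, the continuous mapping theorem then gives $\widehat{F}^{-1} \overset{P}{\rightarrow} F^{-1}$.

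Step (b) is routine. Since $\widehat{F}^{-1} \overset{P}{\rightarrow} F^{-1}$ is convergence to a constant matrix, $(Z_n, \widehat{F}^{-1}) \overset{\mathcal{D}}{\rightarrow} (Z, F^{-1})$ jointly, and applying the continuous map $(z, M) \mapsto z^T M z$ yields $T^{\prime} \overset{\mathcal{D}}{\rightarrow} Z^T F^{-1} Z$. Writing $W = F^{-1/2} Z \sim N(0, I_l)$ shows $Z^T F^{-1} Z = \| W \|_2^2 \sim \chi^2_l$, which proves (i). For (ii), under $H_0$ the event $\{A\beta^0 \in CR(\alpha)\}$ equals $\{ n (A\widehat{b} - A\beta^0)^T \widehat{F}^{-1}(A\widehat{b} - A\beta^0) \le \chi^2_{l,\alpha}\} = \{T^{\prime} \le \chi^2_{l,\alpha}\}$, since $A\beta^0 = a_0$; by (i) and continuity of the $\chi^2_l$ distribution function, $\PP(T^{\prime} \le \chi^2_{l,\alpha} \mid H_0) \rightarrow \PP(\chi^2_l \le \chi^2_{l,\alpha}) = 1 - \alpha$.

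The main obstacle is step (a): unlike in Theorem~\ref{thm:simul}, where $A\Theta_{\beta^0}A^T \to F$ is assumed, here one must control the data-dependent $\hTheta$ produced by the quadratic programs \eqref{eq:qp} and verify that sandwiching it between the bounded loadings of $A$ does not amplify its estimation error. This reduces to the element-wise rate $\| \hTheta - \Theta_{\beta^0} \|_{\infty} = \oP(1)$, together with confirming that $F$ inherits a strictly positive smallest eigenvalue so that matrix inversion is continuous at the limit; the remaining distributional steps are standard Slutsky and continuous-mapping arguments.
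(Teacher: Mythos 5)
Your proposal is correct and follows exactly the route the paper intends: the paper omits the proof, stating that Corollary \ref{coro:thm2} is ``directly obtained'' from Theorem \ref{thm:simul}, and your argument is precisely that direct derivation --- the weak limit $\sqrt{n}A(\widehat{b}-\beta^0)\overset{\mathcal{D}}{\rightarrow}N(0,F)$, entrywise consistency $\|\hTheta-\Theta_{\beta^0}\|_{\infty}=\oP(1)$ (from Lemma \ref{lemma:diff_theta}, the same control used for $c^T\hTheta c$ in Theorem \ref{thm:main}), and Slutsky plus the continuous mapping theorem. Your added care in noting that invertibility of $F$ requires the smallest singular value of $A$ to stay bounded away from zero (implicit in the paper's use of $\widehat{F}^{-1}$) is a sensible refinement rather than a deviation.
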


Proofs of Theorems \ref{thm:main} and \ref{thm:simul} are provided in the Appendix. Corollaries \ref{coro:power} and \ref{coro:thm2} are directly obtained from Theorems \ref{thm:main} and \ref{thm:simul}, and their proofs are omitted.

\section{Numerical experiments}
\label{sec:simulation}

{For a total of $n=500$ subjects, we simulate $p=20,100,200$ covariates, respectively, and generate these covariates from $N(0, \Sigma)$, where $\Sigma=I_p$ and \textsc{AR(1)} with the  correlation parameter of 0.5 
as two different setups.  Each covariate is truncated at $\pm 2.5$.}
{Concerning the specifications of the true regression coefficients $\beta^0$, the first element $\beta_1^0$ varies from 0 to 2 with an equal step size  of 0.2, four of the other elements are arbitrarily chosen to take values of 1, 1, 0.5 and 0.5, and the rest are set to be zero. The underlying survival times $T$ and the censoring times $C$ are independently generated from an exponential distribution with hazard  $h(t|X) = \exp\{X^T\beta^0\}$, and from $\mathrm{Uniform}(1,20)$, respectively. Under each simulation configuration, 200 datasets are generated.}

The methods in comparison include: (1) QP: our proposed debiased lasso with quadratic programming for matrix $\widehat{\Theta}$; (2) NW: the debiased lasso with node-wise lasso for matrix $\widehat{\Theta}$  in \citet{kong2018high}; (3) CLIME: debiased lasso with CLIME for matrix $\widehat{\Theta}$  in \citet{yu2018confidence}; (4) Decor:  decorrelated Wald test in \citet{fang2017testing} and (5) Oracle: the estimator when the true model is known a priori.  



For the lasso estimator, we use 10-fold cross-validation to select the tuning parameter $\lambda_n$. Five-fold cross-validation is used for tuning parameter selection in {CLIME}, QP and NW. For the hard-thresholding step used to select $\gamma_n$ as described in Section 2.3, we adopt the Bonferroni correction with the adjusted p-value threshold $0.1/p$, where $p$ is the number of covariates.


We compare these methods with respect to the bias of the estimated $\beta^0_1$ (the parameter of main interest), its model-based standard error, coverage probability with a significance level of $\alpha=0.05$ and mean squared error.  Figures~\ref{fig:indep} and \ref{fig:ar1} show the results for the independent and the AR(1) covariance structures, respectively. When  $p=20$, our proposed method (QP) and the decorrelated Wald test (Decor) {perform nearly as well as} the oracle estimator (Oracle) and {MPLE}. When the dimension is relatively large compared to the sample size, i.e. $p=100, 200$, {next to Oracle}, the proposed estimator (QP) displays the smallest biases and the confidence intervals with coverage probabilities closest to the nominal level 95\% for both covariance structures. {On the other hand, NW, CLIME, Decor and MPLE incur substantial biases as the true value of $\beta^0$ increases.
In addition, owing to the estimation of $\Theta_{\beta^0}$ using penalized approaches, the model-based standard error estimates using  NW and CLIME are shrunk towards zero, underestimating the true variation. As such, the four competing methods all present improper confidence interval coverage probabilities, whereas our proposed method retains nearly unbiased estimates with coverage probabilities close to the nominal level}.
\begin{figure}[!htb]
	\centering
	\includegraphics[width=12cm]{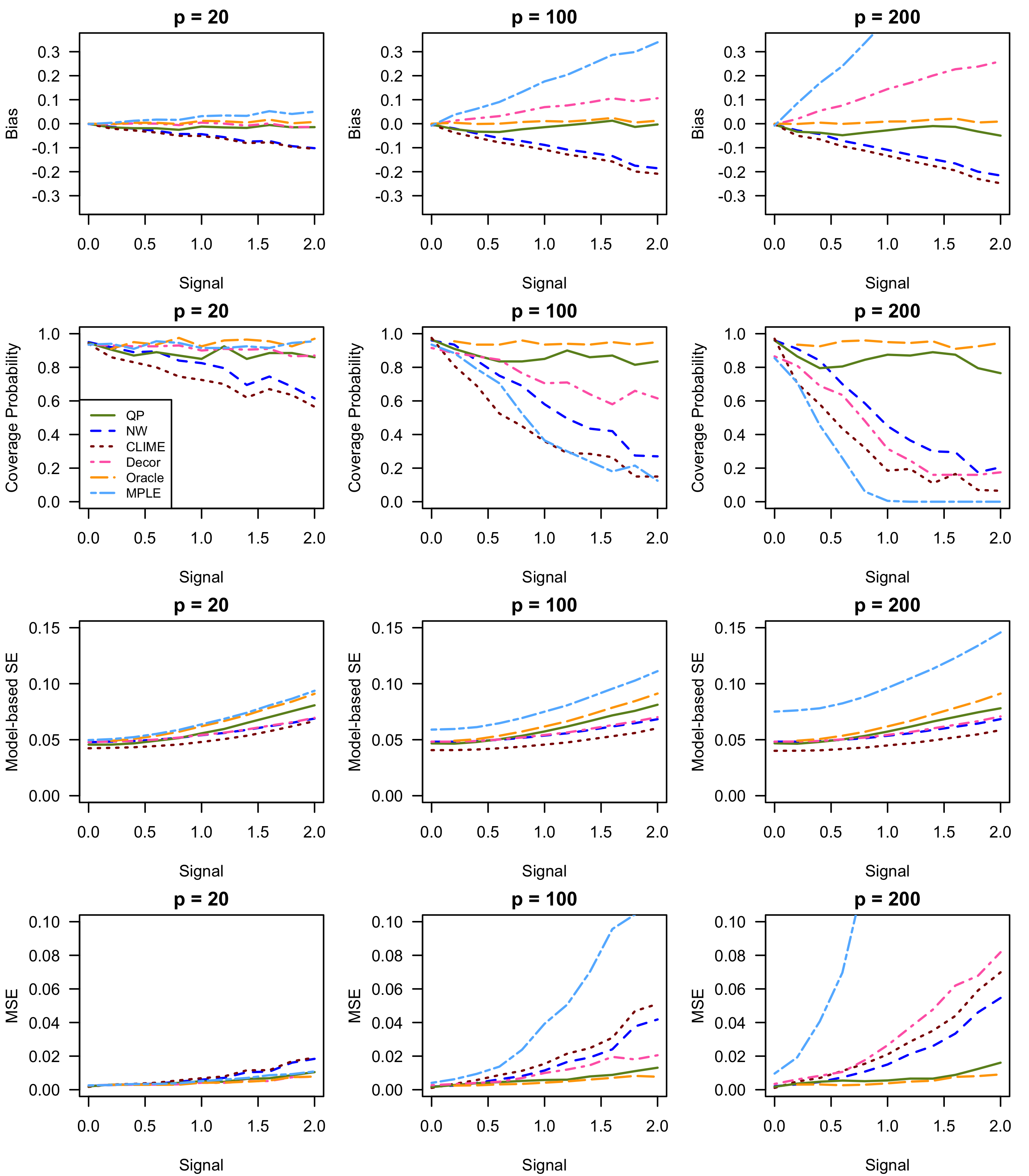}
	\caption{Estimation bias, coverage probability, model-based standard error and mean squared error for six estimators in comparison, QP (solid green lines), NW (short-dash navy blue lines), CLIME (dotted red lines), Decor (dot-dash pink lines), Oracle (long-dash orange lines), and MPLE (two-dash light blue lines), based on 200 simulations, each with $n=500$ observations and independent covariance structure for covariates. 
		\label{fig:indep}}
\end{figure}

\begin{figure}[!htb]
	\centering
	\includegraphics[width=12cm]{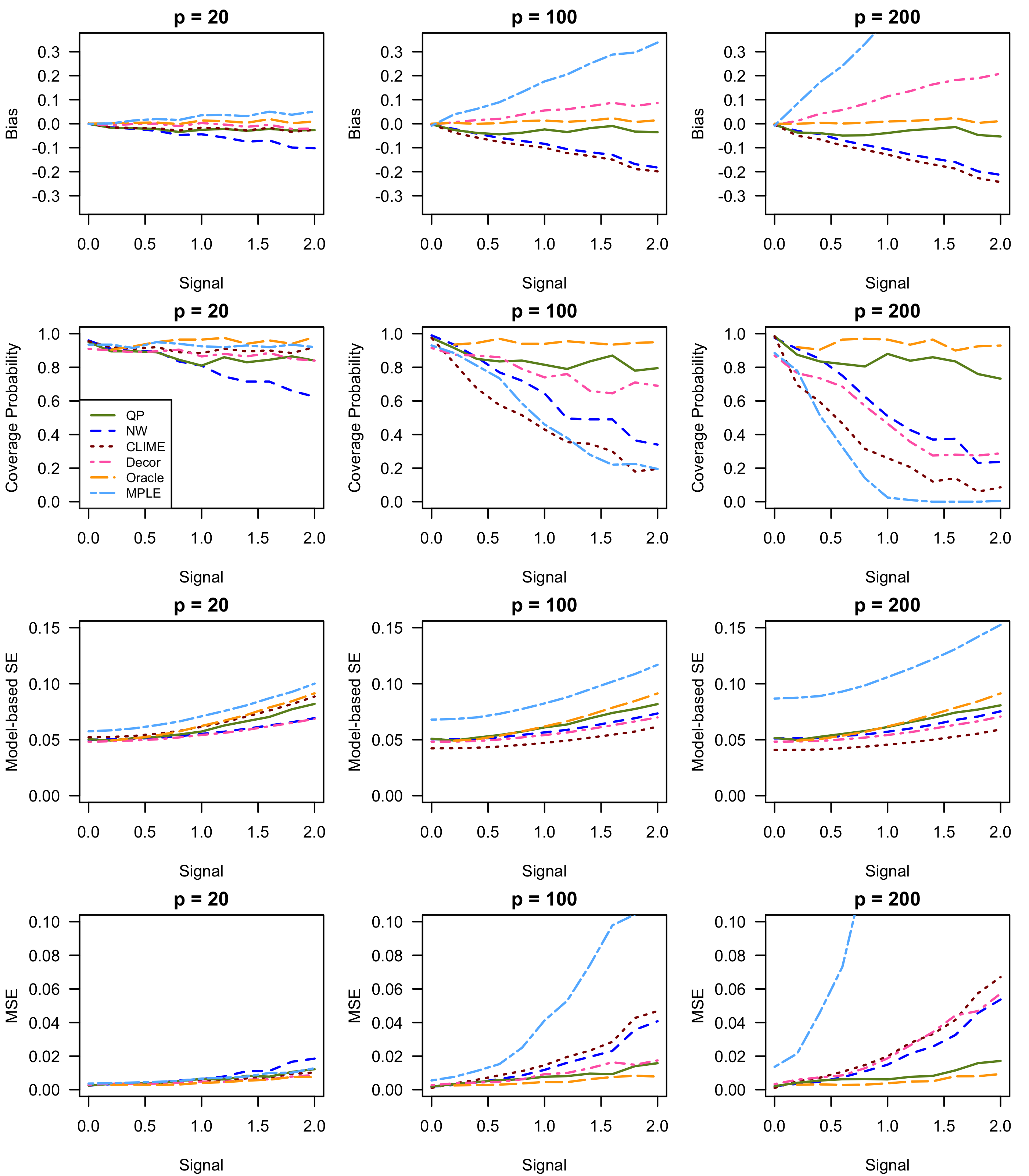}
	\caption{Estimation bias, coverage probability, model-based standard error and mean squared error for six estimators in comparison, QP (solid green lines), NW (short-dash navy blue lines), CLIME (dotted red lines), Decor (dot-dash pink lines), Oracle (long-dash orange lines), and MPLE (two-dash light blue lines), based on 200 simulations, each with $n=500$ observations and AR(1) covariance structure for covariates ($\rho = 0.5$). 
		\label{fig:ar1}}
\end{figure}

{We next compare the time spent on computing $\hTheta$ alone (Table \ref{tab:time}) among}  \texttt{solve.QP} in the R package \texttt{quadprog} for the proposed quadratic programming procedure, {and two commonly used \texttt{R} functions for CLIME, namely, \texttt{clime} in the package \texttt{clime} and \texttt{sugm} in the package \texttt{flare}}. 
Three candidate values of $\gamma_n$, namely, 0.3, 1 and 2 times of $\sqrt{\log(p)/n}$, are used for demonstration. We fix $\beta^0_1=1$ and simulate $n=500$ observations, {with covariates having an AR(1) covariance structure and the rest of the setting being identical to what is described in the first paragraph of this section. The time columns  in Table \ref{tab:time} report the average computing time over 10 replications  on a MacBook with 2.7GHz Intel Core i5 processor and 8GB memory, and  the  ratio columns compare the average computing time of each programming procedure to that of \texttt{solve.QP} for each simulation setting, respectively. Under all of the scenarios examined,} our proposed implementation with \texttt{solve.QP} is the most computationally efficient;  for  large dimensions, e.g., $p=200$,  \texttt{clime} takes the longest time per dataset on average.

\begin{table}[ht]
	\centering
	\caption{Comparison of the computational time spent on computing $\hTheta$. Time (in seconds) is averaged over 10 replications under each setting. Time ratio is with respect to the proposed method implemented using \texttt{solve.QP}.}
	\label{tab:time}
	\begin{tabular}{lllllll}
		\hline
		 & \multicolumn{2}{c}{{solve.QP}} & \multicolumn{2}{c}{{clime}} & \multicolumn{2}{c}{{flare}} \\
		\cmidrule(lr){2-3} \cmidrule(lr){4-5} \cmidrule(l){6-7}
		{$p=20$} & Time & Ratio & Time & Ratio & Time & Ratio  \\ 
		\hline
		$\gamma_n = 0.3\sqrt{\log(p)/n}$ & 0.0016 & 1.0 & 0.0392 & 24.5 & 0.1898 & 118.6 \\ 
		$\gamma_n = \sqrt{\log(p)/n}$ & 0.0015 & 1.0 & 0.0373 & 24.9 & 0.1597 & 106.5 \\ 
		$\gamma_n = 2\sqrt{\log(p)/n}$ & 0.0012 & 1.0 & 0.0338 & 28.2 & 0.1522 & 126.8 \\ 
		\hline
		{$p=100$} & Time & Ratio & Time & Ratio & Time & Ratio  \\ 
		\hline
		$\gamma_n = 0.3\sqrt{\log(p)/n}$ & 0.3159 & 1.0 & 4.3452 & 13.8 & 5.8860 & 18.6 \\ 
		$\gamma_n = 1\sqrt{\log(p)/n}$ & 0.0922 & 1.0 & 3.4164 & 37.1 & 2.0754 & 22.5 \\ 
		$\gamma_n = 2\sqrt{\log(p)/n}$ & 0.0665 & 1.0 & 2.6281 & 39.5 & 0.3663 & 5.5 \\ 
		\hline
		{$p=200$} & Time & Ratio & Time & Ratio & Time & Ratio  \\ 
		\hline
		$\gamma_n = 0.3\sqrt{\log(p)/n}$ & 4.3886 & 1.0 & 64.7047 & 14.7 & 52.2224 & 11.9 \\ 
		$\gamma_n = 1\sqrt{\log(p)/n}$ & 0.9039 & 1.0 & 47.0320 & 52.0 & 21.7229 & 24.0 \\ 
		$\gamma_n = 2\sqrt{\log(p)/n}$ & 0.6196 & 1.0 & 33.0308 & 53.3 & 2.5536 & 4.1 \\ 
		\hline
	\end{tabular}
\end{table}

\section{Boston lung cancer data analysis}
\label{sec:app}

Lung cancer is the leading cause of cancer deaths in the United States, and non-small cell lung cancer (NSCLC), accounting for approximately 80\% to 85\% among all the  lung cancer cases, is the most common histological type of lung cancer \citep{houston2018histologic}.  Identification of genetic variants associated with lung cancer patient survival sparks modern translational cancer research, and has the potential to refine prognosis  and promote individualized treatment and clinical care. Despite numerous studies investigating  potential predisposing genes to lung cancer risks, studies on patient survival usually have small sample sizes and the reported genetic markers associated with lung cancer survival have been poorly replicated \citep{bosse2018decade}. The Boston Lung Cancer Survival Cohort (BLCSC) is a large epidemiology cohort for investigating the molecular cause underlying lung cancer, where lung cancer cases have been enrolled at Massachusetts General Hospital and the Dana-Farber Cancer Institute from 1992 to present. We apply the proposed debiased lasso method (QP) to 
a BLCSC cohort with genetic data and simultaneously investigate the joint effects of certain genotyped SNPs on NSCLC patient overall survival.

Included in the analysis are $n=561$ NSCLC patients with available diagnosis dates, follow-up times and genotypes on Axiom arrays. Among all these patients, 437 (77.9\%) died and 124 (22.1\%) were censored. 
The range of the observed survival time is from 6 days to $8584$ days, and the restricted mean survival and censoring times at $\tau=8584$ days are 2124 (SE: 105) and 4397 (SE: 187) days, respectively. Patient characteristics,
including age at diagnosis, race, education level, gender, smoking status, histological type, cancer stage, and treatment received, are provided in the online supplementary materials. 

A conventional marginal association analysis \citep{tang2020novel} found two potentially functional SNPs in the genes \textit{HDAC2} and \textit{PPARGC1A} that were significantly associated with NSCLC overall survival. Using the target gene approach, we focus on 32 genes in the CARM ER pathway, which is the largest pathway \citet{tang2020novel} considered and described in their supplementary document and contains the two reported genes \textit{HDAC2} and \textit{PPARGC1A}, plus 9 genes that \citet{xia2020revisit} studied to investigate whether the susceptibility loci are also associated with patient survival.  We extract 312 genotyped SNPs from the 32 genes in the CARM ER pathway and the nine target genes described in \citet{xia2020revisit} from the BLCSC data (minor allele frequency $>$ 0.01, genotype call rate $>$ 95\%). After a pruning step using PLINK \citep{purcell2007plink} to avoid multicolinearity caused by SNPs with high linkage disequilibrium, 
the number of SNPs is reduced to 217. SNPs are coded by the number of copies of the minor allele, i.e. 0, 1 or 2, and assumed  to have additive effects on the log hazard ratio. Therefore, the subset of the BLCSC data we analyze include $n=561$ NSCLC patients and $p=231$ covariates.

Table \ref{tab:est2} summarizes the coefficient estimates in the Cox proportional hazards model for all patient characteristics and the top ten SNPs ranked by the p-values from the proposed method (QP). Results of two methods, QP versus MPLE, are listed side by side.  In general, QP results in points estimates of smaller magnitudes and smaller standard errors compared to MPLE, which is consistent with our observation in the simulated example. 
MPLE is numerically very unstable when the dimension $p$ is large compared to the sample size $n$. The numerical instability arises primarily from inverting the Hessian matrix, which may be closer to being singular. On the contrary, Lasso provides a more stabilized initial estimator. As a result, the debiased lasso estimator is numerically more stable than MPLE with narrower confidence intervals.
When the dimension $p$ is very small, the difference between the two methods becomes negligible. 

\begin{landscape}
	\begin{table}[h!]
		\centering
		\caption{Coefficient estimates in the Cox proportional hazards model for the Boston Lung Cancer Study data}
		\footnotesize
		\begin{threeparttable}	
			\begin{tabular}{lllllllllll}
			    \hline
				&  & & \multicolumn{4}{c}{QP}  & \multicolumn{4}{c}{MPLE}  \\
				\cmidrule(lr){4-7}  \cmidrule(lr){8-11}
				{Variable} & {{Note}}  & &  {Est} & {SE} & {P-value}  & {95\% CI} &  {Est} & {SE} & {P-value} & {95\% CI} \\
				\hline
				Race & \multicolumn{2}{l}{Others vs Caucasian} & -0.163 & 0.201 & 0.416 & (-0.557, 0.231) & 0.065 & 0.561 & 0.908 & (-1.034, 1.163) \\ 
				Education & \multicolumn{2}{l}{HS vs No HS} & -0.018 & 0.091 & 0.840 & (-0.198, 0.161) & -0.142 & 0.253 & 0.574 & (-0.637, 0.353) \\ 
				& \multicolumn{2}{l}{College vs No HS} & -0.037 & 0.076 & 0.625 & (-0.185, 0.111) & -0.085 & 0.218 & 0.698 & (-0.513, 0.343) \\ 
				Gender & \multicolumn{2}{l}{Male vs Female} & 0.314 & 0.075 & $<0.001$ & (0.166, 0.461) & 0.439 & 0.166 & 0.008 & (0.114, 0.763) \\ 
				Age & \multicolumn{2}{l}{Standardized} & 0.155 & 0.038 & $<0.001$ & (0.081, 0.230) & 0.400 & 0.090 & $<0.001$ & (0.224, 0.577) \\ 
				Smoker & \multicolumn{2}{l}{Yes vs No} & 0.103 & 0.142 & 0.470 & (-0.176, 0.381) & 0.066 & 0.299 & 0.825 & (-0.519, 0.651) \\ 
				Histology & \multicolumn{2}{l}{AD vs LCC} & -0.259 & 0.076 & 0.001 & (-0.409, -0.11) & -0.467 & 0.294 & 0.112 & (-1.043, 0.109) \\ 
				& \multicolumn{2}{l}{SCC vs LCC} & 0.065 & 0.094 & 0.488 & (-0.120, 0.251) & -0.030 & 0.314 & 0.923 & (-0.646, 0.585) \\ 
				& \multicolumn{2}{l}{Unspecified vs LCC} & 0.046 & 0.132 & 0.729 & (-0.213, 0.304) & -0.119 & 0.384 & 0.756 & (-0.871, 0.633) \\ 
				Stage & \multicolumn{2}{l}{Late vs Early} & 0.352 & 0.081 & $<0.001$ & (0.193, 0.510) & 0.553 & 0.190 & 0.004 & (0.180, 0.926) \\ 
				Surgery & \multicolumn{2}{l}{Yes vs No} & -1.102 & 0.085 & $<0.001$ & (-1.269, -0.936) & -2.115 & 0.226 & $<0.001$ & (-2.557, -1.672) \\ 
				Chemotherapy & \multicolumn{2}{l}{Yes vs No} & 0.025 & 0.078 & 0.753 & (-0.128, 0.177) & -0.239 & 0.220 & 0.278 & (-0.671, 0.193) \\ 
				Radiation & \multicolumn{2}{l}{Yes vs No} & 0.047 & 0.077 & 0.548 & (-0.105, 0.198) & 0.248 & 0.198 & 0.211 & (-0.140, 0.636) \\ 
				Treatment record & \multicolumn{2}{l}{Missing vs Not} & 0.099 & 0.176 & 0.573 & (-0.245, 0.443) & 0.347 & 0.428 & 0.417 & (-0.492, 1.186) \\ 
				\hline
				
				{SNP} & {Pos} & {Gene} & {Est} & {SE} & {P-value} & {95\% CI} &  {Est} & {SE} & {P-value}  & {95\% CI}  \\ 
				\hline
				AX-11672686 & 8:27324822 & \textit{CHRNA2} & 0.186 & 0.054 & 0.001 & (0.081, 0.291) & 0.185 & 0.402 & 0.645 & (-0.603, 0.973) \\ 
				AX-11673610 & 12:66762242 & \textit{GRIP1} & 0.313 & 0.092 & 0.001 & (0.133, 0.494) & 0.773 & 0.220 & $<0.001$ & (0.343, 1.203) \\ 
				AX-11264571 & 13:32906729 & \textit{BRCA2} & 0.206 & 0.061 & 0.001 & (0.086, 0.325) & 0.450 & 0.164 & 0.006 & (0.129, 0.772) \\ 
				AX-40031129 & 16:3860539 & \textit{CREBBP} & -0.566 & 0.242 & 0.019 & (-1.040, -0.092) & -1.504 & 0.623 & 0.016 & (-2.726, -0.282) \\ 
				AX-11235551 & 16:3832471 & \textit{CREBBP} & -0.130 & 0.057 & 0.022 & (-0.242, -0.019) & -0.495 & 0.309 & 0.110 & (-1.101, 0.112) \\ 
				AX-11639833 & 5:88088439 & \textit{MEF2C} & -0.121 & 0.056 & 0.031 & (-0.231, -0.011) & -0.145 & 0.120 & 0.228 & (-0.381, 0.091) \\ 
				AX-11326149 & 15:78867482 & \textit{CHRNA5} & 0.102 & 0.051 & 0.046 & (0.002, 0.202) & 1.273 & 0.366 & 0.001 & (0.555, 1.991) \\ 
				AX-11376755 & 21:16340289 & \textit{NRIP1} & -0.101 & 0.052 & 0.052 & (-0.202, 0.001) & -0.281 & 0.120 & 0.019 & (-0.516, -0.046) \\ 
				AX-40181207 & 17:41218805 & \textit{BRCA1} & -0.524 & 0.272 & 0.054 & (-1.056, 0.009) & -2.386 & 0.750 & 0.001 & (-3.856, -0.916) \\ 
				AX-30854303 & 12:66761377 & \textit{GRIP1} & 0.094 & 0.054 & 0.081 & (-0.011, 0.199) & 0.102 & 0.117 & 0.380 & (-0.126, 0.331) \\ 
				$\vdots$ & & & & & & & & & & \\
				\hline
			\end{tabular}
			\begin{tablenotes}
				\item Est: coefficient estimate; SE: standard error estimate;  CI: confidence interval; HS: high school; AD: Adenocarcinoma; SCC: squamous cell carcinoma; LCC: large cell carcinoma; Pos: physical location based on Assembly GRCh37/hg19.
			\end{tablenotes}
		\end{threeparttable}
		\label{tab:est2}
	\end{table}
\end{landscape}

Among various patient characteristics, QP found that the adenocarcinoma subtype is significantly associated with better patient survival than large cell carcinoma, consistent with the results of \cite{janssen2001trends}, which was, however, not detected by  MPLE.  QP further identified that AX-11672686 in \textit{CHRNA2}, AX-11673610 in \textit{GRIP2} and AX-11264571 in \textit{BRCA2} are the three most significant SNPs associated with NSCLC patient survival, after adjusting for all the other demographic and genetic risk factors. Interestingly,  AX-11672686 was found to be associated with nicotine dependence by \cite{wang2014significant}. 
 AX-11264571 has been found to be associated with breast cancer \citep{qiu2010brca2} and may also be associated with lung cancer susceptibility, although not achieving genome-wide significance in \citet{yu2011analysis}. 
 AX-11673610 or \textit{GRIP1} seems to be a new finding as, to our knowledge, they  have yet been reported in the lung cancer literature  \citep{bosse2018decade}

To understand the impact of the socioeconomic status on cancer survival, we test for the association between   education level (no high school, high school, or  at least 1--2 years of college) and lung cancer patient survival. With a loading matrix  $A_{2\times p}= (e_2, e_3)^T$ corresponding to the contrast of the effects of high school graduate and at least 1--2 years of college with the reference level of no high school, the test statistic  is 0.259 with a p-value of 0.879, suggesting no statistical evidence for the association between education level and NSCLC patient survival, after adjusting all other  demographic characteristics and genetic  markers. {The results confirm a large-scale clinical trial on lung cancer patients which reported  ``education level was not predictive of survival" 
\citep{herndon2008patient}.}

In summary, these results illustrate the utility of our method  in providing reliable inference for scientific discovery and interpretation, while more in-depth biological investigations are warranted to validate our findings.

\section{Concluding remarks}
\label{sec:conclusion}

{We have proposed a debiased lasso approach for reliable estimation and inference in the Cox proportional hazards model when $p<n$ but is allowed to diverge to
$\infty$ with $n$. Unlike existing methods  \citep{fang2017testing, yu2018confidence, kong2018high}, we resort a quadratic programming procedure to estimate the inverse information matrix, without imposing an unrealistic sparsity assumption on it.
The proposed debiased lasso estimator is asymptotically unbiased and normally distributed under  mild regularity conditions.
{Our simulations demonstrate that, when $p$ is very small, the proposed method behaves similarly to the conventional MPLE; when $p$ is relatively large, it outperforms the competitors in  bias correction and  confidence interval coverage.}}

Lastly, we  touch upon the important issue of drawing inference with
$p>n$, though not a main focus of this paper. First, several methods \citep{fang2017testing,yu2018confidence,kong2018high} had been developed for handling ``$p>n$" inference problems; however, our analytical and simulation studies have pinpointed their possible limitations in providing sufficient bias correction and reliable confidence intervals even within the ``large $n$, diverging $p$" framework, likely due to the  sparsity assumptions on the inverse information matrix that may not hold in survival settings. One possible solution, by going beyond the de-biased lasso framework, is to perform repeated data splitting for model selection and estimation on two separate parts of the data and smooth the resulting estimates from multiple splits; see \citet{fei2021estimation} for inference on high dimensional generalized linear models. The validity of the method hinges upon the  sure screening property for the initial model selection, and we will explore its use in a survival setting in the future.

\section*{Acknowledgements}
We thank David C. Christiani, Qianyu Yuan and Mulong Du for sharing and discussing the BLCSC data. 
This work was supported in part by grants from the  National Institutes of Health 
(grant number: R01AG056764, R01CA249096, U01CA209414)
and the National Science Foundation (grant number: DMS 1915711).




\bibliographystyle{apalike}
\bibliography{sample}

\section*{Appendix}

\renewcommand{\theequation}{A\arabic{equation}}
\setcounter{equation}{0}

\setcounter{theorem}{0}
\renewcommand{\thetheorem}{A\arabic{theorem}}

We first present the useful lemmas for proving the main theorems, with detailed proofs deferred to the online supplementary materials. Some of these lemmas present important results in their own right. The proofs of the Theorem \ref{thm:main} and Theorem \ref{thm:simul} are presented following the lemmas.

Additional notation from counting processes and martingale theory is defined  for the proofs. Under the Cox model, define the counting process $N_i(t) = 1(Y_i \le t, \delta_i = 1)$ and its compensator $A_i(t; \beta) = \int_0^t 1 (Y_i \ge s) \exp(X_i^T \beta) d H_0(s)$, where $H_0(t) = \int_{0}^{t} h_0(s) ds$ is the cumulative baseline hazard function, $i = 1, \cdots, n$. Let $M_i(t; \beta) = N_i(t) - A_i(t; \beta)$, and $M_i(t; \beta^0)$ is a martingale with respect to the filtration $\mathcal{F}_i(t) = \sigma \{ N_i(s), 1(Y_i \ge s), X_i: s \in (0, t] \}$. It follows that 
$\heta_n(t; {\beta)}$, and in particular, $\heta_n(t; {\beta^0)}$,  is predictable with respect to the filtration $\mathcal{F}(t) = \sigma \{ N_i(s), 1(Y_i \ge s), X_i: s \in (0, t], i=1, \cdots, n \}$, an observation useful for
derivations. Notation-wise, we do not distinguish between the usual expectation and the outer expectation.


Lemma \ref{lemma:mom} below characterizes the difference between $\widehat{\eta}_n(t; \beta^0)$ and $\eta_0(t; \beta^0)$, which {facilitates the proof} of the asymptotic distribution for the leading term $\sqrt{n} c^T \Theta_{\beta^0} \dot{\ell}_n(\beta^0)$ as well as  {the establishment} of the convergence rate for $\hSigma - \Sigma_{\beta^0}$. 

\begin{lemma} \label{lemma:mom}
	Under Assumptions 1--3, we have
	\begin{align*}
	& \sup_{t \in [0, \tau]} | \hmu_0(t; \beta^0) - \mu_0(t; \beta^0) |  = \OP(\sqrt{\log(p) / n}), \\
	& \sup_{t \in [0, \tau]} \| \hmu_1(t; \beta^0) - \mu_1(t; \beta^0) \|_{\infty}  = \OP(\sqrt{\log(p)/n}), \\
	& \sup_{t \in [0, \tau]} \| \heta_n(t; \beta^0) - \eta_0(t; \beta^0)  \|_{\infty}  = \OP(\sqrt{\log(p)/n}).
	\end{align*}
\end{lemma}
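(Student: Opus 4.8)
The plan is to read all three displays as uniform laws of large numbers for empirical averages of bounded, i.i.d. summands evaluated at the \emph{fixed} truth $\beta^0$, and to reduce each to a tail bound for an empirical process indexed by the threshold $t$. For a fixed coordinate $l$ and fixed $t$, Assumptions 1 and 2 give $|1(Y_j \ge t) X_j^{(l)} \exp(X_j^T\beta^0)| \le K e^{K_1}$, so that every summand is uniformly bounded and pointwise concentration is immediate from Hoeffding's inequality. The real content is to upgrade this to uniformity over $t \in [0,\tau]$ and, for the vector $\hmu_1$, over the $p$ coordinates. The structural fact I would exploit is that $\{\, t \mapsto 1(Y \ge t) : t \in [0,\tau]\,\}$ is a uniformly bounded monotone family, hence a VC (bounded-uniform-entropy) class, so the centered process inherits a sub-Gaussian-type maximal tail.

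For each $l$, write the centered empirical process
\[
G_n^{(l)}(t) = \hmu_1^{(l)}(t;\beta^0) - \mu_1^{(l)}(t;\beta^0) = \frac{1}{n}\sum_{j=1}^n \Big\{ 1(Y_j\ge t)X_j^{(l)} e^{X_j^T\beta^0} - \mu_1^{(l)}(t;\beta^0)\Big\}.
\]
Applying a maximal/Talagrand-type concentration inequality for this VC class of bounded functions, I expect a bound of the form $\PP(\sup_{t\in[0,\tau]} |G_n^{(l)}(t)| > x) \le C_1 \exp(-C_2 n x^2)$ up to a threshold on $x$, with $C_1,C_2$ depending only on $K$ and $K_1$. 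Choosing $x \asymp \sqrt{\log(p)/n}$ makes the right-hand side summable in $p$, and a union bound over $l = 1,\dots,p$ yields $\sup_t \| \hmu_1(t;\beta^0)-\mu_1(t;\beta^0)\|_\infty = \OP(\sqrt{\log(p)/n})$. The scalar statement for $\hmu_0$ is the single-coordinate version of the same bound (with $X_j^{(l)}$ replaced by $1$), which in fact gives the faster rate $\OP(n^{-1/2})$; since $\log(p)\to\infty$ this is dominated by, and hence reported as, $\OP(\sqrt{\log(p)/n})$.

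For the third display I would control the denominator away from zero. Since $\mu_0(\cdot;\beta^0)$ is non-increasing and $\mu_0(\tau;\beta^0) = E[1(Y\ge\tau)e^{X^T\beta^0}] \ge e^{-K_1}\pi_0 > 0$ by Assumptions 2 and 3, we have $\inf_{t\in[0,\tau]}\mu_0(t;\beta^0) \ge e^{-K_1}\pi_0$. Combined with $\sup_t|\hmu_0-\mu_0| = \oP(1)$ from the first display, this forces $\inf_t \hmu_0(t;\beta^0) \ge e^{-K_1}\pi_0/2$ with probability tending to one, while $\sup_t\|\mu_1(t;\beta^0)\|_\infty \le K e^{K_1}$. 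I then use the elementary decomposition
\[
\heta_n(t;\beta^0) - \eta_0(t;\beta^0) = \frac{\hmu_1 - \mu_1}{\hmu_0} - \frac{\mu_1\,(\hmu_0-\mu_0)}{\hmu_0\,\mu_0},
\]
with all quantities evaluated at $(t;\beta^0)$, take $\|\cdot\|_\infty$, and bound each factor using the rates already established together with the uniform lower bound on the denominators, giving $\sup_t\|\heta_n-\eta_0\|_\infty = \OP(\sqrt{\log(p)/n})$.

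The main obstacle is the passage from pointwise to uniform-in-$t$ control: the clean $\sqrt{\log(p)/n}$ rate requires the empirical-process/VC argument above rather than a crude grid-and-union bound, which would cost an extra $\log n$ factor and, since $p<n$ here, would be too lossy. A workable alternative, should sharp entropy bounds be awkward to invoke directly, is to exploit that $\hmu_0$ and each coordinate process are piecewise constant with jumps only at the order statistics of $\{Y_j\}$ while $\mu_r(\cdot;\beta^0)$ is monotone, reducing the supremum to a maximum over $O(n)$ points and handling the interpolation error by monotonicity; one must then check that this route still delivers the stated rate. The remaining ingredients---boundedness from Assumptions 1--2 and the denominator lower bound from Assumption 3---are routine.
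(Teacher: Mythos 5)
Your proposal is correct and follows essentially the same route as the paper: the paper also obtains the uniform-in-$t$ sub-Gaussian tail $\PP(\sqrt{n}\sup_t|\hmu_{1k}-\mu_{1k}|>s)\lesssim e^{-s^2}$ from a low-complexity function-class maximal inequality (it uses explicit brackets for the monotone class $\{1(y\ge t)e^{x^T\beta^0}\}$ and Theorem 2.14.9 of van der Vaart and Wellner rather than a VC/Talagrand bound, but this is a technical variant of the same idea), then takes a union bound over the $p$ coordinates with $s=\sqrt{2\log p}$. Your ratio decomposition of $\heta_n-\eta_0$ differs algebraically from the paper's (which splits off $\frac{\hmu_1}{\mu_0}(\frac{\mu_0}{\hmu_0}-1)$) but is equivalent, and both rely on the same lower bound $\inf_t\hmu_0\ge e^{-K_1}\pi_0/2$ eventually, from Assumptions 2--3.
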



Lemma \ref{lemma:lead} establishes the asymptotic distribution for the leading term $- c^T \Thetabeta \dot{\ell}_n(\beta^0)$ in the decomposition of $c^T (\widehat{b} - \beta^0)$.

\begin{lemma} \label{lemma:lead}
	Assume $p^2 \log(p) / n \rightarrow 0$. Under Assumptions 1--5, for any $c \in \mathbb{R}^p$ such that $\| c \|_2 = 1$ and $\| c \|_1 \le a_*$ with some absolute constant $a_* < \infty$,
	\[
	\displaystyle \frac{ \sqrt{n} c^T \Thetabeta \dot{\ell}_n(\beta^0)}{\sqrt{c^T \Thetabeta c}} \overset{\mathcal{D}}{\rightarrow} N(0,1).
	\]
\end{lemma}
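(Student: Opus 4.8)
The plan is to recast $-c^T\Theta_{\beta^0}\dell_n(\beta^0)$ as the terminal value of a sum of stochastic integrals against the counting-process martingales $M_i(\cdot;\beta^0)$ and then invoke Rebolledo's martingale central limit theorem. Using $\delta_i=N_i(\tau)$ and $dN_i=dM_i(\cdot;\beta^0)+dA_i(\cdot;\beta^0)$, the score at the truth becomes $\dell_n(\beta^0)=-n^{-1}\sum_i\int_0^\tau\{X_i-\heta_n(s;\beta^0)\}\,dN_i(s)$, and the compensator contributions cancel because $\sum_i\{X_i-\heta_n(s;\beta^0)\}1(Y_i\ge s)\exp(X_i^T\beta^0)=0$ by the very definition of $\heta_n$. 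Hence $\sqrt{n}\,(-c^T\Theta_{\beta^0}\dell_n(\beta^0))=U_n(\tau)$, where $U_n(t)=n^{-1/2}\sum_i\int_0^t c^T\Theta_{\beta^0}\{X_i-\heta_n(s;\beta^0)\}\,dM_i(s;\beta^0)$ is a mean-zero locally square-integrable martingale in $t$ with respect to $\mathcal{F}(t)$, since $\heta_n(\cdot;\beta^0)$ is predictable.

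Because the $M_i$ are orthogonal with $d\langle M_i\rangle(s)=1(Y_i\ge s)\exp(X_i^T\beta^0)\,dH_0(s)$ and the integrand is predictable, the predictable variation is
\[
\langle U_n\rangle(t)=\frac1n\sum_{i=1}^n\int_0^t\big[c^T\Theta_{\beta^0}\{X_i-\heta_n(s;\beta^0)\}\big]^2 1(Y_i\ge s)\exp(X_i^T\beta^0)\,dH_0(s).
\]
First I would replace $\heta_n$ by its population version $\eta_0$, controlling the error through Lemma \ref{lemma:mom} together with the bounds $|c^T\Theta_{\beta^0}(X_i-\eta_0)|=O(\sqrt p)$ and $|c^T\Theta_{\beta^0}(\heta_n-\eta_0)|=O_P(\sqrt{p\log(p)/n})$, the former by Cauchy--Schwarz with $\|\Theta_{\beta^0}^Tc\|_2\le 1/\zeta_{\min}$ and the latter from $\|\Theta_{\beta^0}^Tc\|_1\le\sqrt p/\zeta_{\min}$; the resulting cross term is $O_P(\sqrt{p^2\log(p)/n})=o_P(1)$ under the stated rate. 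After the replacement, $\langle U_n\rangle(t)$ is an average of i.i.d. terms each bounded by $O(p)$ (via Assumptions 1--3), so Chebyshev's inequality (using $p^2/n\to0$) gives convergence in probability to its expectation; expanding $\{X-\eta_0\}^{\otimes2}$ and using $\eta_0=\mu_1/\mu_0$ identifies this expectation as $c^T\Theta_{\beta^0}\widetilde{\Sigma}_{\beta^0}(t)\Theta_{\beta^0}c$. Dividing $U_n$ by $(c^T\Theta_{\beta^0}c)^{1/2}$ to form $W_n$ and appealing to \textit{Assumption 4} then yields $\langle W_n\rangle(t)\overset{P}{\to}v(t;c)$, with $v(\tau;c)=1$ since $\widetilde{\Sigma}_{\beta^0}(\tau)=\Sigma_{\beta^0}=\Theta_{\beta^0}^{-1}$.

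It remains to verify the Lindeberg/jump condition of Rebolledo's theorem. The jumps of $W_n$ occur at the observed event times and are bounded in absolute value by $n^{-1/2}\sup_i|c^T\Theta_{\beta^0}\{X_i-\heta_n(Y_i;\beta^0)\}|/(c^T\Theta_{\beta^0}c)^{1/2}$; using $\|X_i\|_\infty,\|\heta_n\|_\infty\le K$, \textit{Assumption 5}, and $\|c\|_2=1$, this is $O(\sqrt{p/n})$, which tends to $0$ since $p^2\log(p)/n\to0$. Thus for every $\varepsilon>0$ the maximal jump is eventually below $\varepsilon$, so the predictable variation of the $\varepsilon$-large jumps vanishes. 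Rebolledo's theorem then gives $W_n(\tau)\overset{\mathcal{D}}{\rightarrow}N(0,v(\tau;c))=N(0,1)$, and since the standard normal is symmetric this is exactly the claim for $\sqrt{n}\,c^T\Theta_{\beta^0}\dell_n(\beta^0)/(c^T\Theta_{\beta^0}c)^{1/2}$.

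The main obstacle is the diverging-dimension bookkeeping in the predictable-variation step: $\heta_n$ aggregates all subjects, so $\langle U_n\rangle(t)$ is not a clean i.i.d. average, and the loading $c$ interacts with $\Theta_{\beta^0}$ to produce error terms scaling with powers of $\sqrt p$. Tracking these carefully --- and recognizing that the hypothesis $p^2\log(p)/n\to0$ is precisely what forces both the $\heta_n\to\eta_0$ replacement error and the variance of the i.i.d.\ approximation to vanish --- is the crux; the martingale cancellation and the jump condition are comparatively routine.
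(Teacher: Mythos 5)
Your proposal is correct and follows essentially the same route as the paper's proof: the martingale representation of the score (with the compensator cancelling by the definition of $\heta_n$), convergence of the predictable variation to $v(t;c)$ via \textit{Assumption 4} under the rate $p^2\log(p)/n\to 0$, and the Lindeberg condition handled through the $O(\sqrt{p/n})$ bound on the jumps before invoking the martingale CLT. The only cosmetic difference is in the predictable-variation step: the paper identifies $\langle U\rangle(t)$ exactly as a quadratic form in $\hmu_2-\hmu_1\hmu_1^T/\hmu_0$ and bounds its elementwise distance from the population version using $\|c\|_1\|\Thetabeta\|_{1,1}$, whereas you first replace $\heta_n$ by $\eta_0$ and then apply Chebyshev to the resulting i.i.d.\ average; both yield the same $O_P(p\sqrt{\log(p)/n})$ error and hence the same rate requirement.
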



Lemma \ref{lemma:lasso} provides theoretical properties of the lasso estimator in the Cox model. This is a direct result from Theorem 1 in \citet{kong2014non}, and thus the proof is omitted.

\begin{lemma} \label{lemma:lasso}
	Under Assumptions 1--5, for the lasso estimator $\hbeta$, we have
	\[
	\|  \hbeta - \beta^0 \|_1 = \OP(s_0 \lambda_n), \quad \frac{1}{n} \sum_{i=1}^n |X_i^T (\hbeta - \beta^0)|^2 = \OP(s_0 \lambda_n^2),
	\]
	where $s_0 = |\{j: \beta_j^0 \ne 0, j=1, \cdots, p \}|$ is the true model size.
\end{lemma}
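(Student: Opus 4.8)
The plan is to obtain both bounds from the standard lasso ``basic inequality'' machinery adapted to the Cox partial likelihood, which is exactly the route of \citet{kong2014non}; the real task reduces to verifying that Assumptions 1--5 supply the regularity conditions their Theorem~1 requires and then reading off the stated rates. First I would record the \emph{basic inequality}: since $\hbeta$ minimizes $\ell_n(\beta) + \lambda_n\|\beta\|_1$, comparing its objective value to that at $\beta^0$ and rearranging gives, with $\widehat\Delta = \hbeta - \beta^0$ and $S = \{j : \beta^0_j \ne 0\}$,
\[
\big(\ell_n(\hbeta) - \ell_n(\beta^0) - \dell_n(\beta^0)^T\widehat\Delta\big) + \lambda_n\|\hbeta\|_1 \le -\dell_n(\beta^0)^T\widehat\Delta + \lambda_n\|\beta^0\|_1,
\]
where the parenthesized term is the Bregman divergence $D(\hbeta,\beta^0)$, nonnegative by convexity of $\ell_n$.

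Next I would control the score. Under Assumptions 1--3 the coordinates of $\dell_n(\beta^0)$ are averages of bounded martingale increments built from the counting processes $N_i$ and their compensators, so a Bernstein-type concentration together with a union bound over the $p$ coordinates yields $\|\dell_n(\beta^0)\|_\infty = \OP(\sqrt{\log(p)/n})$; the rate $\lambda_n \asymp \sqrt{\log(p)/n}$ is calibrated precisely so that the event $\mathcal{A} = \{\|\dell_n(\beta^0)\|_\infty \le \lambda_n/2\}$ has probability tending to one. On $\mathcal{A}$, bounding $|\dell_n(\beta^0)^T\widehat\Delta| \le \tfrac{\lambda_n}{2}\|\widehat\Delta\|_1$ and splitting the $\ell_1$ norms across $S$ and its complement turns the basic inequality into $D(\hbeta,\beta^0) + \tfrac{\lambda_n}{2}\|\widehat\Delta_{S^c}\|_1 \le \tfrac{3\lambda_n}{2}\|\widehat\Delta_S\|_1$, which delivers the cone condition $\|\widehat\Delta_{S^c}\|_1 \le 3\|\widehat\Delta_S\|_1$.

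The crux is then a quadratic lower bound for $D(\hbeta,\beta^0)$ valid on this cone. Because the partial likelihood is not quadratic and $\ddell_n$ involves the risk-set ratios $\heta_n$ rather than i.i.d.\ summands, I would first show the empirical Hessian is uniformly close to $\Sigma_{\beta^0}$ (using Lemma~\ref{lemma:mom} to handle $\heta_n - \eta_0$), then invoke Assumption~5 to get a population restricted-eigenvalue/compatibility constant bounded below, and finally use a self-concordance-type control of the third-order remainder of $\ell_n$ to transfer this curvature to a neighborhood of $\beta^0$, giving $D(\hbeta,\beta^0) \ge c\,\frac{1}{n}\sum_i |X_i^T\widehat\Delta|^2 \ge c'\|\widehat\Delta_S\|_2^2$ on the cone. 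Combining this with the cone condition and $\|\widehat\Delta_S\|_1 \le \sqrt{s_0}\,\|\widehat\Delta_S\|_2$ yields $\|\widehat\Delta\|_1 = \OP(s_0\lambda_n)$ and the prediction-error bound $\frac{1}{n}\sum_i|X_i^T\widehat\Delta|^2 = \OP(s_0\lambda_n^2)$ by a standard quadratic-in-$\|\widehat\Delta_S\|_2$ argument.

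I expect this last step---establishing the restricted-eigenvalue/compatibility condition with the correct curvature for the Cox partial likelihood---to be the main obstacle, since it must simultaneously reconcile the non-i.i.d.\ structure of $\ddell_n$ with the population matrix $\Sigma_{\beta^0}$ and control the nonlinearity of $\ell_n$ over a growing-dimension cone. This is precisely the technical content of \citet{kong2014non}, so in practice I would not redo it but instead verify that Assumptions 1--5 (bounded covariates, bounded linear predictor, positive at-risk probability at $\tau$, and bounded eigenvalues of $\Sigma_{\beta^0}$) match their hypotheses and cite their Theorem~1 for the conclusion.
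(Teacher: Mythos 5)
Your proposal is correct and ends up exactly where the paper does: the paper gives no proof of this lemma, stating only that it is a direct consequence of Theorem~1 of \citet{kong2014non}, which is precisely the citation you fall back on after sketching the standard basic-inequality/cone/compatibility argument. The sketch itself is a faithful outline of what that cited theorem establishes, so there is nothing to add.
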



\begin{lemma} \label{lemma:const_sol}
	Under Assumptions 1--5, if $\lambda_n \asymp \sqrt{\log(p)/n}$, with probability going to 1, we have  $\| \Thetabeta \widehat{\Sigma} - I_p \|_{\infty} \le \gamma_n$, for $\gamma_n \asymp \| \Thetabeta \|_{1,1} s_0 \lambda_n$. 
\end{lemma}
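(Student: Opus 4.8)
The goal is to certify that, with probability tending to one, each row of $\Thetabeta$ is feasible for the corresponding quadratic program \eqref{eq:qp}; since $\Thetabeta$ is symmetric and the element-wise max norm is invariant under transposition, this is precisely the asserted bound on $\| \Thetabeta \hSigma - I_p \|_{\infty}$. First I would exploit the identity $\Thetabeta \Sigma_{\beta^0} = I_p$ to write $\Thetabeta \hSigma - I_p = \Thetabeta ( \hSigma - \Sigma_{\beta^0} )$. Applying the inequality $\| AB \|_{\infty} \le \| A \|_{\infty,\infty} \| B \|_{\infty}$ together with the symmetry of $\Thetabeta$ (so that $\| \Thetabeta \|_{\infty,\infty} = \| \Thetabeta \|_{1,1}$) yields
\[
\| \Thetabeta \hSigma - I_p \|_{\infty} \le \| \Thetabeta \|_{1,1} \, \| \hSigma - \Sigma_{\beta^0} \|_{\infty}.
\]
Thus the whole lemma reduces to establishing the rate $\| \hSigma - \Sigma_{\beta^0} \|_{\infty} = \OP(s_0 \lambda_n)$, after which choosing the constant in $\gamma_n \asymp \| \Thetabeta \|_{1,1} s_0 \lambda_n$ large enough makes the event $\{ \| \Thetabeta \hSigma - I_p \|_{\infty} \le \gamma_n \}$ have probability approaching one.

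To control $\| \hSigma - \Sigma_{\beta^0} \|_{\infty}$ I would split it through two intermediate quantities, $\hSigma(\beta^0) = n^{-1} \sum_i \delta_i \{ X_i - \heta_n(Y_i; \beta^0) \}^{\otimes 2}$ and $\widetilde{\Sigma} = n^{-1} \sum_i \delta_i \{ X_i - \eta_0(Y_i; \beta^0) \}^{\otimes 2}$, giving
\[
\hSigma - \Sigma_{\beta^0} = \{ \hSigma(\hbeta) - \hSigma(\beta^0) \} + \{ \hSigma(\beta^0) - \widetilde{\Sigma} \} + \{ \widetilde{\Sigma} - \Sigma_{\beta^0} \}.
\]
The last term is an average of i.i.d. bounded matrices minus its mean; under Assumptions 1--3 a Hoeffding/maximal-inequality argument over the $p^2$ entries gives $\| \widetilde{\Sigma} - \Sigma_{\beta^0} \|_{\infty} = \OP(\sqrt{\log(p)/n}) = \OP(\lambda_n)$. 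The middle term is linear and quadratic in $\heta_n(\cdot;\beta^0) - \eta_0(\cdot;\beta^0)$, and since $X_i$ and $\eta_0$ are uniformly bounded (Assumptions 1--2), it is of order $\sup_t \| \heta_n(t;\beta^0) - \eta_0(t;\beta^0) \|_{\infty} = \OP(\sqrt{\log(p)/n})$ by Lemma \ref{lemma:mom}. Both are of smaller order than $s_0 \lambda_n$.

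The first term is where the estimation error of the lasso enters and is the main obstacle, precisely because $\heta_n(\cdot; \hbeta)$ depends on all subjects, so $\hSigma$ is not a sum of i.i.d. terms. Here I would bound $\heta_n^{(a)}(t; \hbeta) - \heta_n^{(a)}(t; \beta^0)$ by a mean-value expansion in $\beta$: the derivative $\partial \heta_n^{(a)}/\partial \beta_b$ equals the risk-set weighted covariance between $X^{(a)}$ and $X^{(b)}$, which is bounded by $K^2$ under Assumption 1, so $| \heta_n^{(a)}(t; \hbeta) - \heta_n^{(a)}(t; \beta^0) | \le C \| \hbeta - \beta^0 \|_1$ uniformly in $t$ and $a$. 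Expanding the squared terms and using boundedness of $X_i$ and $\heta_n$ then gives $\| \hSigma(\hbeta) - \hSigma(\beta^0) \|_{\infty} = \OP(\| \hbeta - \beta^0 \|_1) = \OP(s_0 \lambda_n)$ by the $\ell_1$-rate in Lemma \ref{lemma:lasso}. Collecting the three bounds, the first term dominates, so $\| \hSigma - \Sigma_{\beta^0} \|_{\infty} = \OP(s_0 \lambda_n)$, which combined with the reduction of the first paragraph completes the argument. The delicate point throughout is that the uniform-in-$t$ control of the weighted averages and their $\beta$-derivatives must be handled carefully because of the coupling across subjects, which is exactly what Lemma \ref{lemma:mom} and the boundedness assumptions are designed to supply.
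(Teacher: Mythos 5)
Your proposal is correct and follows essentially the same route as the paper: reduce to $\| \hSigma - \Sigma_{\beta^0} \|_{\infty}$ via $\| \Thetabeta \|_{1,1}$, then control that quantity by a three-way split handled respectively with a mean-value expansion of $\heta_n$ in $\beta$ plus the lasso $\ell_1$-rate, the uniform bound $\sup_t \| \heta_n(t;\beta^0) - \eta_0(t;\beta^0) \|_{\infty} = \OP(\sqrt{\log(p)/n})$ from Lemma \ref{lemma:mom}, and Hoeffding's inequality over the $p^2$ entries of the centered i.i.d.\ average. The only cosmetic difference is that you difference the full matrices $\hSigma(\hbeta)-\hSigma(\beta^0)$ and $\hSigma(\beta^0)-\widetilde{\Sigma}$ while the paper factors out $\heta_n(t;\hbeta)-\eta_0(t;\beta^0)$ from the cross terms before splitting, which is equivalent.
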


Lemma \ref{lemma:const_sol} shows that, unlike in a linear regression model where the tuning parameter in the constraint  takes the order of $\sqrt{\log(p) / n}$, the Cox model requires a potentially larger $\gamma_n$ for the feasibility of $\Thetabeta$ depending on $\| \Theta_{\beta^0} \|_{1,1}$, because the information matrix involves the regression coefficients.


\begin{lemma} \label{lemma:diff_theta}
	Assume $\limsup_{n \rightarrow \infty} p \gamma_n \le 1 - \epsilon^{\prime}$ for some $\epsilon^{\prime} \in (0,1)$. Then, under the assumptions in Lemma \ref{lemma:const_sol}, $\|  \widehat{\Theta} - \Theta_{\beta^0} \|_{\infty} = \OP(\gamma_n \| \Theta_{\beta^0} \|_{1,1})$.
\end{lemma}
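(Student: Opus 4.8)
The plan is to leverage Lemma \ref{lemma:const_sol} and exploit the fact that $\Thetabeta$ acts as an approximate inverse of $\hSigma$, so that a bound on the residual $\hSigma(\hat m_j - \theta_j)$ can be transferred to the difference $\hat m_j - \theta_j$ itself. Fix a column index $j$ and write $\hat m_j$ for the solution of the quadratic program \eqref{eq:qp} (the $j$th row of $\hTheta$, as a column vector) and $\theta_j$ for the $j$th row of $\Thetabeta$; note that both $\hSigma$ and $\Thetabeta$ are symmetric. Lemma \ref{lemma:const_sol} guarantees that, on an event $\Omega_n$ with $\PP(\Omega_n) \to 1$, we have $\| \Thetabeta \hSigma - I_p \|_{\infty} \le \gamma_n$. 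Reading this element-wise bound row by row shows that $\theta_j$ is feasible for \eqref{eq:qp}, which both ensures the program admits a solution and, combined with the feasibility of $\hat m_j$, yields $\| \hSigma(\hat m_j - \theta_j) \|_{\infty} \le 2\gamma_n$ by the triangle inequality. Set $\Delta_j = \hat m_j - \theta_j$.

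The key algebraic step is the identity $\Delta_j = \Thetabeta(\hSigma \Delta_j) - (\Thetabeta \hSigma - I_p)\Delta_j$, obtained by writing $\Thetabeta \hSigma = I_p + (\Thetabeta \hSigma - I_p)$ and rearranging. Taking the element-wise max norm and using $\| A v \|_{\infty} \le \| A \|_{\infty,\infty} \| v \|_{\infty}$ together with the symmetry identity $\| \Thetabeta \|_{\infty,\infty} = \| \Thetabeta \|_{1,1}$, I would bound the first term by $2\gamma_n \| \Thetabeta \|_{1,1}$. For the second term I would pass from the element-wise bound to the induced norm via $\| \Thetabeta \hSigma - I_p \|_{\infty,\infty} \le p \| \Thetabeta \hSigma - I_p \|_{\infty} \le p\gamma_n$ on $\Omega_n$, giving $\| (\Thetabeta \hSigma - I_p)\Delta_j \|_{\infty} \le p\gamma_n \| \Delta_j \|_{\infty}$. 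Hence, on $\Omega_n$, $\| \Delta_j \|_{\infty} \le 2\gamma_n \| \Thetabeta \|_{1,1} + p\gamma_n \| \Delta_j \|_{\infty}$.

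Finally I would absorb the self-referential term: since $\limsup_n p\gamma_n \le 1 - \epsilon^{\prime}$, for all large $n$ we have $1 - p\gamma_n \ge \epsilon^{\prime}/2 > 0$, so rearranging gives $\| \Delta_j \|_{\infty} \le 2\gamma_n \| \Thetabeta \|_{1,1} / (1 - p\gamma_n) \le (4/\epsilon^{\prime}) \gamma_n \| \Thetabeta \|_{1,1}$. This bound is uniform in $j$ because $\Omega_n$ is a single event that simultaneously controls all rows, so taking the maximum over $j = 1, \ldots, p$ and recalling $\PP(\Omega_n) \to 1$ yields $\| \hTheta - \Thetabeta \|_{\infty} = \OP(\gamma_n \| \Thetabeta \|_{1,1})$.

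The hard part will be the self-bounding in the last step, which is exactly where the hypothesis $\limsup_n p\gamma_n \le 1 - \epsilon^{\prime}$ is consumed. Converting the easy constraint bound $\| \hSigma \Delta_j \|_{\infty} \le 2\gamma_n$ into a genuine bound on $\Delta_j$ forces one to treat $\Thetabeta$ as an approximate inverse of $\hSigma$, and the crude passage from the element-wise max norm to the induced $\| \cdot \|_{\infty,\infty}$ norm costs a factor of $p$. Keeping the resulting coefficient $p\gamma_n$ strictly below $1$ is precisely what makes the rearrangement legitimate and produces the stated rate. It is worth emphasizing that the optimality of $\hat m_j$ in \eqref{eq:qp} plays no role in this sup-norm bound; only the feasibility of $\hat m_j$ and of $\theta_j$ is needed.
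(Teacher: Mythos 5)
Your proof is correct and follows essentially the same route as the paper: both rely on Lemma \ref{lemma:const_sol} to make $\Thetabeta$ feasible, use an approximate-inverse decomposition of $\hTheta - \Thetabeta$, pay a factor of $p$ in passing from the element-wise to the induced norm, and absorb the resulting self-referential term using $\limsup_n p\gamma_n \le 1-\epsilon^{\prime}$, arriving at the same bound $2\gamma_n\|\Thetabeta\|_{1,1}/(1-p\gamma_n)$. The only cosmetic difference is that you work row by row with the identity $\Delta_j = \Thetabeta(\hSigma\Delta_j) - (\Thetabeta\hSigma - I_p)\Delta_j$, whereas the paper uses the matrix identity $\hTheta - \Thetabeta = \hTheta(I_p - \hSigma\Thetabeta) + (\hTheta\hSigma - I_p)\Thetabeta$ and places the self-reference in $\|\hTheta\|_{\infty,\infty}$ instead.
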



\begin{lemma} \label{lemma:max_score}
	Under Assumptions 1--3 and 5, for each $t > 0$, 
	\[
	\PP \left( \|  \dot{\ell}_n(\beta^0) \|_{\infty} > t  \right) \le 2 p e^{-nt^2 / (8 K^2)}.
	\]
\end{lemma}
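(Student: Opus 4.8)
The plan is to prove the bound one coordinate at a time and then take a union bound over the $p$ coordinates. First I would recast the $j$-th component of the score at $\beta^0$ as a counting-process martingale integral. Using the identity $\sum_{i}\{X_i-\heta_n(s;\beta^0)\}\,1(Y_i\ge s)\exp(X_i^T\beta^0)=0$, which holds for every $s$ directly from the definition $\heta_n=\hmu_1/\hmu_0$, the compensator contribution cancels exactly, so that for each $j=1,\dots,p$,
\[
[\dell_n(\beta^0)]_j=-\frac1n\sum_{i=1}^n\int_0^\tau\{X_i^{(j)}-\heta_n^{(j)}(s;\beta^0)\}\,dM_i(s;\beta^0).
\]
Since $\heta_n(s;\beta^0)$ is predictable with respect to $\mathcal{F}(s)$ (as noted in the appendix) and each $M_i(\cdot;\beta^0)$ is a martingale, the process $U_j(t):=-n^{-1}\sum_i\int_0^t\{X_i^{(j)}-\heta_n^{(j)}(s;\beta^0)\}\,dM_i(s;\beta^0)$ is a mean-zero local square-integrable martingale with $U_j(\tau)=[\dell_n(\beta^0)]_j$. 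Its integrand is uniformly bounded: because $\heta_n^{(j)}(s;\beta^0)$ is a convex combination of the $X_k^{(j)}$'s, Assumption 1 gives $|X_i^{(j)}-\heta_n^{(j)}(s;\beta^0)|\le 2K$, so the jumps of $U_j$ are bounded in absolute value by $2K/n$. (Assumptions 2, 3 and 5 enter here only to guarantee that $\heta_n$ is well defined, with denominator bounded away from $0$ on $[0,\tau]$, and that $U_j$ is genuinely square-integrable.)

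Second, I would control the tail of a single coordinate through an exponential inequality for bounded counting-process martingale transforms, and then sum over coordinates. Concretely, for $\theta\in\mathbb{R}$ I would form the Dol\'eans--Dade exponential supermartingale associated with $U_j$ and combine the bounded-integrand/$n^{-1}$-scaling with a Hoeffding-type bound on the per-jump conditional moment generating function to show $E[\exp(\theta U_j(\tau))]\le\exp(2K^2\theta^2/n)$; that is, $U_j(\tau)$ is sub-Gaussian with variance proxy $4K^2/n$. A Chernoff bound optimized over $\theta$ then yields $\PP(|[\dell_n(\beta^0)]_j|>t)\le 2\exp(-nt^2/(8K^2))$, and a union bound over $j=1,\dots,p$ gives
\[
\PP(\|\dell_n(\beta^0)\|_\infty>t)\le\sum_{j=1}^p\PP(|[\dell_n(\beta^0)]_j|>t)\le 2p\exp(-nt^2/(8K^2)),
\]
which is the claimed bound.

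The main obstacle is the middle step: extracting the \emph{exact} sub-Gaussian constant $4K^2/n$ from the continuous-time martingale. Unlike a sum of independent bounded summands, $U_j$ carries a continuous compensator part and its jumps are coupled through $\heta_n$, so one cannot directly invoke the classical Azuma--Hoeffding inequality. The care required is in verifying that the exponential process is a true supermartingale (so that optional stopping at $\tau$ is justified), and in bounding the compensator of $e^{\theta U_j}$ sharply enough---using only that each $N_i$ contributes a single jump and that the centered integrand lies in an interval of length at most $4K$---to recover the clean Hoeffding constant rather than a weaker Bernstein-type bound that would involve the random predictable variation $\langle U_j\rangle(\tau)$.
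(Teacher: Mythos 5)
Your skeleton is sound and is, in substance, the argument behind the result: the paper's own proof is a one-liner that notes $\|X_i-\heta_n(t;\beta^0)\|_\infty\le 2K$ and cites Lemma~3.3(ii) of Huang et al.\ (2013), so you are effectively reproving that external lemma. Your first step is correct and is the key structural observation: by the identity $\sum_i\{X_i-\heta_n(s;\beta^0)\}1(Y_i\ge s)e^{X_i^T\beta^0}=n\hmu_1-( \hmu_1/\hmu_0)\,n\hmu_0=0$, the compensator contribution vanishes exactly, the predictability of $\heta_n(\cdot;\beta^0)$ makes each coordinate a martingale integral with integrand bounded by $2K$, and the union bound over $j=1,\dots,p$ with a per-coordinate sub-Gaussian tail of variance proxy $4K^2/n$ reproduces the stated constant $2pe^{-nt^2/(8K^2)}$ exactly.

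The genuine gap is the step you yourself flag: the Dol\'eans--Dade exponential route, as you describe it, does not deliver the variance proxy $4K^2/n$. For a counting-process martingale $\int H\,dM$ the exponential supermartingale is $\exp\{\theta\int H\,dM-\sum_i\int(e^{\theta H_i}-1-\theta H_i)\,dA_i\}$, and bounding $e^x-1-x\le \tfrac{x^2}{2}e^{|x|}$ leaves you with the random quantity $\sum_i A_i(\tau)$, which is only almost surely bounded by $n e^{K_1}H_0(\tau)$; this yields a Bernstein-type bound with variance proxy of order $4K^2e^{K_1}H_0(\tau)/n$, i.e.\ a strictly weaker exponent unless $e^{K_1}H_0(\tau)\le 1$. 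To recover the clean Hoeffding constant you must use the consequence of your own cancellation identity that $U_j$ is a \emph{pure-jump} process, $U_j(\tau)=-n^{-1}\sum_i\delta_i\{X_i^{(j)}-\heta_n^{(j)}(Y_i;\beta^0)\}$, with at most $n$ jumps (each $N_i$ jumps at most once): indexing by the ordered failure times, and conditioning on the risk set at each failure, the identity of the failing subject is multinomial with probabilities proportional to $e^{X_l^T\beta^0}$, so the conditional mean of $X_{i_k}^{(j)}$ is exactly $\heta_n^{(j)}$ and the increments form a discrete-time martingale difference sequence taking values in an interval of length at most $4K/n$. Hoeffding--Azuma applied to at most $n$ such increments gives $2\exp\{-t^2/(2\cdot n\cdot(2K/n)^2)\}=2e^{-nt^2/(8K^2)}$ per coordinate. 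Without this discrete reduction (or an equivalent argument), the middle inequality $E[e^{\theta U_j(\tau)}]\le e^{2K^2\theta^2/n}$ is asserted rather than proved, and the mechanism you propose for it would fail to produce the stated constant.
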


Now we complete the proofs of Theorem \ref{thm:main} and Theorem \ref{thm:simul}.

\begin{proof}[\textit{Proof of Theorem \ref{thm:main}.}]

 The first order Taylor expansion of $\dot{\ell}_{nj}(\hbeta)$, the $j$th component in $\dot{\ell}_{n}(\hbeta)$, at $\beta^0$, is
\begin{equation} \label{eq:taylor}
\dot{\ell}_{nj} (\hbeta) = \dot{\ell}_{nj} (\beta^0) +  [\ddot{\ell}_{n j} (\widetilde{\beta}^{(j)})]^T ( \hbeta - \beta^0),
\end{equation}  
where $\widetilde{\beta}^{(j)}$ lies between $\hbeta$ and $\beta^0$, and $\ddot{\ell}_{n j} (\beta)$ denotes the $j$th column in the Hessian matrix $\ddot{\ell}_{n} (\beta)$. Let the $p \times p$ matrix $B_n = (  \ddot{\ell}_{n 1} (\widetilde{\beta}^{(1)}), \ldots, \ddot{\ell}_{n p} (\widetilde{\beta}^{(p)}) )^T$. Suppose $c \in \mathbb{R}^p$ is a $p$-dimensional vector, and the parameter of interest is $c^T \beta^0$.  Plugging (\ref{eq:taylor}) in (\ref{eq:dslasso}), we have 
\begin{align} \label{eq:decomp}
c^T (\widehat{b} - \beta^0) & =  - c^T \Thetabeta \dot{\ell}_n(\beta^0) - c^T (\widehat{\Theta} - \Thetabeta) \dot{\ell}_n(\beta^0) \nonumber \\
& \quad - c^T (\widehat{\Theta} \hSigma - I_p) (\hbeta - \beta^0) + c^T \hTheta (\hSigma -  B_n) (\hbeta - \beta^0).
\end{align} 
The first term in (\ref{eq:decomp}) is the leading part and is asymptotically normal as shown in Lemma \ref{lemma:lead}, and the others will be proved to be asymptotically negligible. 
	
	First, we show that $\sqrt{n} c^T (\widehat{\Theta} - \Thetabeta) \dot{\ell}_n(\beta^0) = \oP(1)$.
	By Lemma \ref{lemma:diff_theta} and Lemma \ref{lemma:max_score},
	\begin{align*}
	| \sqrt{n} c^T (\widehat{\Theta} - \Thetabeta) \dot{\ell}_n(\beta^0) | & \le \sqrt{n} \| c\|_1 \cdot \| \widehat{\Theta} - \Thetabeta \|_{\infty, \infty} \cdot \| \dot{\ell}_n(\beta^0) \|_{\infty} \\
	& \le \sqrt{n} a_{*} \OP( p \gamma_n \| \Thetabeta \|_{1,1}) \OP(\sqrt{\log(p)/n}) \\
	& = \OP( \| \Thetabeta \|_{1,1} p \gamma_n \sqrt{\log(p)}  ) \\
	& = \oP(1).
	\end{align*}

	Second, we show that $\sqrt{n} c^T (\widehat{\Theta} \hSigma - I_p) (\hbeta - \beta^0) = \oP(1)$.
	By Lemma \ref{lemma:lasso},
	\begin{align*}
	| \sqrt{n} c^T (\widehat{\Theta} \widehat{\Sigma} - I_p) (\hbeta - \beta^0) | & \le \sqrt{n} \| c\|_1 \| (\widehat{\Theta} \widehat{\Sigma} - I_p) (\hbeta - \beta^0) \|_{\infty} \\
	& \le \sqrt{n} a_* \| \hTheta \widehat{\Sigma}  - I_p \|_{\infty} \| \hbeta - \beta^0 \|_1 \\
	& \le \sqrt{n} a_* \gamma_n \| \hbeta - \beta^0 \|_1 \\
	& = \OP(\sqrt{n} \gamma_n s_0 \lambda_n) \\
	& = \oP(1).
	\end{align*}

	Next, we show that $\sqrt{n} c^T \hTheta (\hSigma -  B_n) (\hbeta - \beta^0) = \oP(1)$.
	Note that 
	\begin{equation} \label{eq:mat_decomp}
	\hSigma - B_n = (\hSigma - \Sigma_{\beta^0}) + (\Sigma_{\beta^0} - \ddot{\ell}_n(\beta^0)) + (\ddot{\ell}_n(\beta^0) - B_n).
	\end{equation} By the proof of Lemma \ref{lemma:const_sol}, we see that with $\lambda_n \asymp \sqrt{\log(p) / n}$, $\| \hSigma - \Sigma_{\beta^0} \|_{\infty} = \OP(s_0 \lambda_n)$. 
	We rewrite
	\begin{align}
	\Sigma_{\beta^0} - \ddot{\ell}_n(\beta^0) & = \displaystyle \E \int^{\tau}_0 \{ X_i - \eta_0(t; \beta^0) \}^{\otimes 2} e^{X_i^T \beta^0} 1(Y_i \ge t) h_0(t) dt \nonumber \\
	& \quad - \int_0^{\tau} \left\{ \hmu_2(t; \beta^0) - \frac{\hmu_1(t; \beta^0) \hmu_1^T(t; \beta^0)}{\hmu_0(t; \beta^0)} \right\}  h_0(t) dt \nonumber \\
	& \quad - \frac{1}{n} \sum_{i=1}^n \int_{0}^{\tau} \left\{  \frac{\hmu_2(t; \beta^0)}{\hmu_0(t; \beta^0)} - \left[  \frac{\hmu_1(t; \beta^0)}{\hmu_0(t; \beta^0)}  \right]^{\otimes 2}  \right\} dM_i(t)  \nonumber\\ 
	& = \int_0^{\tau}  \{  \mu_2(t; \beta^0) - \hmu_2(t; \beta^0)  \} h_0(t) dt \nonumber \\
	& \quad + \int_0^{\tau}  \left\{ \frac{\hmu_1(t; \beta^0) \hmu_1^T(t; \beta^0)}{\hmu_0(t; \beta^0)} - \frac{\mu_1(t; \beta^0) \mu_1^T(t; \beta^0)}{\mu_0(t; \beta^0)}   \right\} h_0(t) dt \nonumber \\
	& \quad - \frac{1}{n} \sum_{i=1}^n \int_{0}^{\tau} \left\{  \frac{\hmu_2(t; \beta^0)}{\hmu_0(t; \beta^0)} - \left[  \frac{\hmu_1(t; \beta^0)}{\hmu_0(t; \beta^0)}  \right]^{\otimes 2}  \right\} dM_i(t).
	\label{eq:mid}
	\end{align}
	Similar to the proof in Lemma \ref{lemma:mom}, we can show that $\sup_{t \in [0, \tau]} \| \hmu_2(t; \beta^0) - \mu_2(t; \beta^0) \|_{\infty} = \OP(\sqrt{\log(p)/n})$, and thus $ \| \int_0^{\tau}  \{  \mu_2(t; \beta^0) - \hmu_2(t; \beta^0)  \} h_0(t) dt \|_{\infty} \le \sup_{t \in [0, \tau]} \| \hmu_2(t; \beta^0) - \mu_2(t; \beta^0) \|_{\infty} \int_{0}^{\tau} h_0(t) dt = \OP(\sqrt{\log(p)/n})$. Since 
	\[
	\displaystyle \frac{\hmu_1 \hmu_1^T}{\hmu_0} - \frac{\mu_1 \mu_1^T}{\mu_0} = \frac{\hmu_1 \hmu_1^T}{\hmu_0 \mu_0}(\mu_0 - \hmu_0) + \frac{1}{\mu_0} [ (\hmu_1 - \mu_1) \hmu_1^T + \mu_1 (\hmu_1 - \mu_1)^T ]
	\]
	in  the second term of  (\ref{eq:mid}), by Assumption 1 and Lemma \ref{lemma:mom}, 
	\[
	\left\| \int_0^{\tau}  \left\{ \frac{\hmu_1(t; \beta^0) \hmu_1^T(t; \beta^0)}{\hmu_0(t; \beta^0)} - \frac{\mu_1(t; \beta^0) \mu_1^T(t; \beta^0)}{\mu_0(t; \beta^0)}   \right\} h_0(t) dt  \right\|_{\infty} = \OP(\sqrt{\log(p)/n}).
	\]
	$ n^{-1} \sum_{i=1}^n \int_{0}^{\tau} \left\{  {\mu_2(t; \beta^0)}/{\mu_0(t; \beta^0)} - \left[  {\mu_1(t; \beta^0)}/{\mu_0(t; \beta^0)}  \right]^{\otimes 2}  \right\} dM_i(t)$ is a sum of $n$ independent and identically distributed mean zero terms, and each term $ \left\| \int_{0}^{\tau} \left\{  {\mu_2(t; \beta^0)}/{\mu_0(t; \beta^0)} - \left[  {\mu_1(t; \beta^0)}/{\mu_0(t; \beta^0)}  \right]^{\otimes 2}  \right\} dM_i(t) \right\|_{\infty}$ is bounded by $2K^2(1 + e^{K_1} H_0(\tau))$ uniformly for all $i$ and $t\in [0, \tau]$. Similar to the proof of $\| A_n \|_{\infty} = \OP(\sqrt{\log(p)/n})$ in Lemma \ref{lemma:const_sol}, by Hoeffding's concentration inequality, 
	\[
	\left\|  \displaystyle \frac{1}{n} \sum_{i=1}^n \int_{0}^{\tau} \left\{  \frac{\mu_2(t; \beta^0)}{\mu_0(t; \beta^0)} - \left[  \frac{\mu_1(t; \beta^0)}{\mu_0(t; \beta^0)}  \right]^{\otimes 2}  \right\} dM_i(t) \right\|_{\infty} = \OP(\sqrt{\log(p)/n}). \] 
	It is easy to see that 
	\[
	\sup_{t \in [0, \tau]} \left\| \left\{  \frac{\hmu_2(t; \beta^0)}{\hmu_0(t; \beta^0)} - \left[  \frac{\hmu_1(t; \beta^0)}{\hmu_0(t; \beta^0)}  \right]^{\otimes 2}  \right\} - \left\{  \frac{\mu_2(t; \beta^0)}{\mu_0(t; \beta^0)} - \left[  \frac{\mu_1(t; \beta^0)}{\mu_0(t; \beta^0)}  \right]^{\otimes 2}  \right\} \right\|_{\infty} = \OP \left(\sqrt{\displaystyle \frac{\log(p)}{n}} \right).
	\]
	Then \begin{align*}
	& \left\| \frac{1}{n} \sum_{i=1}^n \int_{0}^{\tau} \left\{  \frac{\hmu_2(t; \beta^0)}{\hmu_0(t; \beta^0)} - \left[  \frac{\hmu_1(t; \beta^0)}{\hmu_0(t; \beta^0)}  \right]^{\otimes 2}  \right\} dM_i(t) \right. \\
	& \left. - \frac{1}{n} \sum_{i=1}^n \int_{0}^{\tau} \left\{  \frac{\mu_2(t; \beta^0)}{\mu_0(t; \beta^0)} - \left[  \frac{\mu_1(t; \beta^0)}{\mu_0(t; \beta^0)}  \right]^{\otimes 2}  \right\} dM_i(t) \right\|_{\infty} = \OP \left(\sqrt{\displaystyle \frac{\log(p)}{n}} \right),
	\end{align*}
	and thus for the third term in (\ref{eq:mid}), 
	\[
	\displaystyle \left\| \frac{1}{n} \sum_{i=1}^n \int_{0}^{\tau} \left\{  \frac{\hmu_2(t; \beta^0)}{\hmu_0(t; \beta^0)} - \left[  \frac{\hmu_1(t; \beta^0)}{\hmu_0(t; \beta^0)}  \right]^{\otimes 2}  \right\} dM_i(t) \right\|_{\infty} = \OP(\sqrt{\log(p)/n}).
	\]
	Therefore, by (\ref{eq:mid}), $\| \Sigma_{\beta^0} - \ddot{\ell}_n(\beta^0) \|_{\infty} = \OP(\sqrt{\log(p)/n})$.

	For the $(j,k)$th element in $\ddot{\ell}_n(\beta)$, denoted as $\ddot{\ell}_{njk}(\beta)$, by the mean value theorem, we have
	\[
	\ddot{\ell}_{njk}(\widetilde{\beta}^{(j)}) - \ddot{\ell}_{njk}(\beta^0) =  \displaystyle  (\widetilde{\beta}^{(j)} - \beta^0)^T \left. \frac{\partial \ddot{\ell}_{njk}(\beta)}{\partial \beta} \right|_{\beta = \overline{\beta}^{(jk)}} ,
	\]
	where $\overline{\beta}^{(jk)}$ lies in the segment between $\widetilde{\beta}^{(j)}$ and $\beta^0$. Under Assumptions 1--3, when $\| \beta - \beta^0 \|_1 \le \delta^{\prime}$  for $\delta^{\prime} > 0$  small enough,  $  \left\| {\partial \ddot{\ell}_{njk}(\beta)}/{\partial \beta} \right\|_{\infty}$ is bounded by some constant related to $\delta^{\prime}$ uniformly for all $(j,k)$. Since $s_0 \lambda_n = o(1)$, we have $\|  B_n - \ddot{\ell}_n(\beta^0) \|_{\infty} \le \OP(\| \hbeta - \beta^0 \|_1) = \OP(s_0 \lambda_n)$.
	
	Combining the three parts in (\ref{eq:mat_decomp}), we have that for $\lambda_n \asymp \sqrt{\log(p)/n}$, $\| \hSigma - B_n \|_{\infty} = \OP(s_0 \lambda_n)$. Then 
	\begin{align*}
	| \sqrt{n} c^T \hTheta (\hSigma -  B_n) (\hbeta - \beta^0) | & \le \sqrt{n} \| c \|_1 \| \hTheta \|_{\infty, \infty} \| \hSigma - B_n \|_{\infty} \| \hbeta - \beta^0 \|_1 \\
	& \le \OP(\sqrt{n} \| \Theta_{\beta^0} \|_{1,1} (s_0 \lambda_n)^2 ) \\
	& = \oP(1).
	\end{align*}

	We show that the variance estimator is consistent, i.e.  $ c^T ( \widehat{\Theta}  - \Thetabeta) c \rightarrow_P 0$ as $n \rightarrow \infty$.
	\begin{align*}
	| c^T ( \widehat{\Theta}  - \Thetabeta) c | & \le \| c \|_1^2 \| \widehat{\Theta}  - \Thetabeta \|_{\infty} \\
	& \le a_*^2 \OP(\gamma_n \| \Thetabeta \|_{1,1}) = \oP(1). 
	\end{align*}
	Finally, by the arguments above and Slutsky's theorem, it holds that $\sqrt{n} c^T (\widehat{b} - \beta^0) / (c^T \hTheta c)^{1/2} \overset{\mathcal{D}}{\rightarrow} N(0,1)$.
\end{proof}

\begin{proof}[\textit{Proof of Theorem \ref{thm:simul}.}]
We prove Theorem \ref{thm:simul} using the Cram\'{e}r-Wold device. For any $\omega \in \mR^{l}$, where the dimension $l$ is a fixed integer free of $n$ and $p$, let $c=A^T \omega$ in Theorem 1. Essentially, we only require $\| c \|_1 = \| A^T \omega \|_{1}$ is upper bounded, and it is not essential to force $\| c \|_2 = 1$. Since $\| A \|_{\infty,\infty} =\mathcal{O}(1)$ (by assumption) and $\| \omega \|_1 =\mathcal{O}(1)$ (fixed $l$), then $\| A^T \omega \|_1 \le \| A^T \|_{1,1} \| \omega \|_1 = \| A \|_{\infty,\infty} \| \omega \|_1 = \mathcal{O}(1).$ 
\end{proof}


\newpage

\setcounter{equation}{0}
\renewcommand{\theequation}{S\arabic{equation}}
\setcounter{theorem}{0}
\renewcommand{\thetheorem}{A\arabic{theorem}}
\setcounter{section}{0}
\renewcommand{\thesection}{S\arabic{section}}

\begin{center}
\Large \bf Supplementary Materials for ``Statistical Inference for Cox Proportional Hazards Models with a Diverging Number of Covariates"
\end{center}

We provide detailed proofs for the lemmas presented in the Appendix of the article, as well as patient characteristics of the Boston Lung Cancer Study Cohort data analyzed in Section 5.

\section{Technical proofs for the lemmas}


Lemma \ref{supp:lemma:mom}  characterizes the difference between $\widehat{\eta}_n(t; \beta^0)$ and $\eta_0(t; \beta^0)$, which is needed to prove the asymptotic distribution for the leading term $\sqrt{n} c^T \Theta_{\beta^0} \dot{\ell}_n(\beta^0)$ as well as to establish the convergence rate for $\hSigma - \Sigma_{\beta^0}$. 

\begin{lemma} \label{supp:lemma:mom}
	Under Assumptions 1--3, we have
	\begin{align*}
	& \sup_{t \in [0, \tau]} | \hmu_0(t; \beta^0) - \mu_0(t; \beta^0) |  = \OP(\sqrt{\log(p) / n}), \\
	& \sup_{t \in [0, \tau]} \| \hmu_1(t; \beta^0) - \mu_1(t; \beta^0) \|_{\infty}  = \OP(\sqrt{\log(p)/n}), \\
	& \sup_{t \in [0, \tau]} \| \heta_n(t; \beta^0) - \eta_0(t; \beta^0)  \|_{\infty}  = \OP(\sqrt{\log(p)/n}).
	\end{align*}
\end{lemma}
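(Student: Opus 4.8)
The plan is to write each of $\hmu_0$ and $\hmu_1$ as a centered average of i.i.d.\ uniformly bounded summands, apply a pointwise concentration inequality, upgrade it to a bound uniform over $t\in[0,\tau]$ by exploiting the monotonicity of the indicator $1(Y_j\ge t)$, and finally take a union bound over the $p$ coordinates to produce the $\sqrt{\log(p)}$ factor. The statement for $\heta_n$ then follows from a ratio decomposition together with a uniform lower bound on $\mu_0$. First I would record the boundedness facts supplied by Assumptions 1--2: since $\exp(X_j^T\beta^0)\in[e^{-K_1},e^{K_1}]$ and $|X_j^{(k)}|\le K$, each summand defining $\hmu_0(t;\beta^0)$ lies in $[0,e^{K_1}]$ and each coordinate of the summand defining $\hmu_1(t;\beta^0)$ lies in $[-Ke^{K_1},Ke^{K_1}]$. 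Assumption 3 moreover gives the crucial uniform lower bound $\mu_0(t;\beta^0)=\E[1(Y\ge t)e^{X^T\beta^0}]\ge e^{-K_1}\PP(Y\ge t)\ge e^{-K_1}\pi_0>0$ for every $t\in[0,\tau]$, which is what makes the ratio step work.

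For $\hmu_0$, a fixed $t$ admits the Hoeffding bound $\PP(|\hmu_0(t;\beta^0)-\mu_0(t;\beta^0)|>s)\le 2\exp(-2ns^2/e^{2K_1})$. To pass to $\sup_t$ I would use that both $t\mapsto\hmu_0(t;\beta^0)$ and $t\mapsto\mu_0(t;\beta^0)$ are nonincreasing: choosing a grid $0=t_0<\cdots<t_N=\tau$ on which $\mu_0$ decreases by at most $\delta$ between consecutive nodes (so $N\lesssim \delta^{-1}$ since $\mu_0\le e^{K_1}$), monotonicity of both functions bounds the oscillation on each subinterval by $\delta$, whence $\sup_t|\hmu_0-\mu_0|\le \max_{m}|\hmu_0(t_m)-\mu_0(t_m)|+\delta$. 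A union bound over the $N$ nodes with $\delta\asymp\sqrt{\log(p)/n}$ then yields $\sup_{t\in[0,\tau]}|\hmu_0(t;\beta^0)-\mu_0(t;\beta^0)|=\OP(\sqrt{\log(p)/n})$. For $\hmu_1$ the $k$th summand has bounded variation in $t$ (a single jump at $t=Y_j$), so I would split it according to the sign of $X_j^{(k)}$ into two monotone pieces and repeat the grid argument coordinatewise; a union bound over both the $N$ nodes and the $p$ coordinates gives $\sup_t\|\hmu_1-\mu_1\|_\infty=\OP(\sqrt{\log(Np)/n})=\OP(\sqrt{\log(p)/n})$.

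Finally, for $\heta_n$ I would use the decomposition $\heta_n-\eta_0=(\hmu_1-\mu_1)/\hmu_0+\mu_1(\mu_0-\hmu_0)/(\hmu_0\mu_0)$. By the previous two bounds $\sup_t|\hmu_0-\mu_0|=\oP(1)$, so with probability tending to one $\hmu_0\ge e^{-K_1}\pi_0/2$ uniformly in $t$; together with $\|\mu_1\|_\infty\le Ke^{K_1}$ this makes both denominators bounded away from zero while the numerators are controlled by the $\hmu_0,\hmu_1$ rates, giving $\sup_t\|\heta_n-\eta_0\|_\infty=\OP(\sqrt{\log(p)/n})$.

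The main obstacle is achieving simultaneous uniformity over the continuum $t\in[0,\tau]$ and the maximum over the $p$ coordinates without inflating the rate beyond $\sqrt{\log(p)/n}$: the pointwise concentration is routine, but the sup over $t$ must be reduced to finitely many points via the monotonicity/bounded-variation structure, and the union bound over the grid must be kept at logarithmic order (either by taking a polynomially fine grid under $\log n=\mathcal{O}(\log(p))$, or by invoking the bounded VC dimension of the class of monotone indicators to remove the superfluous logarithmic factor). The ratio step's only subtlety, the uniform positivity of the denominator, is exactly what Assumption 3 secures.
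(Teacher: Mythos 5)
Your proposal is correct in substance but takes a genuinely different technical route from the paper for the two uniform-convergence statements. The paper controls $\sup_{t\in[0,\tau]}$ by computing $L_1$ and $L_2$ bracketing numbers of the classes $\{1(y\ge t)e^{x^T\beta^0}\}$ and $\{1(y\ge t)e^{x^T\beta^0}x_k\}$ (both of polynomial order $\epsilon^{-2}$), and then invokes Theorem 2.14.9 of van der Vaart and Wellner with $V=2$ to obtain a sub-Gaussian tail $\PP(\sqrt{n}\sup_t|\hmu_0-\mu_0|>s)\le De^{-s^2}$ directly for the supremum; setting $s=\sqrt{2\log p}$ and union-bounding over the $p$ coordinates of $\hmu_1$ then gives the rate with no side conditions on how fast $p$ grows. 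You instead use pointwise Hoeffding plus a monotone discretization of $[0,\tau]$ and a union bound over the grid. That argument is sound, and your oscillation bound $\sup_t|\hmu_0-\mu_0|\le\max_m|\hmu_0(t_m)-\mu_0(t_m)|+\delta$ is correct, but as you yourself note, the grid has $N\asymp\sqrt{n/\log p}$ nodes, so the naive union bound yields $\OP(\sqrt{\log(Np)/n})=\OP(\sqrt{(\log n+\log p)/n})$, which matches the claimed rate only when $\log n=\mathcal{O}(\log p)$. Since the lemma is stated under Assumptions 1--3 alone, with no lower bound on the growth of $p$, closing this gap requires exactly the entropy/VC escape hatch you mention --- at which point your argument essentially becomes the paper's. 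The final step for $\heta_n$ is the same in both proofs: an algebraically equivalent ratio decomposition combined with the uniform lower bound $\mu_0\ge e^{-K_1}\pi_0$ from Assumption 3 and the eventual uniform lower bound on $\hmu_0$ inherited from the first statement. What your route buys is elementariness and transparency about where the $\sqrt{\log p}$ comes from; what the paper's route buys is a clean supremum tail bound that is rate-optimal without any implicit relation between $n$ and $p$.
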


\begin{proof}[\textbf{Proof of Lemma \ref{supp:lemma:mom}}]
	The first two statements in the conclusion are similar to those in \citet{kong2014non}, but with differing setups. Consider a class of functions of $y \ge 0$ and $x \in \mR^p$ indexed by $t$, $\mathcal{F}_0 = \{ 1(y \ge t) \exp(x^T \beta^0): t \in [0, \tau] \}$. For any $0 < \epsilon < 1$, consider the cumulative distribution function for $Y$ and take an positive integer $m < 2/\epsilon$ and a sequence of  points $0 = t_0 < t_1 < \cdots < t_{m-1} < t_m = \infty$ such that $\PP(t_i < Y \le t_{i+1}) < \epsilon, ~ i = 0, 1, \ldots, m-1$. For each $i = 1, \cdots, m$, define the bracket $[L_i, U_i]$, where $L_i(x, y) = 1(y \ge t_i) \exp(x^T \beta^0)$ and $U_i(x, y) = 1(y > t_{i-1}) \exp(x^T \beta^0)$. We have $L_i(x,y) \le 1(y \ge t) \exp(x^T \beta^0) \le U_i(x,y)$ for $t_{i-1} < t \le t_i$, and 
	\begin{align*}
	& [ \E \{ U_i(X, Y) - L_i(X,Y) \}^2 ]^{1/2} = [ \E \{ 1(t_{i-1} < Y < t_i) \exp(2 X^T \beta^0) \} ]^{1/2} \le e^{K_1} \sqrt{\epsilon}, \\
	&   \E | U_i(X, Y) - L_i(X,Y) | = \E \{ 1(t_{i-1} < Y < t_i) \exp( X^T \beta^0) \} \le e^{K_1} \epsilon.
	\end{align*}
	Then the bracketing numbers \citet{van1998asymptotic} satisfy
	\[
	N_{[]}(e^{K_1} \sqrt{\epsilon}, \mathcal{F}_0, L_2(\PP)) \le \frac{2}{\epsilon}, \quad  N_{[]}(e^{K_1} {\epsilon}, \mathcal{F}_0, L_1(\PP)) \le \frac{2}{\epsilon},
	\]
	or equivalently, 
	\[
	N_{[]}(\epsilon, \mathcal{F}_0, L_2(\PP)) \le \frac{2 e^{2K_1}}{\epsilon^2}, \quad  N_{[]}({\epsilon}, \mathcal{F}_0, L_1(\PP)) \le \frac{2 e^{K_1}}{\epsilon} < \infty.
	\]
	By the Glivenko-Cantelli Theorem and the Donsker Theorem \citep{van1998asymptotic}, the class of $\mathcal{F}_0$ is  $\PP$-Glivenko-Cantelli and $\PP$-Donsker. So $\sup_{t \in [0, \tau]} | \hmu_0(t; \beta^0) - \mu_0(t; \beta^0) | \overset{a.s.}{\rightarrow} 0$, and moreover, by Theorem 2.14.9 of \citet{van1996weak} with $V=2$,
	\[
	\PP \left( \sqrt{n} \sup_{t \in [0, \tau]} | \hmu_0(t; \beta^0) - \mu_0(t; \beta^0) | > s \right) \le D e^{-s^2},
	\]
	for every $s > 0$ and a constant $D > 0$ that only depends on $K_1$. Setting $s = \sqrt{2\log(p)}$ implies that 
	\[
	\sup_{t \in [0, \tau]} | \hmu_0(t; \beta^0) - \mu_0(t; \beta^0) |  = \OP(\sqrt{\log(p) / n}).
	\]
	
	For the second statement, we consider the classes of functions of $(x, y) = (x_1, \cdots, x_p, y)$ indexed by $t$,
	\[
	\mathcal{F}_1^k = \{  1(y \ge t) e^{x^T \beta^0} x_k: t \in [0, \tau] \}, ~ k = 1, \cdots, p.
	\]
	Since $| e^{x^T \beta^0} x_k | \le K e^{K_1}$, similarly we have
	\[
	N_{[]}(\epsilon, \mathcal{F}_1^k, L_2(\PP)) \le \left(  \frac{\sqrt{2}e^{K_1} K}{\epsilon}  \right)^2. 
	\]
	By Theorem 2.14.9 of \citet{van1996weak} with $V=2$, we have
	\[
	\PP \left( \sqrt{n} \sup_{t \in [0, \tau]} | \hmu_{1k}(t; \beta^0) - \mu_{1k}(t; \beta^0) | > s \right) \le D^{\prime} s^2 e^{-2s^2} \le D^{\prime} e^{-1} e^{-s^2}
	\]
	for every $s > 0$, where $D^{\prime}$ is a constant that only depends on $K$ and $K_1$, and $\hmu_{1k}$ and $\mu_{1k}$ are the $k$th components of $\hmu_1$ and $\mu_1$, respectively. Thus,
	\begin{align*}
	& \PP \left( \sqrt{n} \sup_{t \in [0, \tau]} \| \hmu_{1}(t; \beta^0) - \mu_{1}(t; \beta^0) \|_{\infty} > s \right) \\
	\le & ~ \PP \left(  \bigcup_{k=1}^p \left\{  \sqrt{n} \sup_{t \in [0, \tau]} | \hmu_{1k}(t; \beta^0) - \mu_{1k}(t; \beta^0) | > s \right\}  \right) \\
	\le & ~ p  D^{\prime} e^{-s^2}.
	\end{align*}
	For example, taking $s = \sqrt{2\log(p)}$ would complete the proof for $\sup_{t \in [0, \tau]} \| \hmu_1(t; \beta^0) - \mu_1(t; \beta^0) \|_{\infty}  = \OP(\sqrt{\log(p)/n})$.
	
	Finally, we rewrite
	\begin{align*}
	\heta_n(t; \beta^0) - \eta_0(t;\beta^0) & = \displaystyle \frac{\hmu_1(t; \beta^0)}{\hmu_0(t; \beta^0)} - \frac{\mu_1(t; \beta^0)}{\mu_0(t; \beta^0)} \\
	& = \displaystyle \frac{\hmu_1(t; \beta^0)}{\mu_0(t; \beta^0)} - \frac{\mu_1(t; \beta^0)}{\mu_0(t; \beta^0)} + \frac{\hmu_1(t; \beta^0)}{\mu_0(t; \beta^0)}  \left(   \displaystyle \frac{\mu_0(t; \beta^0)}{\hmu_0(t; \beta^0)} - 1 \right).
	\end{align*}
	By Assumptions 1--3, $\mu_0(t; \beta^0) \ge e^{-K_1} \pi_0 > 0$ and $\sup_{t \in [0, \tau]} \|  \hmu_1(t; \beta^0) \|_\infty = \OP(1)$. Also, since
	\[
	\inf_{t \in [0, \tau]} \hmu_0(t; \beta^0) \ge \mu_0(t; \beta^0) - | \hmu_0(t; \beta^0) - \mu_0(t; \beta^0) | \ge e^{-K_1} \pi_0 - \sup_{t \in [0, \tau]} | \hmu_0(t; \beta^0) - \mu_0(t; \beta^0) | > e^{-K_1} \frac{\pi_0}{2}
	\]
	 almost surely, we have
	\begin{align*}
	& \sup_{t \in [0, \tau]} \left\|  \frac{\hmu_1(t; \beta^0)}{\mu_0(t; \beta^0)}  \left(   \frac{\mu_0(t; \beta^0)}{\hmu_0(t; \beta^0)} - 1 \right) \right\|_\infty \\
	\le & ~ \sup_{t \in [0, \tau]} \left\|  \frac{\hmu_1(t; \beta^0)}{\mu_0(t; \beta^0)}  \right\|_\infty \cdot \sup_{t \in [0, \tau]} \left| \frac{\mu_0(t; \beta^0)}{\hmu_0(t; \beta^0)} - 1 \right| \\
	\le & ~ \OP(1) \sup_{t \in [0, \tau]} \left| \mu_0(t; \beta^0) - \hmu_0(t; \beta^0) \right| = \OP(\sqrt{\log(p) / n}).
	\end{align*}
	Therefore, 
	\begin{align*}
	\sup_{t \in [0, \tau]} \| \heta_n(t; \beta^0) - \eta_0(t; \beta^0)  \|_{\infty} & \le \sup_{t \in [0, \tau]} \left\|  \displaystyle \frac{1}{\mu_0(t; \beta^0)} \left( \hmu_1(t; \beta^0) - \mu_1(t; \beta^0) \right) \right\|_{\infty} \\
	& \quad + \sup_{t \in [0, \tau]} \left\|  \frac{\hmu_1(t; \beta^0)}{\mu_0(t; \beta^0)}  \left(   \frac{\mu_0(t; \beta^0)}{\hmu_0(t; \beta^0)} - 1 \right) \right\|_\infty \\
	& = \OP(\sqrt{\log(p)/n}). 		
	\end{align*}     
\end{proof}


Lemma \ref{supp:lemma:lead} establishes the asymptotic distribution for the leading term $- c^T \Thetabeta \dot{\ell}_n(\beta^0)$ in the decomposition of $c^T (\widehat{b} - \beta^0)$.

\begin{lemma} \label{supp:lemma:lead}
	Assume $p^2 \log(p) / n \rightarrow 0$. Under Assumptions 1--5, for any $c \in \mathbb{R}^p$ such that $\| c \|_2 = 1$ and $\| c \|_1 \le a_*$ with some absolute constant $a_* > 0$,
	\[
	\displaystyle \frac{ \sqrt{n} c^T \Thetabeta \dot{\ell}_n(\beta^0)}{\sqrt{c^T \Thetabeta c}} \overset{\mathcal{D}}{\rightarrow} N(0,1).
	\]
\end{lemma}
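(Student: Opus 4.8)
The plan is to exploit the martingale structure of the Cox score evaluated at the truth. First I would rewrite the statistic as a stochastic integral against the counting-process martingales $M_i(\cdot;\beta^0)$. Because $\delta_i=\int_0^\tau dN_i(t)$ and $\heta_n(t;\beta^0)$ is exactly the $\hmu_0$-weighted average of the at-risk covariates, the compensator contribution $\sum_i\int_0^\tau\{X_i-\heta_n(t;\beta^0)\}\,dA_i(t;\beta^0)$ cancels identically, giving
\[
c^T\Thetabeta\dell_n(\beta^0)=-\frac1n\sum_{i=1}^n\int_0^\tau c^T\Thetabeta\{X_i-\heta_n(t;\beta^0)\}\,dM_i(t;\beta^0).
\]
This identity is the entry point, and it is also where the non-i.i.d. character of the problem surfaces, since $\heta_n$ couples all $n$ subjects.

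Second, I would peel off the i.i.d. leading term by replacing $\heta_n(t;\beta^0)$ with its population limit $\eta_0(t;\beta^0)$. The scaled discrepancy
\[
\frac{1}{\sqrt n}\sum_{i=1}^n\int_0^\tau c^T\Thetabeta\{\heta_n(t;\beta^0)-\eta_0(t;\beta^0)\}\,dM_i(t;\beta^0)
\]
is itself a mean-zero martingale in its upper limit, precisely because $\heta_n(t;\beta^0)$ is predictable with respect to $\mathcal{F}(t)$. I would therefore control it through its predictable variation and Lenglart's inequality rather than through independence: the variation is dominated by $\sup_t|c^T\Thetabeta\{\heta_n-\eta_0\}|^2$ times a bounded integral, and combining $\|\Thetabeta c\|_2\le\lambdamin^{-1}$ with Lemma~\ref{supp:lemma:mom}, which supplies $\sup_t\|\heta_n(t;\beta^0)-\eta_0(t;\beta^0)\|_\infty=\OP(\sqrt{\log(p)/n})$, bounds it by $\OP(p\log(p)/n)$, hence $\oP(1)$ under the stated rate. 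Turning this remainder over to predictability instead of independence is the crux, and I expect it to be the main obstacle.

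Third, I would apply Rebolledo's martingale central limit theorem to the surviving term $U_n(s)=-n^{-1/2}\sum_i\int_0^s c^T\Thetabeta\{X_i-\eta_0(t;\beta^0)\}\,dM_i(t;\beta^0)$. Its predictable variation at $\tau$ is the i.i.d. average $n^{-1}\sum_i\int_0^\tau[c^T\Thetabeta\{X_i-\eta_0\}]^2\,1(Y_i\ge t)e^{X_i^T\beta^0}\,dH_0(t)$; a short moment computation identifies its expectation as $c^T\Thetabeta\widetilde{\Sigma}_{\beta^0}(\tau)\Thetabeta c$, and since each summand is only of order $p$, it is the rate condition that delivers convergence in probability. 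Because $\widetilde{\Sigma}_{\beta^0}(\tau)=\Sigma_{\beta^0}$ and $\Thetabeta=\Sigma_{\beta^0}^{-1}$, the limit collapses to $c^T\Thetabeta c$, in agreement with $v(\tau;c)=1$ in Assumption~4. The Lindeberg (jump) condition follows at once from the bounded-covariate Assumptions~1--2 and the eigenvalue bound in Assumption~5: each jump is at most $n^{-1/2}\|\Thetabeta c\|_2\cdot\sup_i\|X_i-\eta_0\|_2=\OP(\sqrt{p/n})\to0$, so the truncated second moment is eventually zero.

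Finally, combining the negligible remainder with the Gaussian limit of $U_n(\tau)$ and dividing by $\sqrt{c^T\Thetabeta c}$, which lies between $\lambdamax^{-1/2}$ and $\lambdamin^{-1/2}$ so that Slutsky's theorem applies, yields the asserted $N(0,1)$ limit.
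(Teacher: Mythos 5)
Your proof is correct and rests on the same skeleton as the paper's: represent $-\sqrt{n}\,c^T\Thetabeta\dell_n(\beta^0)$ as a stochastic integral against the $M_i(\cdot;\beta^0)$, verify convergence of the predictable variation using Lemma~\ref{supp:lemma:mom}, check the jump/Lindeberg condition via the $\mathcal{O}(\sqrt{p/n})$ bound, and conclude by the martingale CLT. Where you diverge is in the decomposition: the paper keeps the \emph{empirical} centering $\heta_n(t;\beta^0)$ inside the integrand throughout (it is predictable, so the whole process is already a martingale) and only replaces $\hmu_r$ by $\mu_r$ when computing the limit of the predictable variation; you instead swap $\heta_n$ for $\eta_0$ at the level of the integrand, which creates an extra remainder martingale that you dispatch with Lenglart's inequality at rate $\OP(p\log(p)/n)$, and then apply the CLT to a genuinely i.i.d.\ integrand whose variation converges by Chebyshev under $p^2/n\to 0$. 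Both routes are valid and land on the same rate condition $p^2\log(p)/n\to 0$; yours buys a cleaner i.i.d.\ CLT application at the cost of one additional remainder term, while the paper's avoids Lenglart entirely. One small point to tighten: Rebolledo's theorem requires convergence of the (normalized) predictable variation to a fixed deterministic function at every $t\in[0,\tau]$, not just at $\tau$ --- this is exactly what Assumption~4 supplies, and since $c^T\Thetabeta c$ need not converge as $p$ grows, you should normalize by $\sqrt{c^T\Thetabeta c}$ \emph{before} invoking the CLT (as the paper does) rather than dividing afterwards; with that reordering your argument is complete.
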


\begin{proof}[\textbf{Proof of Lemma \ref{supp:lemma:lead}}]
	Using notation of martingales, we rewrite
	\begin{align}
	\displaystyle \frac{- \sqrt{n} c^T \Thetabeta \dot{\ell}_n(\beta^0)}{\sqrt{c^T \Thetabeta c}}  & =  \displaystyle \frac{1}{\sqrt{n}} \sum_{i=1}^n \frac{c^T \Thetabeta}{\sqrt{c^T \Thetabeta c}}  \left\{  X_i - \frac{\hmu_1(Y_i; \beta^0)}{\hmu_0(Y_i; \beta^0)}  \right\} \delta_i \nonumber \\
	& =  \displaystyle \frac{1}{\sqrt{n}} \sum_{i=1}^n \int_{0}^{\tau} \frac{c^T \Thetabeta}{\sqrt{c^T \Thetabeta c}}  \left\{  X_i - \frac{\hmu_1(t; \beta^0)}{\hmu_0(t; \beta^0)}  \right\} dN_i(t) \nonumber \\
	& =  \displaystyle \frac{1}{\sqrt{n}} \sum_{i=1}^n \int_{0}^{\tau} \frac{c^T \Thetabeta}{\sqrt{c^T \Thetabeta c}}  \left\{  X_i - \frac{\hmu_1(t; \beta^0)}{\hmu_0(t; \beta^0)}  \right\} dM_i(t). \nonumber
	\end{align}
	Let $Q_i(t) = \displaystyle \frac{1}{\sqrt{n}} \frac{c^T \Theta_{\beta^0}}{\sqrt{c^T \Theta_{\beta^0} c}} \left\{ X_i - \frac{\hmu_1(t; \beta^0)}{\hmu_0(t; \beta^0)} \right\}, ~i=1, \ldots, n$, which are predictable with respect to the filtration $\mathcal{F}$. Then
	\begin{equation} \label{eq:mart}
	\displaystyle \frac{- \sqrt{n} c^T \Thetabeta \dot{\ell}_n(\beta^0)}{\sqrt{c^T \Thetabeta c}}  =  \sum_{i=1}^n \int_0^{\tau} Q_i(t) dM_i(t).
	\end{equation}
	For any $t \in [0, \tau]$, let $U(t) = \sum_{i=1}^n \int_0^{t} Q_i(u) dM_i(u)$, whose predictable variation process is 
	\begin{align*}
	\langle U \rangle (t) & = \sum_{i=1}^n \int_{0}^{t} Q_i(u)^2 1(Y_i \ge u) e^{X_i^T \beta^0} d H_0(u) \\
	& =  \sum_{i=1}^n \int_0^t \frac{c^T \Theta_{\beta^0}}{c^T \Theta_{\beta^0} c} \left\{ X_i - \frac{\hmu_1(t; \beta^0)}{\hmu_0(t; \beta^0)} \right\} ^{\otimes 2}  \Theta_{\beta^0} c 1(Y_i \ge u) e^{X_i^T \beta^0} d H_0(u) \\
	& = \frac{c^T \Theta_{\beta^0}}{c^T \Theta_{\beta^0} c}  \left[ \int_0^t  \left\{ \hmu_2(u; \beta^0) - \frac{\hmu_1(u; \beta^0) \hmu_1(u; \beta^0)^T}{\hmu_0(u; \beta^0)} \right\} d H_0(u) \right] \Theta_{\beta^0} c 
	\end{align*}
	Similar to the proof in Lemma \ref{supp:lemma:mom}, we can show that $\sup_{t \in [0, \tau]} \| \hmu_2(t; \beta^0) - \mu_2(t; \beta^0) \|_{\infty} = \OP(\sqrt{\log(p)/n})$, and thus
	\begin{align} \label{eq:var_diff1}
	\left\| \int_0^{t}  \{  \mu_2(u; \beta^0) - \hmu_2(u; \beta^0)  \} h_0(u) du \right\|_{\infty} & \le \sup_{u \in [0, \tau]} \| \hmu_2(u; \beta^0) - \mu_2(u; \beta^0) \|_{\infty} \int_{0}^{\tau} h_0(u) du \nonumber \\
	& = \OP(\sqrt{\log(p)/n}).
	\end{align}
	Since 
	\[
	\displaystyle \frac{\hmu_1 \hmu_1^T}{\hmu_0} - \frac{\mu_1 \mu_1^T}{\mu_0} = \frac{\hmu_1 \hmu_1^T}{\hmu_0 \mu_0}(\mu_0 - \hmu_0) + \frac{1}{\mu_0} [ (\hmu_1 - \mu_1) \hmu_1^T + \mu_1 (\hmu_1 - \mu_1)^T ],
	\]
	by Assumption 1 and Lemma \ref{supp:lemma:mom}, 
	\begin{equation} \label{eq:var_diff2}
	\left\| \int_0^{t}  \left\{ \frac{\hmu_1(u; \beta^0) \hmu_1^T(u; \beta^0)}{\hmu_0(u; \beta^0)} - \frac{\mu_1(u; \beta^0) \mu_1^T(u; \beta^0)}{\mu_0(u; \beta^0)}   \right\} h_0(u) du  \right\|_{\infty} = \OP(\sqrt{\log(p)/n}).
	\end{equation}
	Combining \eqref{eq:var_diff1} and \eqref{eq:var_diff2}, we have that, uniformly for all $t \in [0, \tau]$,
	\begin{align*}
	 \left\|  \int_0^t  \left\{ \hmu_2(u; \beta^0) - \frac{\hmu_1(u; \beta^0) \hmu_1(u; \beta^0)^T}{\hmu_0(u; \beta^0)} \right\} dH_0(u) - \right. & \\
	\quad \left. \int_0^t  \left\{ \mu_2(u; \beta^0) - \frac{\mu_1(u; \beta^0) \mu_1(u; \beta^0)^T}{\mu_0(u; \beta^0)} \right\} d H_0(u)  \right\|_{\infty} & = \OP(\sqrt{\log(p)/n}). 
	\end{align*}
	Then
	\begin{align*}
	& \left| \langle U \rangle (t)  - \frac{c^T \Theta_{\beta^0}}{c^T \Theta_{\beta^0} c}  \left[ \int_0^t  \left\{ \mu_2(u; \beta^0) - \frac{\mu_1(u; \beta^0) \mu_1(u; \beta^0)^T}{\mu_0(u; \beta^0)} \right\} d H_0(u) \right] \Theta_{\beta^0} c   \right| \\
	\le & \zeta_{\mathrm{min}}^{-1} (\| c \|_1 \| \Theta_{\beta^0} \|_{1,1})^2 \OP(\sqrt{\log(p)/n}) \\
	\le & \zeta_{\mathrm{min}}^{-1} a_*^2 p \zeta_{\mathrm{max}}^2  \OP(\sqrt{\log(p)/n}) \rightarrow_P 0
	\end{align*}
	if $p^2 \log(p) / n \rightarrow 0$.  
	By Assumption 4, $\langle U(t) \rangle \rightarrow_P v(t; c)$. 
	
	Now we check the Lindeberg condition. For any $\epsilon > 0$, define the truncated process 
	\[
	U_{\epsilon}(t) = \sum_{i=1}^n \int_0^t Q_i(u) 1\{ |Q_i(u)| > \epsilon \} dM_i(u),
	\]
	with a predictable variation process:
	\begin{align*}
	\langle U_{\epsilon} \rangle(t) & = \sum_{i=1}^n \int_0^t Q_i^2(u) 1\{ |Q_i(u)| > \epsilon \} 1(Y_i \ge u) e^{X_i^T \beta^0} h_0(u) du \\
	& = \sum_{i=1}^n \int_0^t Q_i^2(u) 1\{ |\sqrt{n} Q_i(u)| > \sqrt{n} \epsilon \} 1(Y_i \ge u) e^{X_i^T \beta^0} h_0(u) du.
	\end{align*}
	Let $Q_{\mathrm{max}} = \sup_{t \in [0, \tau]} \max_{1\le i \le n} |\sqrt{n} Q_i(t)|$, then $1\{ |\sqrt{n} Q_i(u)| > \sqrt{n} \epsilon \} \le 1\{ Q_{\mathrm{max}} > \sqrt{n} \epsilon \}$. By Assumption 1, 
	\[
	\sup_{t \in [0, \tau]} \max_{1\le i \le n} \left| \displaystyle \frac{c^T \Theta_{\beta^0}}{\sqrt{c^T \Theta_{\beta^0} c}} \left\{ X_i - \frac{\hmu_1(t; \beta^0)}{\hmu_0(t; \beta^0)} \right\} \right| \le \zeta_{\mathrm{min}}^{-1/2} \| c \|_1 \| \Theta_{\beta^0} \|_{1,1} 2K = \mathcal{O}(\sqrt{p}),
	\]
	and $Q_{\mathrm{max}} = \mathcal{O}(\sqrt{p})$. When $p/n \rightarrow 0$, $1\{ Q_{\mathrm{max}} > \sqrt{n} \epsilon \} = 0$ almost surely. Thus $\langle U_{\epsilon} \rangle (t) \rightarrow_P 0$. Finally, by the martingale central limit theorem, the asymptotic normality follows.
\end{proof}


Lemma \ref{supp:lemma:lasso} provides the theoretical properties of the lasso estimator in the Cox model. This is a direct result from Theorem 1 in \citet{kong2014non}, and thus the proof is omitted.

\begin{lemma} \label{supp:lemma:lasso}
	Under Assumptions 1--5, for the lasso estimator $\hbeta$, we have
	\[
	\|  \hbeta - \beta^0 \|_1 = \OP(s_0 \lambda_n), \quad \frac{1}{n} \sum_{i=1}^n |X_i^T (\hbeta - \beta^0)|^2 = \OP(s_0 \lambda_n^2),
	\]
	where $s_0 = |\{j: \beta_j^0 \ne 0, j=1, \cdots, p \}|$ is the true model size.
\end{lemma}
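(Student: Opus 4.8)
The plan is to prove these oracle inequalities by the standard penalized $M$-estimation route adapted to the convex Cox partial-likelihood loss, combining the optimality of $\hbeta$, a high-probability bound on the score at the truth, a cone (compatibility) argument, and a lower bound on the local curvature of $\ell_n$. First I would record the basic inequality implied by the fact that $\hbeta$ minimizes $\ell_n(\beta)+\lambda_n\|\beta\|_1$:
\[
\ell_n(\hbeta)-\ell_n(\beta^0)\le \lambda_n\big(\|\beta^0\|_1-\|\hbeta\|_1\big).
\]
Writing $\Delta=\hbeta-\beta^0$ and subtracting the linear term $\dell_n(\beta^0)^T\Delta$ from both sides, the left-hand side becomes the symmetric Bregman divergence $D(\hbeta,\beta^0)=\ell_n(\hbeta)-\ell_n(\beta^0)-\dell_n(\beta^0)^T\Delta$, which is nonnegative by convexity of $\ell_n$. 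This yields $D(\hbeta,\beta^0)\le \|\dell_n(\beta^0)\|_\infty\|\Delta\|_1+\lambda_n(\|\Delta_{\calS}\|_1-\|\Delta_{\calS^c}\|_1)$, where $\calS=\{j:\beta^0_j\neq 0\}$.

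Second, I would control the score term. By Lemma \ref{lemma:max_score}, choosing $t\asymp\sqrt{\log(p)/n}$ gives $\|\dell_n(\beta^0)\|_\infty\le \lambda_n/2$ with probability tending to one when $\lambda_n\asymp\sqrt{\log(p)/n}$. On this event the previous bound together with $D\ge 0$ forces the usual cone condition $\|\Delta_{\calS^c}\|_1\le 3\|\Delta_{\calS}\|_1$, so that $\|\Delta\|_1\le 4\|\Delta_{\calS}\|_1\le 4\sqrt{s_0}\|\Delta_{\calS}\|_2$ and, moreover, $D(\hbeta,\beta^0)\le \tfrac{3}{2}\lambda_n\|\Delta_{\calS}\|_1$.

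Third, I would convert the Bregman divergence into the empirical prediction norm. Since the loss is not quadratic, I would write $D(\hbeta,\beta^0)=\int_0^1(1-u)\,\Delta^T\ddell_n(\beta^0+u\Delta)\Delta\,du$ and use Assumptions 1--3 (bounded covariates, bounded linear predictor, and positive at-risk probability up to $\tau$) to show that the integrand is uniformly comparable, up to a positive constant, to $n^{-1}\sum_{i}|X_i^T\Delta|^2$ over the segment; this is the self-concordance/bounded-curvature feature of the Cox loss. Combining this with a restricted-eigenvalue bound for $n^{-1}\sum_i|X_i^T\Delta|^2$ on the cone --- which follows from Assumption 5 together with the concentration $\|\hSigma-\Sigma_{\beta^0}\|_\infty=\OP(\sqrt{\log(p)/n})$ of the type established in Lemma \ref{lemma:mom} --- gives $D(\hbeta,\beta^0)\gtrsim \kappa\|\Delta_{\calS}\|_2^2$. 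Chaining the upper bound $D\le \tfrac{3}{2}\lambda_n\sqrt{s_0}\|\Delta_{\calS}\|_2$ against this lower bound yields $\|\Delta_{\calS}\|_2=\OP(\sqrt{s_0}\lambda_n)$, whence $\|\Delta\|_1=\OP(s_0\lambda_n)$ and $n^{-1}\sum_i|X_i^T\Delta|^2\lesssim D\lesssim \lambda_n\sqrt{s_0}\|\Delta_{\calS}\|_2=\OP(s_0\lambda_n^2)$.

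The main obstacle is the third step: establishing the restricted-eigenvalue/compatibility condition and the curvature lower bound for the \emph{random, $\beta$-dependent} Cox information. Two features make this delicate. First, $\hSigma$ and $\ddell_n(\beta)$ are not sums of independent and identically distributed terms, because the weighted averages $\heta_n(Y_i;\beta)$ couple all subjects through the risk sets; controlling their deviation from $\Sigma_{\beta^0}$ therefore requires the martingale/empirical-process machinery underlying Lemmas \ref{lemma:mom} and \ref{lemma:max_score} rather than elementary i.i.d.\ concentration. Second, because the loss is nonquadratic, the curvature must be bounded below \emph{uniformly} over a neighborhood of $\beta^0$; here Assumptions 1--3 are exactly what keep the exponential tilting weights and the risk-set denominator bounded away from zero, so that the local Hessian cannot degenerate along the estimation path. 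This is precisely the content established, in a non-asymptotic form, by Theorem 1 of \citet{kong2014non}, which the statement invokes directly.
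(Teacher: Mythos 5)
Your proposal is correct and ends up in the same place as the paper: the paper gives no proof of this lemma at all, simply noting that it is a direct consequence of Theorem 1 of \citet{kong2014non}, and your sketch is an accurate outline of the standard basic-inequality/cone/compatibility/curvature argument that underlies that theorem, with the genuinely delicate points (the non-i.i.d.\ risk-set averages and the localization needed before the curvature comparison can be applied) correctly identified as the content supplied by \citet{kong2014non}. No gap relative to the paper, which simply cites that result.
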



\begin{lemma} \label{supp:lemma:const_sol}
	Under Assumptions 1--5, if $\lambda_n \asymp \sqrt{\log(p)/n}$, with probability going to 1, we have  $\| \Thetabeta \widehat{\Sigma} - I_p \|_{\infty} \le \gamma_n$, for $\gamma_n \asymp \| \Thetabeta \|_{1,1} s_0 \lambda_n$. 
\end{lemma}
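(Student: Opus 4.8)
The plan is to show that the true precision matrix $\Thetabeta$ is feasible for the quadratic program \eqref{eq:qp} with probability tending to one, which is exactly the assertion $\| \Thetabeta \hSigma - I_p \|_{\infty} \le \gamma_n$. Since $\Thetabeta = \Sigma_{\beta^0}^{-1}$ gives $\Thetabeta \Sigma_{\beta^0} = I_p$, I would first write
\[
\Thetabeta \hSigma - I_p = \Thetabeta (\hSigma - \Sigma_{\beta^0}),
\]
and then use the elementwise submultiplicative bound $\| AD \|_{\infty} \le \| A \|_{\infty,\infty} \| D \|_{\infty}$ together with the symmetry of $\Thetabeta$ (so that $\| \Thetabeta \|_{\infty,\infty} = \| \Thetabeta \|_{1,1}$) to obtain
\[
\| \Thetabeta \hSigma - I_p \|_{\infty} \le \| \Thetabeta \|_{1,1}\, \| \hSigma - \Sigma_{\beta^0} \|_{\infty}.
\]
The lemma then reduces to establishing the rate $\| \hSigma - \Sigma_{\beta^0} \|_{\infty} = \OP(s_0 \lambda_n)$; taking the proportionality constant in $\gamma_n \asymp \| \Thetabeta \|_{1,1} s_0 \lambda_n$ sufficiently large forces the feasibility event to hold with probability going to one.

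To control $\| \hSigma - \Sigma_{\beta^0} \|_{\infty}$, I would insert the intermediate matrices $\hSigma(\beta^0) = n^{-1}\sum_i \delta_i \{ X_i - \heta_n(Y_i;\beta^0)\}^{\otimes 2}$ and $\widetilde{\Sigma} = n^{-1}\sum_i \delta_i \{ X_i - \eta_0(Y_i;\beta^0)\}^{\otimes 2}$ and split
\[
\hSigma - \Sigma_{\beta^0} = \big(\hSigma(\hbeta) - \hSigma(\beta^0)\big) + \big(\hSigma(\beta^0) - \widetilde{\Sigma}\big) + \big(\widetilde{\Sigma} - \Sigma_{\beta^0}\big).
\]
For the first piece, note that under Assumption 1 the map $\beta \mapsto \heta_n(t;\beta)$ is Lipschitz: its derivative $\hmu_2/\hmu_0 - (\hmu_1/\hmu_0)^{\otimes 2}$ is a conditional covariance whose entries are bounded by $2K^2$ for every $\beta$ and $t$, so $\sup_{t}\| \heta_n(t;\hbeta) - \heta_n(t;\beta^0)\|_{\infty} \le 2K^2 \|\hbeta - \beta^0\|_1$. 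Using the identity $a^{\otimes 2}-b^{\otimes 2} = a(a-b)^T + (a-b)b^T$ with $\|a\|_\infty,\|b\|_\infty \le 2K$, and the lasso rate $\|\hbeta-\beta^0\|_1 = \OP(s_0\lambda_n)$ from Lemma \ref{supp:lemma:lasso}, gives $\| \hSigma(\hbeta)-\hSigma(\beta^0)\|_{\infty} = \OP(s_0\lambda_n)$. For the second piece, the same algebraic identity combined with $\sup_t \|\heta_n(t;\beta^0)-\eta_0(t;\beta^0)\|_{\infty} = \OP(\sqrt{\log(p)/n})$ from Lemma \ref{supp:lemma:mom} yields a rate $\OP(\sqrt{\log(p)/n})$. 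The third piece is a genuine average of i.i.d.\ mean-zero matrices with entries bounded by $4K^2$, so an entrywise Hoeffding inequality with a union bound over the $p^2$ coordinates gives $\OP(\sqrt{\log(p)/n})$. Since $\lambda_n \asymp \sqrt{\log(p)/n}$ and $s_0 \ge 1$, the first piece dominates and $\| \hSigma - \Sigma_{\beta^0}\|_{\infty} = \OP(s_0\lambda_n)$, as needed.

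The main obstacle is the first two pieces, namely that $\hSigma$ is \emph{not} a sum of i.i.d.\ terms, because $\heta_n(Y_i;\hbeta)$ couples all subjects both through the plug-in estimator $\hbeta$ and through the empirical weights inside $\hmu_r$. The device that makes this tractable is to peel off the dependence in two stages: first replace $\hbeta$ by $\beta^0$ using the uniform (in $\beta$ and $t$) boundedness of the derivative of $\heta_n$ and the $\ell_1$ lasso rate, and then replace the empirical $\heta_n(\cdot;\beta^0)$ by its population limit $\eta_0(\cdot;\beta^0)$ using the uniform empirical-process bound of Lemma \ref{supp:lemma:mom}, leaving only an i.i.d.\ residual for concentration. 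Care is needed to keep all elementwise bounds uniform in $i$ and in $t \in [0,\tau]$ so that the $\delta_i$-weighted averaging preserves the rates, which is where Assumption 3 (bounded follow-up) enters.
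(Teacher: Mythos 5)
Your proposal is correct and follows essentially the same route as the paper: reduce to $\| \Thetabeta \|_{1,1} \| \hSigma - \Sigma_{\beta^0} \|_{\infty}$, then control $\| \hSigma - \Sigma_{\beta^0} \|_{\infty}$ by peeling off $\hbeta \to \beta^0$ via the Lipschitz property of $\heta_n$ and the lasso $\ell_1$ rate, then $\heta_n \to \eta_0$ via the uniform bound of Lemma \ref{supp:lemma:mom}, and finally applying Hoeffding with a union bound to the remaining i.i.d.\ average. The only (harmless) deviation is that you bound the Jacobian $\partial \heta_n / \partial \beta^T$ directly as a weighted covariance of bounded covariates, whereas the paper bounds it by first lower-bounding $\hmu_0$ via Glivenko--Cantelli and Assumption 3; your version is a slight simplification of the same step.
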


Lemma \ref{supp:lemma:const_sol} shows that, unlike  linear  models with the tuning parameter in the constraint  taking the order of $\sqrt{\log(p) / n}$, the Cox model requires a potentially larger $\gamma_n$ for the feasibility of $\Thetabeta$ that depends on $\| \Theta_{\beta^0} \|_{1,1}$, as the information matrix involves the regression coefficients.

\begin{proof}[\textbf{Proof of Lemma \ref{supp:lemma:const_sol}}]
	Write $A_n = \displaystyle \frac{1}{n} \sum_{i=1}^n \int_0^{\tau}  \left\{  X_i - \eta_0(t; \beta^0)  \right\}^{\otimes 2} dN_i(t) - \Sigma_{\beta^0}$.
	\begin{align*}
	\| \widehat{\Sigma} - \Sigma_{\beta^0} \|_{\infty} & \le \left\| \displaystyle \frac{1}{n} \sum_{i=1}^n \int_{0}^{\tau} \left[  \left\{ X_i - \heta_n(t; \hbeta) \right\}^{\otimes 2} - \left\{  X_i - \eta_0(t; \beta^0)  \right\}^{\otimes 2}   \right] dN_i(t)   \right\|_{\infty}  \\
	& \quad + \left\|  \displaystyle \frac{1}{n} \sum_{i=1}^n \int_0^{\tau}  \left\{  X_i - \eta_0(t; \beta^0)  \right\}^{\otimes 2} dN_i(t) - \Sigma_{\beta^0}  \right\|_{\infty} \\
	& \le \left\| \displaystyle \frac{1}{n} \sum_{i=1}^n \int_{0}^{\tau}  \left\{ X_i - \heta_n(t; \hbeta) \right\}  \left\{ \heta_n(t; \hbeta) - \eta_0(t; \beta^0) \right\}^T  dN_i(t)   \right\|_{\infty}  \\ 
	& \quad +  \left\| \displaystyle \frac{1}{n} \sum_{i=1}^n \int_{0}^{\tau}  \left\{  \heta_n(t; \hbeta) - \eta_0(t; \beta^0) \right\}  \left\{ X_i - \eta_0(t; \beta^0)  \right\}^T  dN_i(t)   \right\|_{\infty}    + \left\|  A_n \right\|_{\infty}.
	\end{align*}
	Note that for all $t \in [0, \tau]$, $\| X_i - \heta_n(t; \hat{\beta}) \|_{\infty} \le 2K$ and $\| X_i - {\eta}_0(t; \beta^0) \|_{\infty} \le 2K$. Then 
	\begin{align}
	\| \widehat{\Sigma} - \Sigma_{\beta^0} \|_{\infty} & \le \displaystyle \frac{4K}{n} \sum_{i=1}^n \int_0^{\tau} \| \heta_n(t; \hat{\beta}) - \eta_0(t; \beta^0) \|_{\infty} dN_i(t) + \| A_n \|_{\infty} \nonumber \\
	& \le \displaystyle \frac{4K}{n} \sum_{i=1}^n \int_0^{\tau} \| \heta_n(t; \hat{\beta}) - \heta_n(t; \beta^0) \|_{\infty} dN_i(t) \nonumber \\
	& \quad + \displaystyle \frac{4K}{n}  \sum_{i=1}^n \int_0^{\tau} \| \heta_n(t; {\beta^0}) - \eta_0(t; \beta^0) \|_{\infty} dN_i(t) + \| A_n \|_{\infty}. 
	\label{eq:diff_sigma}
	\end{align}
	By the mean value theorem, for the $j$th component in $\heta_n$ (denoted by $\heta_{nj}$), there exists some $\bar{\beta}^{(j)}$ lying inside the segments connecting $\hbeta$ and $\beta^0$ such that
	\[
	\heta_{nj}(t; \hbeta) = \heta_{nj}(t; \beta^0) + \left[  \displaystyle \left. \frac{\partial \heta_{nj}(t; \beta)}{\partial \beta} \right|_{\beta = \bar{\beta}^{(j)}}  \right]^T (\hbeta - \beta^0).
	\]
	Consider $\beta$ in a neighborhood of $\beta^0$, i.e. when $\| \beta - \beta^0 \|_1 \le \delta^{\prime}$ for some $\delta^{\prime} > 0$, $e^{X_i^T \beta} \le e^{|X_i^T \beta|} \le e^{|X_i^T \beta^0| + K \delta^{\prime}} \le e^{K_1 + K \delta^{\prime}}$, and $e^{X_i^T \beta} \ge e^{-|X_i^T \beta|} \ge e^{-K_1 - K \delta^{\prime}}$. Since $\{ 1(Y \ge t): t \in [0, \tau]  \}$ is $\PP$-Glivenko-Cantelli, $\sup_{t \in [0, \tau]} | \frac{1}{n} \sum_{i=1}^n 1(Y \ge t) - \PP(Y \ge t) | \overset{a.s.}{\rightarrow} 0$, and then uniformly for $t \in [0, \tau]$ and $\beta \in \{ \beta: \|\beta - \beta^0 \|_1 \le \delta^{\prime} \}$,
	\[
	\hmu_0(t; \beta) \ge \frac{1}{n} \sum_{i=1}^n 1(Y_i \ge t) e^{-K_1 - K \delta^{\prime}} \overset{a.s.}{\rightarrow} \PP(Y \ge t) e^{-K_1 - K \delta^{\prime}} \ge \frac{\pi_0}{2} e^{-K_1 - K \delta^{\prime}}.
	\]
	In this case, uniformly for $t\in [0, \tau]$ and $\beta \in \{ \beta: \|\beta - \beta^0 \|_1 \le \delta^{\prime} \}$,
	\begin{align*}
	\left\| \displaystyle \frac{\partial \heta_n(t; \beta)}{\partial \beta^T} \right\|_{\infty} & = \left\| \displaystyle \frac{\hmu_2(t; \beta) \hmu_0(t; \beta) - \hmu_1(t; \beta) \hmu_1(t; \beta)^T}{\hmu_0^2(t; \beta)} \right\|_{\infty} \\
	& ~ \le_{a.s.} \left( \frac{\pi_0}{2} e^{-K_1-K\delta^{\prime}} \right)^{-2} \left\{ e^{K_1 + K\delta^{\prime}}K^2 \cdot e^{K_1 + K\delta^{\prime}} + e^{2(K_1+K\delta^{\prime})} K^2 \right\} \\
	& ~ = \frac{8}{\pi_0^2} e^{4(K_1+K\delta^{\prime})} K^2 < \infty,
	\end{align*}
	i.e. $\left\| \displaystyle \frac{\partial \heta_n(t; \beta)}{\partial \beta^T} \right\|_{\infty} $ is uniformly bounded almost surely. When $s_0 \lambda_n \rightarrow 0$, we have $\| \heta_{n}(t; \hbeta) - \heta_{n}(t; \beta^0) \|_{\infty} \le \OP(\| \hbeta - \beta^0 \|_1) = \OP(s_0 \lambda_n)$ and the first term in (\ref{eq:diff_sigma}) is $\frac{4K}{n} \sum_{i=1}^n \int_0^{\tau} \| \heta_n(t; \hbeta) - \heta_n(t; \beta^0) \|_{\infty} dN_i(t) = \OP(s_0 \lambda_n)$. 
	
	For the second term in (\ref{eq:diff_sigma}), we use an argument from Lemma \ref{supp:lemma:mom} that $\sup_{t \in [0, \tau]} \| \heta_n(t; {\beta^0}) - \eta_0(t; \beta^0) \|_{\infty} =  \OP(\sqrt{\log(p)/n})$ and then have
	\begin{align*}
	& \displaystyle \frac{4K}{n}  \sum_{i=1}^n \int_0^{\tau} \| \heta_n(t; {\beta^0}) - \eta_0(t; \beta^0) \|_{\infty} dN_i(t) \\
	\le & \frac{4K}{n}  \sum_{i=1}^n \int_0^{\tau}  \sup_{t \in [0, \tau]} \| \heta_n(t; {\beta^0}) - \eta_0(t; \beta^0) \|_{\infty} dN_i(t) \\
	= & \OP(\sqrt{\log(p)/n}).
	\end{align*}
	For the last term $A_n$, by Hoeffding's concentration inequality, we have for every $t > 0$ and $j, k = 1, \cdots, p$,
	\[
	\PP \left(  |A_n(j,k)| \ge t  \right) \le 2 \exp\{ - n t^2 / C^{\prime} \},
	\]
	where $C^{\prime}$ is a constant only depending on $K^4$. Since $A_n$ is a symmetric matrix, 
	\begin{align*}
	\PP \left(  \| A_n \|_{\infty} \ge t  \right) & = \PP \left( \bigcup_{1 \le j \le p, j \le k \le p} | A_n(j,k) | \ge t  \right) \\
	& \le \sum_{j=1}^p \sum_{k=j}^p \PP \left(  |A_n(j,k)| \ge t  \right)  \\
	& \le p(p+1) \exp\{ - n t^2 / C^{\prime} \}.
	\end{align*}
	So $\| A_n \|_{\infty} = \OP \left( \sqrt{\log(p) /n } \right)$. Combining the three terms in (\ref{eq:diff_sigma}), we have $\| \widehat{\Sigma} - \Sigma_{\beta^0} \|_{\infty} \le \OP(s_0 \lambda_n + \sqrt{\log(p) / n})$.
	Finally, we conclude that
	\begin{align*}
	\| \Thetabeta \widehat{\Sigma} - I_p \|_{\infty} & \le \| \Thetabeta \|_{1,1} \| \widehat{\Sigma} - \Sigma_{\beta^0} \|_{\infty} \\
	& = \OP \left(  \| \Thetabeta \|_{1,1} s_0 \lambda_n +  \| \Thetabeta \|_{1,1} \sqrt{\log(p)/n}      \right).
	\end{align*}  
	
\end{proof}


\begin{lemma} \label{supp:lemma:diff_theta}
	Assume $\limsup_{n \rightarrow \infty} p \gamma_n \le 1 - \epsilon^{\prime}$ for some $\epsilon^{\prime} \in (0,1)$. Then, under the assumptions in Lemma \ref{supp:lemma:const_sol}, $\|  \widehat{\Theta} - \Theta_{\beta^0} \|_{\infty} = \OP(\gamma_n \| \Theta_{\beta^0} \|_{1,1})$.
\end{lemma}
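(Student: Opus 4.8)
The plan is to reduce the element-wise matrix bound to a uniform row-by-row estimate and then to close a self-referential inequality using the feasibility of $\Thetabeta$ from Lemma \ref{supp:lemma:const_sol}. Writing $\widehat m^{(j)}$ for a solution of the $j$-th quadratic program \eqref{eq:qp}, the rows of $\hTheta$ are the vectors $(\widehat m^{(j)})^T$, while by symmetry of $\Thetabeta$ its $j$-th row is $\theta_j := \Thetabeta e_j$. Hence $\| \hTheta - \Thetabeta \|_{\infty} = \max_{1 \le j \le p} \| \widehat m^{(j)} - \theta_j \|_{\infty}$, and it suffices to control $\| v \|_\infty$ for $v := \widehat m^{(j)} - \theta_j$ uniformly in $j$. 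First I would record what Lemma \ref{supp:lemma:const_sol} provides: on an event of probability tending to one, $\| \Thetabeta \hSigma - I_p \|_\infty \le \gamma_n$, which (using symmetry of $\hSigma$ and $\Thetabeta$ and transpose-invariance of the max norm) means each $\theta_j$ is feasible, i.e. $\| \hSigma \theta_j - e_j \|_\infty \le \gamma_n$. Since $\widehat m^{(j)}$ is itself feasible, the triangle inequality gives $\| \hSigma v \|_\infty \le 2 \gamma_n$. The same proof also yields $\| \Thetabeta \|_{1,1} \| \hSigma - \Sigma_{\beta^0} \|_\infty \le \gamma_n$ on this event; notably only feasibility, and not optimality, of the quadratic program enters the argument.

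The key identity is $v = \Thetabeta \hSigma v + \Thetabeta ( \Sigma_{\beta^0} - \hSigma ) v$, obtained from $\Thetabeta \Sigma_{\beta^0} = I_p$. Estimating in $\ell_\infty$ and using $\| \Thetabeta \|_{\infty,\infty} = \| \Thetabeta \|_{1,1}$ (symmetry), the first term is at most $2 \gamma_n \| \Thetabeta \|_{1,1}$, but the second is only bounded by $\| \Thetabeta \|_{1,1} \| \Sigma_{\beta^0} - \hSigma \|_\infty \| v \|_1 \le \gamma_n \| v \|_1$, which couples $\| v \|_\infty$ to $\| v \|_1$. This coupling is the main obstacle: the bound on $\| v \|_\infty$ cannot be closed directly.

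The remedy I would use is to control $\| v \|_1$ first via the same identity. Bounding in $\ell_1$ gives $\| v \|_1 \le 2 p \gamma_n \| \Thetabeta \|_{1,1} + \| \Thetabeta \|_{1,1} \| \Sigma_{\beta^0} - \hSigma \|_{1,1} \| v \|_1$, where I use $\| \hSigma v \|_1 \le p \| \hSigma v \|_\infty \le 2 p \gamma_n$. Since $\| \Sigma_{\beta^0} - \hSigma \|_{1,1} \le p \| \Sigma_{\beta^0} - \hSigma \|_\infty$, the coefficient of $\| v \|_1$ on the right is at most $p \gamma_n$, which by hypothesis satisfies $\limsup_{n} p \gamma_n \le 1 - \epsilon^{\prime}$. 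This is a genuine contraction, so the $\| v \|_1$ term can be absorbed to the left-hand side, yielding $\| v \|_1 \le 2 (\epsilon^{\prime})^{-1} p \gamma_n \| \Thetabeta \|_{1,1}$. This contraction step is precisely where the assumption $\limsup_n p \gamma_n \le 1 - \epsilon^{\prime}$ is indispensable.

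Finally I would substitute this $\ell_1$ bound back into the $\ell_\infty$ inequality: $\| v \|_\infty \le 2 \gamma_n \| \Thetabeta \|_{1,1} + \gamma_n \| v \|_1 \le 2 \gamma_n \| \Thetabeta \|_{1,1} \bigl( 1 + (\epsilon^{\prime})^{-1} p \gamma_n \bigr)$, and since $p \gamma_n \le 1 - \epsilon^{\prime} < 1$ the parenthetical factor is bounded by the constant $1 + (\epsilon^{\prime})^{-1}$. Because the controlling event of Lemma \ref{supp:lemma:const_sol} does not depend on $j$, taking the maximum over $j = 1, \ldots, p$ gives $\| \hTheta - \Thetabeta \|_\infty \le 2 (\epsilon^{\prime})^{-1} \gamma_n \| \Thetabeta \|_{1,1}$ with probability tending to one, i.e. $\| \hTheta - \Thetabeta \|_\infty = \OP( \gamma_n \| \Thetabeta \|_{1,1} )$. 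The only routine bookkeeping is verifying that each matrix--vector inequality invokes the matching induced norm together with the symmetry identity $\| \Thetabeta \|_{1,1} = \| \Thetabeta \|_{\infty,\infty}$.
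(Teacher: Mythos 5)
Your proof is correct, and it reaches the same conclusion (even the same constant $2/\epsilon^{\prime}$) by a somewhat different route. The paper works directly at the matrix level with the decomposition $\widehat{\Theta} - \Thetabeta = \widehat{\Theta}(I_p - \hSigma\Thetabeta) + (\widehat{\Theta}\hSigma - I_p)\Thetabeta$, bounds the two pieces by $\gamma_n\|\widehat{\Theta}\|_{\infty,\infty}$ and $\gamma_n\|\Thetabeta\|_{1,1}$ using exactly the two feasibility facts you also use (the QP constraint and Lemma \ref{supp:lemma:const_sol}), and then closes the self-referential inequality in $\|\widehat{\Theta}-\Thetabeta\|_\infty$ itself via $\|\widehat{\Theta}\|_{\infty,\infty} \le p\|\widehat{\Theta}-\Thetabeta\|_\infty + \|\Thetabeta\|_{1,1}$ and the contraction $p\gamma_n \le 1-\epsilon^{\prime}$. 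You instead argue row by row with the identity $v = \Thetabeta\hSigma v + \Thetabeta(\Sigma_{\beta^0}-\hSigma)v$, run the contraction on $\|v\|_1$ first, and then feed that back into the $\|v\|_\infty$ bound. Both arguments hinge on the same two ingredients and the same $p\gamma_n$ contraction, so the difference is mainly in bookkeeping; what the paper's version buys is that it needs only the \emph{statement} of Lemma \ref{supp:lemma:const_sol}, whereas your argument additionally invokes the intermediate bound $\|\Thetabeta\|_{1,1}\|\hSigma-\Sigma_{\beta^0}\|_\infty \le \gamma_n$, which holds but must be extracted from inside the proof of that lemma (where $\|\Thetabeta\hSigma - I_p\|_\infty$ is bounded by precisely this quantity). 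What your version buys is that it isolates where feasibility versus optimality of the quadratic program enters (only feasibility, as you note) and makes the $\ell_1$--$\ell_\infty$ coupling explicit. One small point of care: the inequality $\|\Thetabeta\|_{1,1}\|\hSigma-\Sigma_{\beta^0}\|_\infty \le \gamma_n$ holds with probability tending to one only up to the constant implicit in $\gamma_n \asymp \|\Thetabeta\|_{1,1} s_0\lambda_n$, so strictly speaking your contraction coefficient is $C p\gamma_n$ for some constant $C$; this is harmless for the $\OP$ conclusion but worth flagging, since the paper's version avoids it by keeping that constant inside the definition of $\gamma_n$.
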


\begin{proof}[\textbf{Proof of Lemma \ref{supp:lemma:diff_theta}}]
	Note that $\widehat{\Theta} - \Thetabeta = \widehat{\Theta} (I_p - \widehat{\Sigma} \Thetabeta) + (\widehat{\Theta} \widehat{\Sigma} - I_p) \Thetabeta$, then on the event $\{ \| \widehat{\Sigma} \Thetabeta - I_p \|_{\infty} \le \gamma_n \}$, we have
	\begin{align*}
	\| \widehat{\Theta} - \Thetabeta \|_{\infty} & \le \| \widehat{\Theta} \|_{\infty, \infty} \| I_p - \widehat{\Sigma} \Thetabeta \|_{\infty} + \| \widehat{\Theta}  \widehat{\Sigma} - I_p \|_{\infty} \| \Thetabeta \|_{1,1} \nonumber  \\
	& \le \gamma_n \| \widehat{\Theta} \|_{\infty, \infty} + \gamma_n \| \Thetabeta \|_{1,1}. 
	\end{align*}
	Since $\| \widehat{\Theta} \|_{\infty, \infty}  \le \| \widehat{\Theta} - \Thetabeta \|_{\infty, \infty} + \| \Thetabeta \|_{\infty, \infty} \le p \| \widehat{\Theta} - \Thetabeta \|_{\infty} + \| \Thetabeta \|_{1,1}$,
	we can obtain 
	\[
	\| \widehat{\Theta} - \Thetabeta \|_{\infty} \le  \gamma_n \left( p  \| \widehat{\Theta} - \Thetabeta \|_{\infty}  + \|\Thetabeta \|_{1,1} \right) + \gamma_n \|\Thetabeta \|_{1,1}.
	\]
	When $\limsup_{n \rightarrow \infty} \gamma_n p \le 1 - \epsilon^{\prime} < 1$, then for $n$ large enough,
	\[
	\| \widehat{\Theta} - \Thetabeta \|_{\infty} \le 2\gamma_n \| \Thetabeta \|_{1,1} / (1 - \gamma_n p) \asymp \gamma_n \| \Thetabeta \|_{1,1}.
	\]
	Therefore, by Lemma \ref{supp:lemma:const_sol}, $\| \widehat{\Theta} - \Thetabeta \|_{\infty} = \OP(\gamma_n \| \Thetabeta \|_{1,1})$. 
\end{proof}


\begin{lemma} \label{supp:lemma:max_score}
	Under Assumptions 1--3 and 5, for each $t > 0$, 
	\[
	\PP \left( \|  \dot{\ell}_n(\beta^0) \|_{\infty} > t  \right) \le 2 p e^{-nt^2 / (8 K^2)}.
	\]
\end{lemma}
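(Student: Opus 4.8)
The plan is to exploit the martingale structure of the Cox score at the truth rather than treating it as a sum of independent terms, since the weighted average $\heta_n$ couples all subjects. First I would rewrite the $j$th score component in counting-process form,
\[
-\dell_{nj}(\beta^0) = \frac{1}{n}\sum_{i=1}^n \int_0^\tau \bigl\{ X_i^{(j)} - \heta_{nj}(t;\beta^0) \bigr\}\, dN_i(t),
\]
and observe that the compensator contribution cancels pathwise: replacing $dN_i(t)$ by its compensator $1(Y_i\ge t)e^{X_i^T\beta^0}\,dH_0(t)$ and summing over $i$ produces $\hmu_{1j}(t;\beta^0) - \heta_{nj}(t;\beta^0)\hmu_0(t;\beta^0) \equiv 0$ by the very definition $\heta_n = \hmu_1/\hmu_0$. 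Hence $-\dell_{nj}(\beta^0) = U_j(\tau)$, where $U_j(t) = n^{-1}\sum_i \int_0^t \{X_i^{(j)} - \heta_{nj}(s;\beta^0)\}\,dM_i(s)$ is a mean-zero local martingale in $t$ on $[0,\tau]$, using that $\heta_{nj}(\cdot;\beta^0)$ is predictable with respect to $\mathcal{F}$ (as noted in the Appendix) and that $\tau<\infty$ by Assumption 3.

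Next I would control the jumps and the quadratic variation, uniformly and \emph{without} any factor of $H_0(\tau)$. By Assumption 1, $|X_i^{(j)}|\le K$, and since $\heta_{nj}(s;\beta^0)$ is a convex combination of the $X_l^{(j)}$ over the risk set with the strictly positive, bounded weights $e^{X_l^T\beta^0}$ (well defined under Assumption 2), we get $|X_i^{(j)} - \heta_{nj}(s;\beta^0)|\le 2K$. Thus each jump of $U_j$ has magnitude at most $2K/n$, and the optional quadratic variation satisfies, almost surely,
\[
[U_j](\tau) = \frac{1}{n^2}\sum_{i=1}^n \bigl\{X_i^{(j)} - \heta_{nj}(Y_i;\beta^0)\bigr\}^2 \delta_i \;\le\; \frac{(2K)^2}{n^2}\sum_{i=1}^n \delta_i \;\le\; \frac{4K^2}{n}.
\]
This a.s.\ bound is the crucial point: it equals $(2K)^2/n$ and carries no dependence on $H_0(\tau)$, which is exactly what delivers the clean constant $8K^2$. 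Applying an exponential inequality for a martingale with jumps bounded by $2K/n$ and quadratic variation bounded a.s.\ by $4K^2/n$ (the same squared-increment budget $\sum_i (2K/n)^2 = 4K^2/n$ that produces the Azuma--Hoeffding constant) yields the one-sided tail $\PP(U_j(\tau) > t) \le e^{-nt^2/(8K^2)}$; the two-sided version contributes the factor $2$, and a union bound over $j=1,\dots,p$ contributes the factor $p$, giving the claimed $2p\,e^{-nt^2/(8K^2)}$.

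The main obstacle is obtaining exactly this constant rather than a looser one inflated by $e^{K_1}H_0(\tau)$. Treating the score as an independent sum by centering at the deterministic $\eta_0$ (as is done for remainder terms elsewhere in the Appendix) forces the martingale-integral summands to include the compensator piece, whose magnitude is of order $K\,e^{K_1}H_0(\tau)$, so a naive Hoeffding argument cannot reach $8K^2$. The sharp constant instead hinges on using the optional (realized) quadratic variation of the genuinely martingale form, controlled purely by the number of observed events and the $2K$ jump bound. I would therefore take care to invoke a martingale exponential inequality phrased in terms of an almost-sure bound on the quadratic variation together with bounded jumps, and to verify its hypotheses hold pathwise here; this, combined with the pathwise cancellation of the compensator in the first step, is what makes the argument close.
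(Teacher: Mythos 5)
Your proof is correct and is essentially the argument behind the paper's own (one-line) proof, which simply notes the bound $\| X_i - \heta_n(t;\beta^0) \|_{\infty} \le 2K$ and invokes Lemma 3.3(ii) of Huang et al.\ (2013); that cited lemma is established exactly as you describe, via the martingale representation of the score with pathwise cancellation of the compensator, jumps bounded by $2K/n$, a Hoeffding-type bound over at most $n$ event times, and a union bound over $j = 1, \dots, p$. The one point to make precise is the final exponential inequality: it must be the Azuma--Hoeffding form driven by almost-sure jump bounds (most cleanly obtained by viewing the score as a discrete-time martingale indexed by the ordered event times, padded to $n$ increments each of magnitude at most $2K/n$, using that conditionally on the risk set the failing subject's covariate has mean $\heta_{nj}$), since a Freedman/Bernstein inequality phrased in terms of the predictable variation would reintroduce the $e^{K_1}H_0(\tau)$ factor you are rightly trying to avoid.
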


\begin{proof}[\textbf{Proof of Lemma \ref{supp:lemma:max_score}}]
	Noting that $\| X_i - \heta_n(t; \beta^0) \|_{\infty} \le 2K$ uniformly for all $i$, then Lemma \ref{supp:lemma:max_score} is a direct result of Lemma 3.3(ii) in \citet{huang2013oracle}. 
\end{proof}

\section{Boston Lung Cancer Study Cohort data}

Table \ref{tab:surv_pop} shows the patient characteristics for the subset of the Boston Lung Cancer Study Cohort data analyzed in Section 5. 

\begin{table}[ht!]
	\centering
	\caption{Characteristics of $n=561$ patients in the Boston Lung Cancer Study for survival analysis}
	\small
	\begin{threeparttable}
		\begin{tabular}{lll}
			\hline
			Variable & Category / Unit  & Count (\%) / Mean (SD)  \\ \hline
			Age & Years old &  60.0 (10.9) \\
			Race & Caucasian & 528 (94.1\%)  \\ 
			& Others & 33 (5.9\%)   \\
			Education & No high school & 79 (14.1\%)  \\
			& High school & 141 (25.1\%)  \\
			& At least 1-2 years of college & 341 (60.8\%)  \\		
			Gender & Male & 215 (38.3\%)   \\
			& Female & 346 (61.7\%)  \\
			Smoker & Current or recently quit & 508 (90.6\%)  \\
			& Never & 53 (9.4\%)  \\
			Histology & Adenocarcinoma & 360 (64.2\%)  \\
			& Squamous cell carcinoma & 115 (20.5\%)  \\
			& Large cell carcinoma & 45 (8.0\%)  \\
			& Unspecified & 41 (7.3\%) \\
			Stage\tnote{a} & Early & 243 (43.3\%)  \\
			& Late & 318 (56.7\%)  \\ 
			Surgery & No & 177 (31.6\%)   \\
			& Yes & 361 (64.3\%)  \\
			Chemotherapy & No & 300 (53.5\%)  \\
			& Yes & 238 (42.4\%)  \\
			Radiation & No & 332 (59.2\%)   \\
			& Yes & 206 (36.7\%)  \\
			Treatment record & Missing\tnote{b} & 23 (4.1\%)  \\ \hline
		\end{tabular}
		\begin{tablenotes}
			\footnotesize
			\item [a] Stages I and II classified as early stage, and stages III and IV as late stage.
			\item [b] No treatment information on surgery, chemotherapy or radiation available for these patients.
		\end{tablenotes}
	\end{threeparttable}
	\label{tab:surv_pop}
\end{table}

\end{document}